\newcommand{\aref}[1]{\hyperref[#1]{Appendix~\ref{#1}}}
\newtheorem{theorem}{Theorem}
\begin{document}

\preprint{APS/123-QED}

\title{Quantum centrality testing on directed graphs via PT-symmetric quantum walks}

\author{J. A. Izaac}
\email{josh.izaac@uwa.edu.au}
\author{J. B. Wang}
\email{jingbo.wang@uwa.edu.au}
\author{P. C. Abbott}
\affiliation{School of Physics, The University of Western Australia, Crawley WA 6009} \author{X. S. Ma}
\affiliation{School of Physics, Nanjing University, Nanjing, 210093, China}

\date{\today}

\begin{abstract}
Various quantum-walk based algorithms have been proposed to analyse and rank the centrality of graph vertices. However, issues arise when working with directed graphs --- the resulting non-Hermitian Hamiltonian leads to non-unitary dynamics, and the total probability of the quantum walker is no longer conserved. In this paper, we discuss a method for simulating directed graphs using PT-symmetric quantum walks, allowing probability conserving non-unitary evolution. This method is equivalent to mapping the directed graph to an undirected, yet weighted, complete graph over the same vertex set, and can be extended to cover interdependent networks of directed graphs. Previous work has shown centrality measures based on the CTQW provide an eigenvector-like quantum centrality; using the PT-symmetric framework, we extend these centrality algorithms to directed graphs with a significantly reduced Hilbert space compared to previous proposals. In certain cases, this centrality measure provides an advantage over classical algorithms used in network analysis, for example by breaking vertex rank degeneracy. Finally, we perform a statistical analysis over ensembles of random graphs, and show strong agreement with the classical PageRank measure on directed acyclic graphs.
\end{abstract}

\pacs{03.67.Lx, 05.40.Fb, 05.45.Mt}
\maketitle


\section{Introduction}
Quantum walks, the quantum analogue of the classical random walk, were first introduced by \citet{aharonov1993} in his seminal paper. Since then, they have become an important tool in the field of quantum computation and quantum information theory --- enabling the creation of quantum algorithms faster than their classical counterparts \cite{wang2013,berry2010,berry2011,childs2004,douglas2008,childs2003,rudinger2013,mahasinghe2015,paparo2013}, finding use in modelling complex quantum dynamical systems (such as photosynthesis and electron transport \cite{sarovar2010,engel2007,mohseni2008,rebentrost2009}), and providing a universal method of quantum computation \cite{childs2009,childs2013}. This is due, in part, to the markedly differing behaviour of the quantum walk; evolving as per the Schr\"odinger equation, the quantum walk is a time-reversible rather than a diffusive Markovian process. Coupled with inherent quantum effects such as superposition, interference and quantum correlations, quantum walks have therefore become an integral tool in linking network analysis and modelling with the problem solving potential of quantum computation.

One disadvantage, however, of the quantum walk, is the condition of unitarity due to the quantum nature of the walkers. As such, the standard quantum walk is unable to model or analyse directed network structures, without either a) resulting in non-unitary dynamics, or b) modifying the framework. This serves as a particular hindrance in extending established quantum algorithms (e.g. quantum search, centrality measures, graph isomorphism) and quantum dynamical models (such as transport of electrons or excitons) to systems with directions or biased potentials. Compare this to a classical random walk, where as long as the transition matrix remains stochastic, directed networks pose no problem. Consequently, various work-arounds have been proposed for dealing with this non-unitary behaviour, for example Szegedy quantum walks \cite{paparo2013,loke2017} and open-quantum walks \cite{sinayskiy2013,loke2017}. Unfortunately, there are numerous downsides to these approaches --- each requiring a significantly expanded Hilbert space, whilst not guaranteeing probability conservation of the system under study or muting the effect of quantum behaviour (due to environmental dephasing and loss in the open-quantum walk).

An alternative solution to this issue arrives in the form of PT-symmetry. First discovered by \citet{bender1998}, PT-symmetry started off as a simple curiosity --- the appearance of non-Hermitian Hamiltonians that exhibit a real eigenspectra, thus allowing non-unitary probability conservation via a redefinition of the Hilbert space inner product. This was attributed to parity-time symmetry of the non-Hermitian Hamiltonian, and over time was generalised to allow for non-Hermitian Hamiltonian symmetry under a combination of \textit{any} linear and non-linear operators \cite{mostafazadeh2002,mostafazadeh2002-a,bender1998,mannheim2013,li2014}. Simultaneously, PT-symmetric Hamiltonians were finding use in theoretical models of observed and predicted phenomena in condensed matter \cite{basu-mallick2000}, quantum field theory \cite{bender2004,bender2005}, and being observed and implemented in numerous optical experiments \cite{luo2013,peng2014,regensburger2012,rueter2010}; whilst older non-Hermitian studies in condensed matter \cite{dyson1956} and nuclear physics \cite{faisal1981} have been reformulated in the PT-symmetry framework \cite{jones-smith2013}. Recently, PT-symmetry has been used to model directed one-dimensional discrete-time quantum walks (DTQWs) \cite{mochizuki2016}, and considered in the case of continuous-time quantum walks (CTQWs) \cite{salimi2010}.

In this work, we present a rigorous framework for PT-symmetric CTQWs, extending the formalism of \citet{salimi2010} to ensure the initial quantum state vector is preserved. This is then broadened to the cases of multi-particle systems and interdependent networks. By expanding on the work of \cite{izaac2017}, in which the CTQW was shown to provide an eigenvector-like centrality measure for the quantum realm, we are then able to utilize the PT-symmetric CTQW to measure vertex centrality in several directed graphs examples. We show that our centrality scheme compares well to the classical PageRank algorithm \cite{brin1998} in the case of directed acyclic graphs (DAGs) and in some cases even breaks the vertex rank degeneracy characterized by the PageRank.

Unlike previous quantum centrality measures~\cite{paparo2013,sinayskiy2013,loke2017}, for a graph of $N$ vertices this scheme requires a Hilbert space of dimension $N$ (compare this to the Szegedy quantum walk based PageRank scheme, which requires a $N^2$ dimensional Hilbert space), and without the classical decoherence required for open quantum systems. Furthermore, we show that this formalism is equivalent to considering an undirected, yet weighted, complete graph with self-loops, providing a structural interpretation that may lead to simple experimental implementation.

This paper is structured as follows. In \autoref{sec:ctqw}, we detail the standard CTQW, and briefly discuss the issues with extending this to directed graphs. A brief introduction to the PT-symmetry and pseudo-Hermitian formalism is presented in \autoref{sec:pt}, tailored towards a quantum walker application. In \autoref{sec:phctqw}, we introduce our framework for PT-symmetric continuous-time quantum walks, and discuss how this can be interpreted as a mapping to a weighted and undirected CTQW on a complete graph; the framework is then extended to multi-particle systems and interdependent networks. A potential application, network centrality, is presented in \autoref{sec:centrality}, and a statistical analysis performed to verify its performance over an ensemble of randomly generated Erd\H{o}s-R\'enyi and scale-free graphs. Finally, our conclusions are provided in \autoref{sec:conc}.

\section{Continuous-time quantum walks}
\label{sec:ctqw}
Consider an arbitrary undirected graph $G(V,E)$, composed of vertices $j\in V$ and edges $(i,j)\in E$. The adjacency matrix of $G$ is then defined as
\begin{align}
	A_{ij} = \begin{cases}
    	1, & (i,j)\in E\\
        0, & (i,j)\notin E ~.
    \end{cases}
\end{align}
For a continuous-time quantum walk on graph $G$, the Hamiltonian is given by the discrete Laplacian (or graph Laplacian) of the adjacency matrix:
\begin{align}\label{eq:hamiltonian}
	H_{ij} = \left(\sum_k A_{ik}\right)\delta_{ij} -A_{ij}~.
\end{align}
To find the time-evolution of the walker, we solve the Schr\"odinger equation, assuming for simplicity atomic units ($m_e=e=\hbar=1$),
\begin{align}
i \frac{d}{dt} \ket{\psi(t)} = H\ket{\psi(t)},
\end{align}
which has the formal solution
\begin{align}
	\ket{\psi(t)} = e^{-iHt}\ket{\psi(0)}.
\end{align}
Note that the complex valued state vector $\ket{\psi(t)}=\sum_j a_j (t) \ket{j}$, where $a_j(t)=\braket{j|\psi(t)}\in\mathbb{C}$, represents the probability amplitude of the walker being found at node $j$ at time $t$, with $|a_j(t)|^2=|\braket{j|\psi(t)}|^2$ the resulting probability. Unlike discrete-time formulations of the quantum walk, in which probability amplitudes can only transition between adjacent (or `local') vertices at each time-step, the CTQW is a \textit{global} process. That is, it is possible for probability amplitude to transition to non-adjacent vertices in the graph at each infinitesimal time-step $\Delta t$.

As the adjacency matrix $A$ is real and symmetric, it is a Hermitian matrix, and it is easy to see from \autoref{eq:hamiltonian} that $H$ must also be Hermitian. It therefore follows that the time evolution operator, $U=e^{-iHt}$, is unitary ($UU^\dagger=I$), guaranteeing that the norm of $\ket{\psi(t)}$ is conserved under a continuous-time quantum walk, as required.

Let's now modify $G$ so that it is a \textit{directed} graph --- that is, the edgeset $(i,j)\in E$ is now described by an \textit{ordered} pair of vertices. As the adjacency matrix is no longer symmetric ($A_{ij}\neq A_{ji}~~ \forall i,j$), the Hamiltonian $H$ is no longer Hermitian, and thus we no longer have unitary time evolution ($UU^\dagger\neq I$). As a consequence, the norm squared of the quantum state is no longer conserved,
\begin{align}
\braket{\psi(t)|\psi(t)} = \braket{\psi(0)|U^\dagger U|\psi(0)} \neq \braket{\psi(0)|\psi(0)}
\end{align}
and may in fact grow or decay exponentially. Various modifications proposed for dealing with this non-unitary behaviour (for example, Szegedy quantum walks \cite{paparo2013,loke2017} and open-quantum walks \cite{sinayskiy2013,loke2017}) require a significantly expanded Hilbert space, resulting in considerable resource overhead in physical implementation.

\section{PT-symmetry}
\label{sec:pt}
Whilst non-Hermitian Hamiltonians with complex eigenvalues result in exponentially growing or decaying time-evolution, a wide variety of non-Hermitian Hamiltonians have been found to possess real eigenvalue spectra.
It was first noted by \citet{bender1998} that particular non-Hermitian Hamiltonians with real spectra exhibited PT-symmetry, that is,
\begin{align}
	[H,\mathcal{PT}]=0
\end{align}
where $\mathcal{P}: (\hat{x},\hat{p})\rightarrow(-\hat{x},-\hat{p})$ is the parity transformation operator, and $\mathcal{T}: (\hat{x},\hat{p})\rightarrow(\hat{x},-\hat{p})$ the  time reflection operator satisfying $\{\mathcal{T},i\}=0$ (anti-linearity) \cite{bender1998,fernandez1999}. On the basis of this observation, it was posited that invariance of a Hamiltonian under PT-transformations provides a more general condition for the reality of eigenspectra than simply Hermiticity. Immediately, research into PT-symmetric Hamiltonians found it was not so clear-cut; due to the anti-linearity of the $\mathcal{PT}$ operator, a $\mathcal{PT}$ invariant Hamiltonian may still undergo spontaneous symmetry breaking, leading to complex conjugate pairs of eigenvalues \cite{weigert2003,demorissonfaria2006}. Furthermore, although the existence of PT-symmetry is a sufficient condition for real spectra, it is not necessary. This same property can be found in Hamiltonians not exhibiting PT-symmetry --- thus failing to account for the existence of \textit{all} non-Hermitian Hamiltonians with real eigenspectra.

An alternative framework was put forward by \citet{mostafazadeh2002}. Denoted pseudo-Hermiticity, it was shown that for all diagonalizable non-Hermitian Hamiltonians exhibiting a real eigenspectra, there exists a positive semidefinite linear operator $V=\eta^\dagger\eta$ such that
\begin{align}\label{eq:pseudoHNew}
  H^{\dagger}=V H V^{-1}.
\end{align}
Additionally, it was proven that every PT-symmetric and diagonalizable Hamiltonian is pseudo-Hermitian \cite{mostafazadeh2002}. Coupled with the fact that $V=I$ corresponds to the case of Hermitian $H$, it was claimed that the pseudo-Hermiticity framework is the correct generalization of Hermiticity to non-Hermitian Hamiltonians.

Subsequent research has further explored the connections and similarities between PT-symmetry and pseudo-Hermiticity \cite{mostafazadeh2002-a}, with a flurry of papers released proclaiming the supremacy of one or the other, in the constant struggle to be seen as the more general of the two. Of particular note, it has been shown that even though the pseudo-Hermitian similarity transform is decidedly linear, pseudo-Hermiticity is a necessary and sufficient condition for $H$ to admit anti-linear symmetry \cite{mostafazadeh2002-b,solombrino2002,scolarici2003}; i.e. the condition of pseudo-Hermiticity is equivalent to the condition $[H,\Omega]=0$, where $\Omega$ is an anti-linear invertible or involutory operator. On this basis, one can conclude that any time-reversal invariant Hamiltonian belongs to the class of pseudo-Hermitian Hamiltonians, although the converse is not true, as $\Omega$ is not always guaranteed to be $\mathcal{T}$.

These two competing frameworks were finally reconciled by \citet{bender2010}, who introduced the concept of \textit{generalized} PT-symmetry; here, $\mathcal{P}$ represents \textit{any} linear operator (not just parity), and likewise $\mathcal{T}$ represents \textit{any} anti-linear operator; the chosen operators $\mathcal{P}$ and $\mathcal{T}$ need not commute. This generalized PT-symmetry condition is necessary and sufficient for reality of the characteristic equation,
\begin{align}
	|H-\lambda I|=0
\end{align}
which results in real eigenvalues if $\mathcal{PT}$ and $H$ are simultaneously diagonalizable, and complex conjugate pairs if not\footnote{This is an example of spontaneous PT-symmetry breaking. Even though a Hamiltonian may display $\mathcal{PT}$ invariance (i.e. $[H,\mathcal{PT}]=0$), the eigenstates of $H$, denoted $\ket{\phi_n}$, are not necessarily simultaneously eigenstates of $\mathcal{PT}$, due to the antilinearity of the $\mathcal{PT}$ operator. If this is the case, then the eigenspectrum is composed of complex conjugate pairs of eigenvalues, and $\mathcal{PT}\ket{\phi_n}$ provides the eigenstates of $\mathcal{PT}$.}. Thus, a Hamiltonian with a real eigenspectra necessarily displays (generalized) PT-symmetry, regardless of its diagonalizability --- providing the generalisation of Hermiticity so sought after in the original parity-time and pseudo-Hermiticity frameworks. From hereon in, use of the term `PT-symmetry' will refer to generalized PT-symmetry.

Under this new, more general, framework, pseudo-Hermiticity exists as a subset of PT-symmetry \cite{bender2010,mannheim2013}, and has been expanded to include cases where $H$ is non-diagonalizable \cite{scolarici2003} --- in such cases, it is no longer possible to satisfy the pseudo-Hermiticity similarity transform with a linear operator $V$ that is positive semidefinite --- leading to spontaneous symmetry breaking and complex conjugate pairs of eigenvalues. Moreover, \citet{bender2010} provides a criteria to determine whether a positive semi-definite $V$ exists for known PT-symmetric Hamiltonians\footnote{If $[\mathcal{C},\mathcal{PT}]=0~\forall~\mathcal{C}$ s.t. $\mathcal{C}^2=1$ and $[\mathcal{C},H]=0$, then there exists a positive semidefinite linear operator $V=\mathcal{CP}$ s.t. $VHV^{-1}=H^\dagger$}. In the following section, we will briefly outline the pseudo-Hermitian operator framework, and provide a method for determining $V=\eta^\dagger\eta$ in cases where there is no spontaneous PT-symmetry breaking.

Let $H$ be a non-Hermitian matrix. It is pseudo-Hermitian (and thus PT-symmetric), if it is related by a similarity transform to a Hermitian matrix $\tilde{H}$,
\begin{align}\label{eq:pseudoH}
  \tilde{H}=\eta H\eta^{-1},
\end{align}
where $\eta$ is frequently referred to in the literature as the \textit{pseudo-Hermitian operator} or \textit{metric}. Without loss of generality, we assume $\eta$ is an Hermitian operator ($\eta=\eta^\dagger$).  Due to the properties of a similarity transform, the eigenvalues of $H$ will be the same as $\tilde{H}$ and necessarily real. Taking the conjugate transpose of this result, we get
\begin{align}
  \tilde{H}^\dagger = \left(\eta^{-1}\right)^\dagger H^\dagger \eta^\dagger = \eta^{-1} H^\dagger \eta.
\end{align}
Since $\tilde{H}$ is Hermitian, $\eta^{-1} H^\dagger \eta=\eta H\eta^{-1}$, and thus a pseudo-Hermitian matrix must satisfy the following similarity transform with its conjugate transpose:
\begin{align}\label{eq:PHR}
	H^\dagger = \eta^2 H \eta^{-2}.
\end{align}
Rewriting this in the form $\eta^2 H = H^\dagger \eta^{2}$, note that the right-hand side is simply the Hermitian conjugate of the left-hand side. This suggests that the following redefinition of the inner product,
\begin{align}\label{eq:etaprod}
	\braket{\cdots|\cdots}_\eta := \braket{\cdots|\eta^2|\cdots},
\end{align}
should be sufficient to conserve the systems probability. Indeed, by using the Schr\"odinger equation, we see that this is in fact the case when working with pseudo-Hermitian operators:
\begin{align*}
	\frac{d}{dt}\braket{\psi(t)|\psi(t)}_\eta &= \braket{\frac{d}{dt}\psi(t)|\eta^2|\psi(t)} + \braket{\psi(t)|\eta^2|\frac{d}{dt}\psi(t)}\notag\\
    & =  \braket{\psi(t)|iH^\dagger\eta^2|\psi(t)} - \braket{\psi(t)|\eta^2 iH|\psi(t)}\notag\\
    & =  i\braket{\psi(t)|\left(H^\dagger\eta^2-\eta^2 H\right)|\psi(t)}\notag\\
    & = 0
\end{align*}

As Hermitian matrices are always diagonalisable by their unitary eigenbasis ($\tilde{H}=P\Lambda P^\dagger$ where $P^{-1}=P^\dagger$), it follows from the similarity relation \autoref{eq:pseudoH} that pseudo-Hermitian matrices must also be diagonalisable\footnote{In fact, solving this equation allows you to find an expression for $\eta$ in terms of the left eigenvectors of $H$, $\bra{\phi_j}$, and the eigenvectors of $\tilde{H}$, $\ket{\tilde{\psi}_j}$: $\eta_{ij}=\sum_k \ket{\tilde{\psi}_k}_i \bra{\phi_j}_k$}:
\begin{align*}
	H = \eta^{-1}\tilde{H}\eta = \left(\eta^{-1}P\right)\Lambda\left(P^\dagger\eta\right) = \left(P^\dagger \eta\right)^{-1}\Lambda\left(P^\dagger\eta\right).
\end{align*}
Diagonalisable matrices must admit a biorthonormal eigenbasis \cite{anthony2012},
\begin{align}
	&H\ket{\psi_j}=\lambda_j\ket{\psi_j},\\
	&H^\dagger\ket{\phi_j}=\lambda_j\ket{\phi_j}, ~~ j=1,2,\dots,n
\end{align}
where $\braket{\phi_i|\psi_j}=\delta_{ij}$ and $\lambda_j\in\mathbb{R}$ due to pseudo-Hermiticity. The completeness relation is given by
\begin{align}
	I = \sum_j \ket{\psi_j}\bra{\phi_j}.
\end{align}

By applying the pseudo-Hermiticity relation (\autoref{eq:PHR}) to the biorthonormal eigenvector equations, we can deduce a method of constructing $\eta$. For instance,
\begin{align}
	H^\dagger \ket{\phi_j} = \eta^2H\eta^{-2}\ket{\phi_j} = \lambda_j\ket{\phi_j}.
\end{align}
Pre-multiplying both sides by $\eta^{-2}$,
\begin{align}
	H\left(\eta^{-2}\ket{\phi_j}\right) = \lambda_j\left(\eta^{-2}\ket{\phi_j}\right),
\end{align}
it can be seen that
\begin{align}
	\ket{\psi_j} = \eta^{-2}\ket{\phi_j}~~ \Leftrightarrow ~~ \ket{\phi_j} = \eta^2 \ket{\psi_j}.
\end{align}
Hence, $\eta$ acts to transform between the pseudo-Hermitian and Hermitian basis, and $\eta^2$ acts to transform between the pseudo-Hermitian biorthonormal eigenbasis. It therefore follows that the biorthonormal eigenbasis is the basis for the inner product space defined by \autoref{eq:etaprod}. Combining this result with the biorthonormal completeness relation, we arrive at a method of constructing the pseudo-Hermitian operator $\eta$:
\begin{align}\label{eq:defineeta}
	\eta = \sqrt{\eta^2 \sum_j \ket{\psi_j}\bra{\phi_j}} = \sqrt{\sum_j \ket{\phi_j}\bra{\phi_j}}
\end{align}
and similarly
\begin{align}\label{eq:defineetainv}
	\eta^{-1} = \sqrt{ \sum_j \ket{\psi_j}\bra{\phi_j}\eta^{-2}} = \sqrt{\sum_j \ket{\psi_j}\bra{\psi_j}}.
\end{align}
Since we have defined $V=\eta^2$, it follows that
\begin{align}
	V=\sum_j \ket{\phi_j}\bra{\phi_j}~~\Leftrightarrow~~~V^{-1}=\sum_j \ket{\psi_j}\bra{\psi_j}.
\end{align}
Further, as $V$ is positive semidefinite, we can be assured that the square root function applied to the operators in \autoref{eq:defineeta} and \autoref{eq:defineetainv} is well defined, albeit admitting multiple solutions. For consistency, we will choose $\eta$ to be the \textit{principle square root} --- the unique positive semidefinite square root of $V$.

So, to briefly summarise, $H$ is necessarily pseudo-Hermitian (and consequently PT-symmetric) if it satisfies \textbf{any one} of the following equivalent conditions:
\begin{enumerate}
	\item \textbf{$H$ is similar to a Hermitian matrix}. There exists a Hermitian operator $\eta$ and a Hermitian matrix $\tilde{H}$ such that $\tilde{H}=\eta H\eta^{-1}$.
	\item \textbf{$H$ is similar to its own Hermitian conjugate}. There exists a positive Hermitian operator $\eta$ such that $H^\dagger=\eta^2 H\eta^{-2}$.
	\item \textbf{$H$ has real eigenvalues and is diagonalizable} Note that a matrix is diagonalisable if and only if it has $n$ linearly-independent eigenvectors \cite{anthony2012}.
\end{enumerate}

\section{Pseudo-Hermitian continuous-time quantum walks}
\label{sec:phctqw}
First introduced by \citet{salimi2010}, pseudo-Hermitian continuous-time quantum walks take advantage of the pseudo-Hermitian structure of various graphs in order to implement directed quantum walks. In their study, the transition probability of the pseudo-Hermitian CTQW at vertex $j$ at time $t$ is defined to be
\begin{align}
	\mathbb{P}_j(t) = |\braket{j|\eta e^{-iHt}|\psi(0)}|^2.
\end{align}
Note that this does \textit{not} preserve the initial state;
\begin{align}
	\mathbb{P}_j(0) = |\braket{j|\eta|\psi(0)}|^2 \neq |\braket{j|\psi(0)}|^2.
\end{align}
Thus, if $\ket{\psi(0)}$ is chosen to be an equal superposition over all vertices, this will not be reflected in the quantum walk at time $t=0$, making this definition unsuitable for algorithms such as centrality testing and graph isomorphism. Thus, rather than utilise their implementation, we present an alternative formulation.

As our aim is to experimentally produce a pseudo-Hermitian CTQW for network analysis, rather than redefine the Hilbert space inner-product as per \autoref{eq:etaprod}, the inner product will not be modified. Instead, we have three options available:

\begin{enumerate}[A.]
	\item \textbf{No modification}: implement the time-evolution operator using the normal framework for the CTQW, $U=e^{-iHt}$, with no redefinition of the inner-product. This results in a non-unitary and non-probability conserving time-evolution, but the pseudo-Hermiticity of the system ensures the norm squared of the quantum walk wavefunction will just oscillate with no exponential growth and decay. 
    \item \textbf{Modify the time-evolution operator}: the non-unitary time evolution operator from (A) is instead modified as follows,
\begin{align}
	\tilde{U}(t) = \eta U(t)\eta^{-1} = \eta e^{-iHt} \eta^{-1}
\end{align}
where $\tilde{U}(t)$ is unitary due to the pseudo-Hermitian similarity transform of the matrix exponential. This reflects the underlying directional structure of the graph, whilst allowing for probability conservation. Note that whilst the product is unitary, $\eta$ and $U(t)$ are non-unitary matrices.
    \item \textbf{Modify the Hamiltonian}: in this approach, the pseudo-Hermitian Hamiltonian is modified via similarity transform to make it Hermitian,
  \begin{align}
      \tilde{U}(t) = e^{-i\tilde{H}t} = e^{-i\eta H\eta^{-1}t}
  \end{align}
  This preserves the directional structure of the graph, whilst allowing us to use the standard CTQW framework. Furthermore, it has the potential to be implemented experimentally via quantum simulation.
\end{enumerate}
Note that options (B) and (C) are equivalent  --- it is only their resulting experimental implementations which would differ. From hereon, the modified CTQW walk outlined in (B) and (C) will be referred to as the $\eta$-CTQW, to distinguish it from the standard non-unitary CTQW in (A).

\subsection{3-vertex directed graph}
Consider the 3-vertex graph in \autoref{fig:3vertex}. Its Hamiltonian is found by calculating the graph Laplacian,
\begin{align}
	H = \left[\begin{matrix}
    		1 & -1 & -1 \\
            -1 & 1 & -1 \\
            0 & 0 & 2
    	\end{matrix}\right]
\end{align}
Whilst not Hermitian ($H^\dagger\neq H$), the eigenvalues ($\lambda=0,2,2$) are all real and the eigenvectors are linearly independent; thus $H$ is PT-symmetric and pseudo-Hermitian.

\begin{figure}[htp]
	\centering
    \includegraphics[scale=0.7]{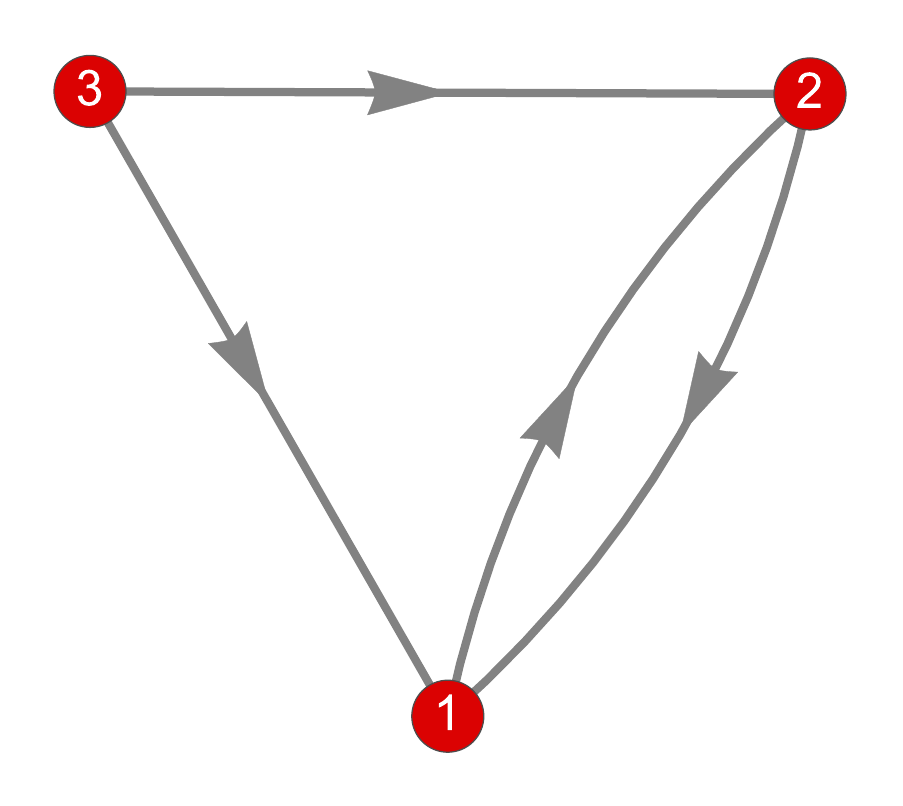}
    \caption{3-vertex directed graph}
    \label{fig:3vertex}
\end{figure}

\subsubsection{Standard CTQW (non-probability conserving)}
Solving the Schr\"odinger equation for the standard CTQW (\autoref{fig:3vertexPlot1}), we find that although there is no exponential growth or decay of the squared norm of the wavefunction (due to the pseudo-Hermiticity), nevertheless, the squared norm oscillates and is not conserved.
\begin{figure}[htp!]
	\centering
    \includegraphics[scale=0.75]{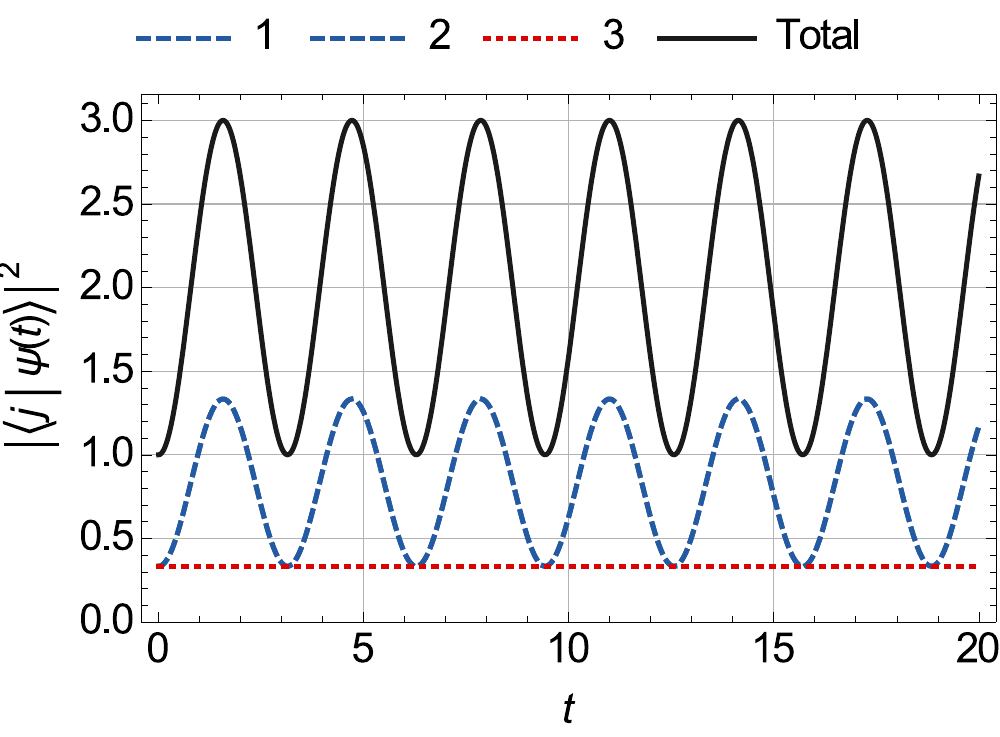}
    \caption{Vertex squared norm vs. time for the non-unitary CTQW on the directed graph \autoref{fig:3vertex}, with the walker initially in an equal superposition of vertex states $\ket{\psi(0)}=\frac{1}{\sqrt{3}}\sum_j \ket{j}$. Vertex 1 and 2 have equal norm squared (blue, dashed), with a higher time-average than vertex 3 (red, dotted). The squared norm of the total system is given by the black, solid line.}
    \label{fig:3vertexPlot1}
\end{figure}
\begin{figure}[htp!]
	\centering
    \includegraphics[scale=0.75]{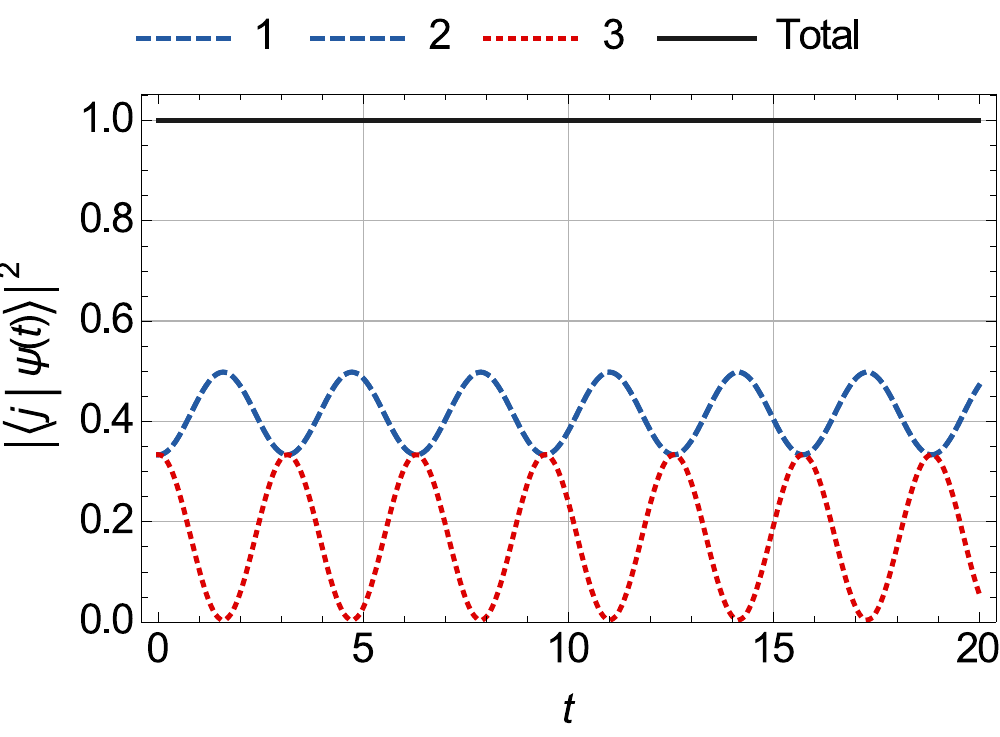}
    \caption{Vertex probability vs. time for the pseudo-Hermitian $\eta$-CTQW on the directed graph \autoref{fig:3vertex}, with the walker initially in an equal superposition of vertex states $\ket{\psi(0)}=\frac{1}{\sqrt{3}}\sum_j \ket{j}$. Vertex 1 and 2 have equal probability (blue, dashed), with a higher time-average than vertex 3 (red, dotted). The total probability of the system is given by the black, solid line.}
    \label{fig:3vertexPlot1eta}
\end{figure}

In order to experimentally implement this CTQW, we wish to decompose the time evolution operator such that the time dependence is restricted to a single diagonal unitary matrix. Diagonalising $H$ yields the following decomposition:
\begin{align}
	U(t) = \frac{1}{2}\left[\begin{matrix}
    	1 & -1 & -1\\
        1 & 0 & 1\\
        0 & 1 & 0
    \end{matrix}\right]\left[\begin{matrix}
    	1 & 0 & 0\\
        0 & e^{-2it} & 0\\
        0 & 0 & e^{-2it}
    \end{matrix}\right]\left[\begin{matrix}
    	1 & 1 & 1\\
        0 & 0 & 2\\
        -1 & 1 & -1
    \end{matrix}\right]
\end{align}
Note that whilst the diagonal matrix is unitary, the two outer matrices are non-unitary, resulting in a non-unitary time evolution operator $U(t)$.

\subsubsection{\texorpdfstring{$\eta$}{eta}-CTQW (probability conserving)}

By calculating the biorthonormal eigenbasis, we can derive the pseudo-Hermitian operator $\eta$ of the graph,
\begin{align}
	\eta= \frac{1}{6}\left[\begin{matrix}
    		3+2\sqrt{2} & -3+2\sqrt{2} & \sqrt{2}\\
    		-3+2\sqrt{2} & 3+2\sqrt{2} & \sqrt{2}\\
    		\sqrt{2} & \sqrt{2} & 5\sqrt{2}
    	\end{matrix}\right]
\end{align}
Thus, the modified time-evolution operator is given by $\tilde{U}(t) = \eta e^{-iHt} \eta^{-1}$ --- see 
\autoref{fig:3vertexPlot1eta} for how this affects the dynamics of the quantum walk.

In order to find a useful decomposition, we need to be able to diagonalise $\eta H \eta^{-1}$; i.e.
\begin{align}
	D = S^{\dagger} \eta H \eta^{-1} S
\end{align}
Once $S$ is calculated, by similarity it can also be used to diagonalise $\tilde{U}(t)$. This results in the following decomposition:

\begin{align}
	\tilde{U}(t) = S\left[\begin{matrix}
    	1 & 0 & 0\\
        0 & e^{-2it} & 0\\
        0 & 0 & e^{-2it}
    \end{matrix}\right]S^\dagger
\end{align}
where 
\begin{align}
	S=\frac{1}{3}\left[\begin{matrix}
         2 & -\frac{3}{\sqrt{2}} & -\frac{1}{\sqrt{2}} \\
         2 & \frac{3}{\sqrt{2}} & -\frac{1}{\sqrt{2}} \\
         1 & 0 & 2 \sqrt{2}
    \end{matrix}\right]
\end{align}

Note that, unlike the non-probability conserving case, here all three of the decomposed matrices are unitary.

\subsection{Alternative interpretation}
To get an understanding of the transformation from the non-probability conserving CTQW to the pseudo-Hermitian $\eta$-CTQW, let's have a look at the pseudo-Hermitian Hamiltonian for the 3-vertex graph discussed above,
\begin{align}\label{eq:3vertexEta}
	\tilde{H} = \eta H\eta^{-1} = \frac{1}{9}\left[
          \begin{matrix}
				10 & - 8 & - 4 \\
                - 8 & 10 & - 4 \\
                - 4 & - 4 & 16 
          \end{matrix}
        \right]
\end{align}
Describing a discrete graph structure, the $\eta$-Hamiltonian $\tilde{H}$ is symmetric, yet contains fractional quantities --- we may interpret $\tilde{H}$ as the Laplacian of a weighted, undirected complete graph with self-loops. (Note that this is just one valid interpretation, and is not unique). In doing so, it is convenient to use a more rigorous definition of the Laplacian more suited to weighted graphs.

The \textit{oriented incidence matrix} $M$ of an undirected graph $G(V,E)$, with vertex set $V=\{v_1,v_2,\dots,v_N\}$ and edge set $E=\{e_1,e_2,\dots,e_m\}$, is a $n\times m$ matrix associated with a particular orientation of the edges of $G$. That is, each edge $e_j$ is given a random direction. The oriented incidence matrix is therefore defined as follows:
\begin{align}
	M_{ij} = \begin{cases}
    	2, & \text{if edge $e_j$ is a self-loop incident on node $v_i$}\\
    	1, & \text{if edge $e_j$ is incident away from node $v_i$}\\
    	-1, & \text{if edge $e_j$ is incident towards node $v_i$}\\
    	0, & \text{if edge $e_j$ is not incident on node $v_i$}
    \end{cases}
\end{align}
Note that, unlike the standard incidence matrix, columns of the oriented incidence matrix not associated with self-loops must sum to zero.

Once we have calculated the oriented incidence matrix of a weighted graph, we can compute the $N\times N$ \textit{weighted Laplacian} using the following relationship:
\begin{align}
	L = MWM^T
\end{align}
where $W$ is a diagonal matrix containing the weights $\{w_{ij}\}$ associated with the edges $\{(v_i,v_j)\}$. For a complete graph over $N$ vertices with self-loops and arbitrary edge weighting, it turns out we can compute the weighted Laplacian directly,
\begin{align}
	L_{ij} = \begin{cases}
    	\sum_{k=1}^N w_{ik}+ 3w_{ii}, & i=j\\
    	-w_{ij}, & i\neq j
    \end{cases}
\end{align}
where $w_{ij}=w_{ji}$ and the size of the set $\{w_{ij}\}$ is $|E|=\frac{1}{2}N(N+1)$. Thus, if we have an $N\times N$ pseudo-Hamiltonian $\tilde{H}$, then solving $\tilde{H}=L_{ij}$ provides the edge weighting 
\begin{align}\label{eq:edgeweights}
	w_{ij} = \begin{cases}
    	\frac{1}{4}\sum_{k=1}^N \tilde{H}_{ik}, & i=j\\
        -\tilde{H}_{ij}, &i\neq j
    \end{cases}
\end{align}
That is, we can interpret the $\eta$-CTQW of a \textit{directed} $N$-vertex graph in terms of a standard CTQW on a \textit{undirected} complete graph with self-loops, and edge weights given by $w_{ij}$ above. The undirected edge weightings allows us to approximate the directed dynamics we require.

For example, let's return to the 3-vertex graph examined previously. Using the pseudo-Hamiltonian of the graph (\autoref{eq:3vertexEta}), we can find the edge weights $w_{ij}$ via \autoref{eq:edgeweights}. See \autoref{fig:ewgraph} for the result.
\begin{figure}[htp!]
	\centering
    \includegraphics[scale=0.75]{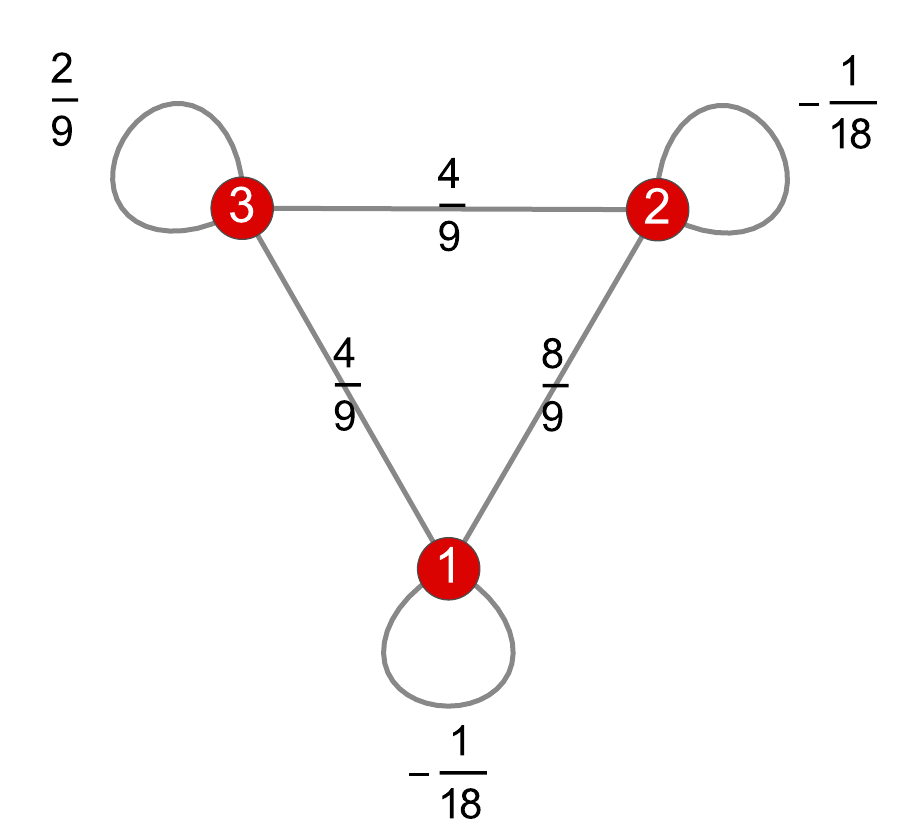}
    \caption{Performing an $\eta$-CTQW on the 3-vertex directed graph in \autoref{fig:3vertex} is equivalent to performing a standard CTQW on the undirected edge-weighted graph shown above.}
    \label{fig:ewgraph}
\end{figure}

\subsection{Multi-particle \texorpdfstring{$\eta$}{eta}-CTQW}
If we wish to extend the standard CTQW to simulate $P$ distinguishable particles on graph $G$, the Hamiltonian of the system is expanded to act on a $N^P$ Hilbert space, as follows:
\begin{align}
	H^{(P)} = H_1 \oplus H_2 \oplus \cdots \oplus H_P + \Gamma_{int}
\end{align}
where $H_j$ is the free-particle Hamiltonian of the $j$th particle on graph $G$, $\Gamma_{int}$ represents a potential interaction between the particles, and $\oplus$ is the tensor or Kronecker sum defined by
\begin{align}
	A_{n\times n} \oplus B_{m\times m} = 	A_{n\times n}\otimes I_{m\times m} + I_{n\times n}\otimes B_{m\times m}
\end{align}
Note that the free particle Hamiltonians are identical ($H_1=H_2=\cdots =H_P$), allowing this to be rewritten as 
\begin{align}
	H^{(P)} = H^{\oplus p}+\Gamma_{int}
\end{align}

In the case of no-interaction ($\Gamma_{int}=0$), then it is trivial to show that if $H$ is pseudo-Hermitian, then so is $H^{(P)}$, with pseudo-Hermitian operator $\eta^{(P)}$ given by
\begin{align}
	\eta^{(P)} = \underbrace{\eta\otimes\eta\otimes\dots\otimes\eta}_{P}=\eta^{\otimes P},
\end{align}
and $\eta$ the pseudo-Hermitian operator of the single particle Hamiltonian $H$.

When $\Gamma_{int}\neq 0$, then additional care must be taken --- unlike Hermitian matrices, the pseudo-Hermitian matrices are \textit{not} closed under addition, so $H^{(P)}$ is no longer guaranteed to be pseudo-Hermitian (this can easily be shown by counter-example).

\subsection{Interdependent networks}
Aside from the Cartesian product of graphs discussed above, another method of combining graph structures are interdependent networks. In the real-world, very few networks operate independently, instead interacting and depending on a myriad of other networks~\cite{gao2012, rosato2008} --- examples include modelling cascading failures between power grids, communication networks, and physiological/biochemical systems.
As such, being able to extend the pseudo-Hermitian CTQW to model interdependent networks greatly expands the scope of the framework. In this section, we will consider an interconnected network of two pseudo-Hermitian graphs and determine the properties that must be satisfied for the resulting network to have guaranteed pseudo-Hermiticity.

Let $A_1$ and $A_2$ refer to the adjacency matrices of two graphs, $G_1$ and $G_2$, respectively. The resulting interdependent network Hamiltonian is constructed as follows:
\begin{align}
	A = \left[\begin{matrix}
		A_1&B_0\\
		B_0^T&A_2
	\end{matrix}\right]
\end{align}
where $B_0$ is the adjacency matrix representing the edges connecting the two sets of vertices $V(A_1)$ and $V(A_2)$. The Hamiltonian of this graph, defined by \autoref{eq:hamiltonian}, can then be written as
\begin{align}
	H = \left[\begin{matrix}
		\mathcal{H}_1&-B_0\\
		-B_0^T&\mathcal{H}_2
	\end{matrix}\right]
\end{align}
where
\begin{align}
	(\mathcal{H}_u)_{ij}=(H_u)_{ij}+\sum_k (B_0)_{ik}\delta_{ij},~~~u\in\{1,2\}
\end{align}
and $H_1$ and $H_2$ are the Hamiltonians of $A_1$ and $A_2$ respectively.

\begin{theorem}
	The Hamiltonian of an interdependent network of two graphs $A_1$ and $A_2$, with pseudo-Hermitian Hamiltonians $H_1$ and $H_2$ respectively, will itself exhibit pseudo-Hermiticity if the inter-network connections $B_0$ is pseudo-Hermitian and degree-regular, and the commutation relations $H_1 B_0 = B_0 H_2$ and $H_2 B_0^T = B_0^T H_1$ are satisfied.
\end{theorem}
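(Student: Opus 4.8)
The plan is to peel the statement down, via the three hypotheses in turn, to the construction of a single explicit positive metric for the composite Hamiltonian. First I would use degree-regularity to remove the diagonal corrections: since $B_0$ must be square to be pseudo-Hermitian, degree-regularity means all of its row sums and all of its column sums equal a common value $d$, so $(\mathcal{H}_u)_{ij}=(H_u)_{ij}+d\,\delta_{ij}$ for $u=1,2$ and therefore $H = dI + M$ with $M=\left[\begin{matrix} H_1 & -B_0\\ -B_0^{T} & H_2\end{matrix}\right]$. Adding $dI$ shifts the whole spectrum by the real number $d$ and fixes every eigenvector, hence affects neither diagonalizability nor any metric; so $H$ is pseudo-Hermitian if and only if $M$ is, and I would work with $M$ from here on.

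Next, the natural ansatz: take $\eta$ block-diagonal, $\eta=\mathrm{diag}(\eta_1,\eta_2)$, where $\eta_1,\eta_2$ are the (real, symmetric, positive) metrics of $H_1,H_2$; this is positive definite. Conjugating, the $(1,1)$ and $(2,2)$ blocks of $\eta M\eta^{-1}$ become $\tilde H_1,\tilde H_2$, Hermitian by hypothesis, while the off-diagonal blocks $-\eta_1 B_0\eta_2^{-1}$ and $-\eta_2 B_0^{T}\eta_1^{-1}$ are mutual Hermitian conjugates precisely when the intertwining identity $\eta_1^{2}B_0=B_0\eta_2^{2}$ holds. So the theorem reduces to deriving this identity from the commutation relations together with pseudo-Hermiticity of $B_0$.

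The linchpin of that derivation is the observation that the \emph{two} commutation relations together (neither alone suffices) give $H_1 B_0 B_0^{T}=B_0 H_2 B_0^{T}=B_0 B_0^{T} H_1$ and $H_2 B_0^{T}B_0=B_0^{T}B_0 H_2$, so the Hermitian positive-semidefinite matrices $B_0 B_0^{T}$ and $B_0^{T}B_0$ commute with the diagonalizable $H_1$ and $H_2$ respectively; taking adjoints also gives $H_1^{\dagger}B_0=B_0 H_2^{\dagger}$, so $B_0$ intertwines the whole pair $(H_1,H_1^{\dagger})$ with $(H_2,H_2^{\dagger})$. I would then choose the biorthonormal eigenbasis of $H_1$ so that it sits inside the orthogonal eigenspaces of $B_0 B_0^{T}$ — legitimate because $H_1$ restricted to each such eigenspace is still diagonalizable with real spectrum, hence pseudo-Hermitian there — and likewise the eigenbasis of $H_2$ inside the eigenspaces of $B_0^{T}B_0$. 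On these adapted bases $B_0$ and $B_0^{T}$ carry eigenvectors of one Hamiltonian to eigenvectors of the other, preserving both the $H$-eigenvalue and the $B_0 B_0^{T}$-eigenvalue and annihilating exactly the unshared part; pseudo-Hermiticity of $B_0$ is what pins down the normalisations of the dual vectors $\ket{\phi_j}$ so that the biorthonormal pairing transports consistently, and the zero-mode shift removed in the first step keeps the common Laplacian null vector compatible. Summing $\ket{\phi_j}\bra{\phi_j}$ over the matched bases then yields $\eta_1^{2}B_0=B_0\eta_2^{2}$, so $\eta M\eta^{-1}$ is Hermitian and $M$ — hence $H$ — is pseudo-Hermitian by the second equivalent characterisation listed at the end of \autoref{sec:pt}.

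The hard part is exactly this matching of the two biorthonormal bases through $B_0$: making $B_0$ map one basis onto the other on the nose rather than merely up to scalars, while correctly handling degenerate $H$-eigenspaces (where the metric depends on the chosen basis), the kernels of $B_0$ and $B_0^{T}$, and the parts of $\mathrm{spec}(H_1)$ and $\mathrm{spec}(H_2)$ that are not shared — and recognising that it is precisely pseudo-Hermiticity of $B_0$ that closes the bookkeeping. A cleaner-to-state alternative that hits the same wall for diagonalizability is to verify the third characterisation directly: $H_1 B_0=B_0 H_2$ gives $(H_1-\lambda)^{-1}B_0=B_0(H_2-\lambda)^{-1}$, which collapses the characteristic determinant of $M$ to a constant times $\det\!\big((H_2-\lambda I)^{2}-B_0^{T}B_0\big)$, and diagonalising on the common eigenspaces of $\{H_2,B_0^{T}B_0\}$ shows every root has the form $\mu\pm\sqrt{\nu}$ with $\mu\in\mathrm{spec}(H_2)\subset\mathbb{R}$ and $\nu\ge0$, hence real; one would still have to exhibit a full eigenbasis of $M$, which is the same matching problem.
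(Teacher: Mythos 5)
Your reduction via degree-regularity and the block-diagonal ansatz $\eta=\mathrm{diag}(\eta_1,\eta_2)$ set up a genuinely different argument from the paper's, but the argument is not closed: everything hinges on the intertwining identity $\eta_1^2B_0=B_0\eta_2^2$, and your sketch of it does not go through as written. What the hypotheses actually give is that both $B_0$ and $\eta_1^{-2}B_0\eta_2^{2}$ intertwine $H_1$ with $H_2$ (using $H_1^\dagger B_0=B_0H_2^\dagger$, obtained by transposing the second relation); the space of such intertwiners is in general more than one-dimensional, so the identity is a real constraint on the \emph{choice} of metrics, not a consequence of the commutation relations. Since $\eta_u^2=\sum_j\ket{\phi_j}\bra{\phi_j}$ is far from unique (any rescaling $\ket{\psi_j}\mapsto c_j\ket{\psi_j}$, $\ket{\phi_j}\mapsto\ket{\phi_j}/\bar{c}_j$ changes it), the ``matching of the two biorthonormal bases through $B_0$'' that you flag as the hard part is essentially the entire content of the theorem in your formulation, and flagging it is not proving it. In particular, the assertion that pseudo-Hermiticity of $B_0$ ``pins down the normalisations'' is unsubstantiated and likely misattributes its role: the orthogonal decompositions your matching would have to respect are those of $B_0B_0^T$ and $B_0^TB_0$ (singular vectors), which are unrelated to the eigenvectors of $B_0$.

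The paper avoids all of this by invoking the third equivalent characterisation of pseudo-Hermiticity rather than constructing a metric. Write $H=A+B$ with $A=\mathrm{diag}(\mathcal{H}_1,\mathcal{H}_2)$ and $B$ the off-diagonal part containing $-B_0$ and $-B_0^T$. Then $A$ is diagonalizable with real spectrum because $H_1$ and $H_2$ are; $B$ is diagonalizable with real spectrum (in fact $B$ is real symmetric for \emph{any} real $B_0$, its eigenvalues being $\pm$ the singular values of $B_0$); the condition $[A,B]=0$ is precisely your pair of commutation relations; and commuting diagonalizable matrices are simultaneously diagonalizable, so $H$ is diagonalizable with real eigenvalues and hence pseudo-Hermitian. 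Your closing ``alternative'' via the characteristic determinant recovers the same real spectrum but, as you concede, leaves diagonalizability open --- and the simultaneous-diagonalizability lemma you would need there (for $H_1$ and $B_0B_0^T$) is exactly the lemma that, applied to $A$ and $B$ directly, finishes the proof in one step. I would abandon the explicit-metric route unless you can supply a complete proof of the intertwining identity, including the treatment of $\ker B_0$, degenerate eigenspaces, and the normalisation freedom.
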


\begin{proof}
$B_0$ is degree regular with degree $c$ --- thus
\begin{align}
	\sum_{k}(B_0)_{ik}\delta_{ij}=c\delta_{ij} ~~\Rightarrow~~ \mathcal{H}_u = H_u + cI
\end{align}
If we decompose the Hamilton into the sum
\begin{align}
	H = \left[\begin{matrix}
			\mathcal{H}_1&0\\
			0&\mathcal{H}_2
		\end{matrix}\right] + \left[\begin{matrix}
				0&-B_0\\
				-B_0^T&0
			\end{matrix}\right] = A+B
\end{align}
we can now prove that the interdependent network Hamiltonian is always pseudo-Hermitian if the two components $A$ and $B$ commute and are themselves pseudo-Hermitian.

Since $H_1$ and $H_2$ are pseudo-Hermitian, they are therefore diagonalisable by matrices $Q_u$:
	\begin{align}
		\Lambda_u = Q_u^{-1} H_u Q_u
	\end{align}
where $\Lambda_u$ are real diagonal matrices of eigenvalues. It follows that
\begin{align}
	Q_u^{-1}\mathcal{H}_u Q_u = Q_u^{-1}(H_u+cI) Q_u = \Lambda_u +cI
\end{align}
That is, $\mathcal{H}_1$ and $\mathcal{H}_2$ are simultaneously diagonalisable and exhibit real eigenspectra --- satisfying the criteria for pseudo-Hermiticity. From here, it is trivial to show that $A$ is diagonalised as follows:
\begin{align}
	\Lambda_A = \left[\begin{matrix}
		\Lambda_1+cI& 0\\
		0&\Lambda_2+cI
	\end{matrix}\right] = Q_A^{-1} A Q_A
\end{align}
where
\begin{align}
	Q_A = \frac{1}{\sqrt{2}}\left[\begin{matrix}
		Q_1& 0\\
		0&Q_2
	\end{matrix}\right]
\end{align}
and $\Lambda_A$ is real.

As $B_0$ is also pseudo-Hermitian, therefore it is also diagonalisable with real eigenvalues; $\Lambda_{B_0}=Q_B^{-1} B_0 Q_B$. Similarly to above, we can use this result to diagonalise the matrix $B$:
\begin{align}
	\Lambda_B = \left[\begin{matrix}
		\Lambda_{B_0}& 0\\
		0&-\Lambda_{B_0}
	\end{matrix}\right] = Q_B^T B Q_B
\end{align}
where
\begin{align}
	Q_B = \frac{1}{\sqrt{2}}\left[\begin{matrix}
		Q_{B_0}& Q_{B_0}\\
		Q_{B_0}&-Q_{B_0}
	\end{matrix}\right]
\end{align}
and $\Lambda_B$ is real.

It follows from the above analysis that if $H_1$ and $H_2$ are pseudo-Hermitian, and the interconnections adjacency matrix $B_0$ is pseudo-Hermitian and degree regular, that the matrices $A$ and $B$ are also pseudo-Hermitian.

To ensure that the sum $A+B$ remains pseudo-Hermitian, we can make use of the well-known property that commutating diagonalisable matrices are simultaneously diagonalisable \cite{nehaniv2014}. This requires the additional constraint, $[A,B]=0$, resulting in the following two conditions that must be satisfied:
\begin{subequations}
\begin{align}
	H_1 B_0 &= B_0 H_2 \\
	H_2 B_0^T &= B_0^T H_1
\end{align}
\end{subequations}
If these are both satisfied, then $A$ and $B$ must be simultaneously diagonalised by a matrix $S$:
\begin{align}
	H = A+B = S(P_A\Lambda_A P_A^T + P_B\Lambda_B P_B^T) S^{-1}
\end{align}
where $P_A$ and $P_B$ are permutation matrices. From here, we can see that the eigenvalues of $H$ are contained in the set of elements $\{\lambda_1\pm\lambda_{B_0}+1\}\cup\{\lambda_2\pm\lambda_{B_0}+1\}$ and thus are necessarily real. Therefore, the interdependent network of pseudo-Hermitian graphs --- which commute with their interconnections --- also exhibits pseudo-Hermiticity.
\end{proof}

\begin{figure}[h!]
	\centering
	\includegraphics[scale=0.8]{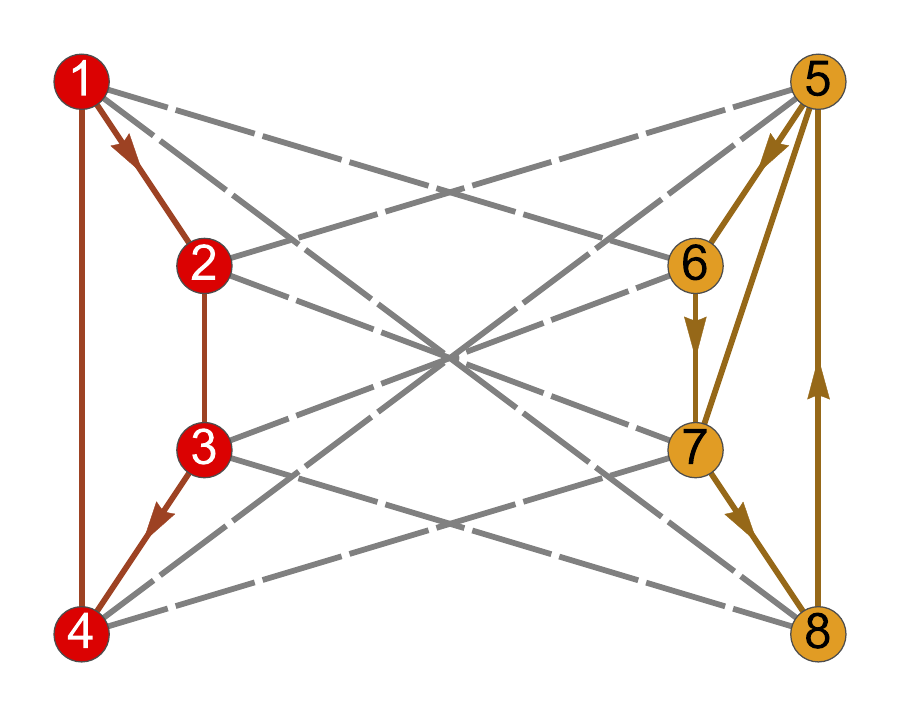}
	\caption{A pseudo-Hermitian interconnected network composed of two 4-vertex pseudo-Hermitian graphs (red and yellow respectively) connected via circulant interconnections (gray, dashed).}
	\label{fig:intnet}
\end{figure}

An example of a pseudo-Hermitian interconnected network in which the two graph Hamiltonians commute with the interconnections is given in \autoref{fig:intnet}. Note that the above theorem is not a necessary condition for interdependent network pseudo-Hermiticity --- examples can be constructed where $[A,B]\neq0$, yet $A+B$ remains diagonalisable and pseudo-Hermitian. Nevertheless, this results provides a useful method of constructing interdependent networks with guaranteed pseudo-Hermiticity.

\section{Centrality testing}
\label{sec:centrality}

\subsection{Centrality introduction}
In the study of network structure, centrality measures are an integral tool, allowing the determination and ranking of graph nodes deemed more important to the structure \cite{brin1998,knoke1983,brass1984,ibarra1993,ibarra1993-a,sade1989,amitai2004}. At its core, a graph centrality measure satisfies the following properties:
\begin{itemize}
\item $C:G(V,E)\rightarrow \mathbb{R}^{|V|}$ is a function or algorithm that accepts a graph as input, and returns a real valued, strictly positive vector over the set of vertices $V$. 
\item The highest values obtained correspond to vertices deemed more `important' or `central' to the graph structure.
\end{itemize}
However, with this general definition comes several caveats. Firstly, note that no meaning has been attributed to vertices with low centrality values --- this is deliberate, as noise grows successively larger for vertices beyond the topmost ranked vertices \cite{lawyer2015}. As such, centrality measures convey very little information regarding a majority of vertices; they are solely for determining the most central nodes (ranking \textit{all} the vertices is more the domain of \textit{influence measures} \cite{liu2013}). Secondly, what constitutes `importance' is subjective, and depends on the application or model to be analysed, and how information `flows' throughout the network \cite{borgatti2005}.

To account for the numerous ways we may quantify importance, various methods exist for measuring how central or important particular vertices are in a network structure; degree centrality, closeness centrality, eigenvector centrality, PageRank centrality, to name a few. It is pertinent to note that many of these centrality measures can be reformulated as classical random walks. For instance, the degree centrality is based on walks of length one, and is useful in models requiring direct and immediate influence between adjacent nodes. Moreover, it is trivial to show that the degree centrality is proportional to the limiting distribution of a random walk \cite{borgatti2006}. 

Eigenvector centrality, on the other hand, is a useful measure when considering long-term `indirect' influence between vertices; if a vertex with low degree is adjacent to a vertex with a high number of connections, the first vertex will likewise have a high eigenvector centrality measure \cite{borgatti2005}. Based on the spectral properties of a graph, the eigenvector centrality is defined by $C_j^{(ev)}=\mathbf{v}_j$, where $\mathbf{v}$ is the eigenvector of the adjacency matrix with maximum eigenvalue \cite{pastor-satorras2016}. Also referred to as the \textit{principle eigenvector}, it is chosen to ensure (via the Perron-Frobenius theorem, assuming that $A$ is irreducible) that the ranking $C_j^{(ev)}$ remains strictly positive. It has been shown by \citet{bonacich1987} that
\begin{align}
	\mathbf{v}_j\propto\sum_i\sum_{n=1}^\infty \lambda^{1-n}(A^n)_{ij}
\end{align}
where $\lambda$ is the maximum (principle) eigenvalue. As $(A^n)_{ij}$ represents the number of walks of length $n$ between vertices $i$ and $j$, it can be seen that the eigenvector centrality performs walks of \textit{all} lengths, weighted inversely by length, from each node. It should be noted that this result allows us to draw interesting parallels with the CTQW; consider the CTQW time-evolution operator for infinitesimal time $dt$:
\begin{align}
	U(t)=e^{-iHdt}=\sum_{n=0}^\infty \frac{1}{n!}(-i dt)^n H^n
\end{align}
That is, like the eigenvector centrality, the CTQW performs walks of all lengths at each infinitesimal time-step $dt$, weighted inversely by walk length. Thus, at the very least, the CTQW may provide the means for an eigenvector-like quantum centrality measure.

Unfortunately, the classical eigenvector centrality can provide ineffectual results when applied to directed graphs, for a multitude of reasons. For example, vertices not in a strongly connected component will be assigned an eigenvector vertex centrality of 0, as $A$ is no longer irreducible and the Perron-Frobenius theorem is no longer valid. Furthermore, `dangling nodes' (vertices or components with in-degree and zero out-degree) found in directed acyclic subgraphs can result in the eigenvector centrality `localising' or accumulating at the affected vertices \cite{perra2008,martin2014}. Furthermore, in the case of directed acyclic graphs, the eigenvector centrality provides no useful centrality information whatsoever --- the centrality measure of every vertex is identically assigned to be zero.

To rectify these issues, a wide range of variations to the eigenvector centrality have been proposed, including PageRank and Katz centrality. Of these, PageRank is arguably the most well known spectral centrality measure, due to its use in the Google search engine \cite{brin1998}. Designed to take into account directed networks, the PageRank improves on the eigenvector centrality by modifying the adjacency matrix to ensure stochasticity and irreducibility, whilst also introducing a `random surfer effect' --- a non-zero probability $1-\alpha$ (where $\alpha\in[0,1]$) that a walker at a vertex can transition to \textit{any} other adjacent or non-adjacent vertex. In practise, $\alpha$ is generally chosen to be 0.85, providing a good compromise between information flow via hyperlinks and the random surfer effect. Note that, whilst a modification of the eigenvector centrality, the PageRank's use of a stochastic transition matrix consequently allows it to be modelled in terms of a random walk of length one at each time-step.

Due to the close relationship between centrality measures and classical measures, directed quantum walks provide a natural starting point for exploring quantum centrality algorithms. For example, the quantum PageRank algorithm  utilises the discrete-time Szegedy quantum walk \cite{paparo2013,loke2017}, whilst the quantum stochastic walk (QSW) makes use of its hybrid classical-quantum regime to rank centrality of directed graphs \cite{loke2017,sinayskiy2013}. However, both approaches have their drawbacks; both require expanding the Hilbert space beyond that required for a standard quantum walk, hindering practical implementation, with additional classical decoherence suppressing quantum behaviour in the QSW. The CTQW also naturally lends itself to centrality analysis, and may provide a quantum analogue to the eigenvector centrality --- minus issues due to non-stochasticity and non-irreducibility due to its quantum propagation. Unfortunately, as the CTQW is not defined for directed/non-Hermitian graphs, possible applications regarding spectral centrality analysis has not been fully explored in the literature. Fortuitously, the pseudo-Hermitian $\eta$-CTQW extends the CTQW formalism to directed graphs, whilst preserving the eigenspectrum of the original graph.

In this section, we propose a quantum centrality measure based on the pseudo-Hermitian $\eta$-CTQW, which has the advantage of a purely quantum propagation and smaller statespace than both the Quantum PageRank and QSW. 

\subsection{Proposed quantum scheme and examples}
Taking a similar approach to the Quantum PageRank, we propose the following measure for assigning a centrality value to vertex $j$ on a directed graph represented by the pseudo-Hermitian Hamiltonian $H$:
\begin{align}
v_j=\lim_{T\rightarrow\infty}\frac{1}{T}\int_0^T\left|\bra{j}e^{-i\eta H\eta^{-1}t}\left(\frac{1}{\sqrt{N}}\sum_k\ket{k}\right)\right|^2 dt
\end{align}
Note that this method simply requires a Hilbert space of dimension $N$. This scheme will then be applied to the 3-vertex graph discussed previously.

Firstly, consider the non-probability conserving standard CTQW walk in the previous section. As the pseudo-Hermitian Hamiltonian preserves the graph eigenspectra, we expect the resulting dynamics to reflect properties that can be classically extracted from the spectra, such as vertex centrality. Thus, one potential way of extracting this information is simply to calculate the time-average of the probability on each vertex, and normalise it by the total probability time average.

Solving the Schr\"odinger equation for the unmodified Hamiltonian evolving the initial state $\ket{\psi(0)}=\frac{1}{\sqrt{3}}(\ket{1}+\ket{2}+\ket{3})$, we get the exact solution
\begin{align}
	\ket{\psi(t)} = \frac{1}{\sqrt{6}}\sqrt{5-3\cos(2t)}\left(\ket{1} + \ket{2}\right)+\frac{1}{\sqrt{3}}\ket{3}
\end{align}
where 
\begin{align}
	\braket{\psi(t)|\psi(t)} = 2-\cos(2t).
\end{align}
The total probability has a time average over $t=[0,\pi]$ (one period) of
\begin{align}
	\int_{0}^\pi\braket{\psi(t)|\psi(t)}dt = 2\pi,
\end{align}
therefore, calculating the time average of this result over $t=[0,\pi]$: 
\begin{align}
	\frac{1}{2\pi}\int_{0}^\pi |\braket{j|\psi(t)}|^2dt = \frac{5}{12}\braket{j|1} + \frac{5}{12}\braket{j|2} + \frac{1}{6}\braket{j|3}
\end{align}
That is, by this centrality measure, vertices 1 and 2 are ranked equal first, followed by vertex 3.

Next, consider the probability conserving $\eta$-CTQW case, $\tilde{U}(t) = \eta e^{-iHt} \eta^{-1}$. As we have an exact representation of $\eta$, we are also able to solve for $\ket{\tilde{\psi}(t)}$ exactly:
\begin{align}
	\ket{\tilde{\psi}(t)} = &\frac{1}{\sqrt{243}}\sqrt{101-20\cos(2t)}\left(\ket{1} + \ket{2}\right)\notag\\
    & ~~~~~~~ +\frac{1}{\sqrt{243}}\sqrt{41+40\cos(2t)}\ket{3}
\end{align}
Calculating the time average,
\begin{align}
	\frac{1}{\pi}\int_{0}^\pi |\braket{j|\tilde{\psi}(t)}|^2 dt = \frac{101}{243}\delta_{j1} + \frac{101}{243}\delta_{j2} + \frac{41}{243}\delta_{j3}
\end{align}

The numerical values of the CTQW and $\eta$-CTQW centrality rankings have been tabulated in \autoref{tab:centrality}, alongside the classical PageRank ($\alpha=0.85$) and eigenvector centrality rankings. Note that the two CTQW rankings strongly agree with the classical PageRank and eigenvector rank. Furthermore, the numerical values of the two CTQW rankings only differ by a maximum of about $1.23\%$, indicating that the pseudo-Hermitian similarity transform preserves information regarding vertex centrality in this particular example. A general statistical analysis will be carried out in the subsequent section.

\begin{table}[t!]
\centering
\begin{tabular}{ccccc}
\toprule
 & Eigenvector & PageRank & CTQW     & $\eta$-CTQW \\ \colrule
1  & 0.5    & 0.475              & 0.416667 & 0.415638             \\
2  & 0.5    & 0.475              & 0.416667 & 0.415638             \\
3  & 0    & 0.05               & 0.166667 & 0.168724             \\
\botrule
\end{tabular}
\caption{Centrality ranking of the vertices of 3-vertex graph \autoref{fig:3vertex}, using the classical PageRank method, the non-probability conserving pseudo-Hermitian CTQW, and the probability conserving pseudo-Hermitian CTQW ($\eta$-CTQW)}
\label{tab:centrality}
\end{table}

\begin{table}[t]
\centering
\begin{tabular}{ccccc}
\toprule
 & Eigenvector & PageRank & CTQW     & $\eta$-CTQW \\ \colrule
1   & 0.292893   & 0.25    & 0.386364 & 0.339192             \\
2   & 0.207107   & 0.25    & 0.113636 & 0.160808             \\
3   & 0.292893   & 0.25    & 0.386364 & 0.339192             \\
4   & 0.207107   & 0.25    & 0.113636 & 0.160808             \\
\botrule
\end{tabular}
\caption{Centrality ranking of the 4 vertices of graph \autoref{fig:4vertex}, using the classical PageRank method, the non-probability conserving pseudo-Hermitian CTQW, and the probability conserving pseudo-Hermitian CTQW ($\eta$-CTQW)}
\label{tab:centrality4}
\end{table}

\begin{figure}[t]
	\centering
    \includegraphics[scale=0.7]{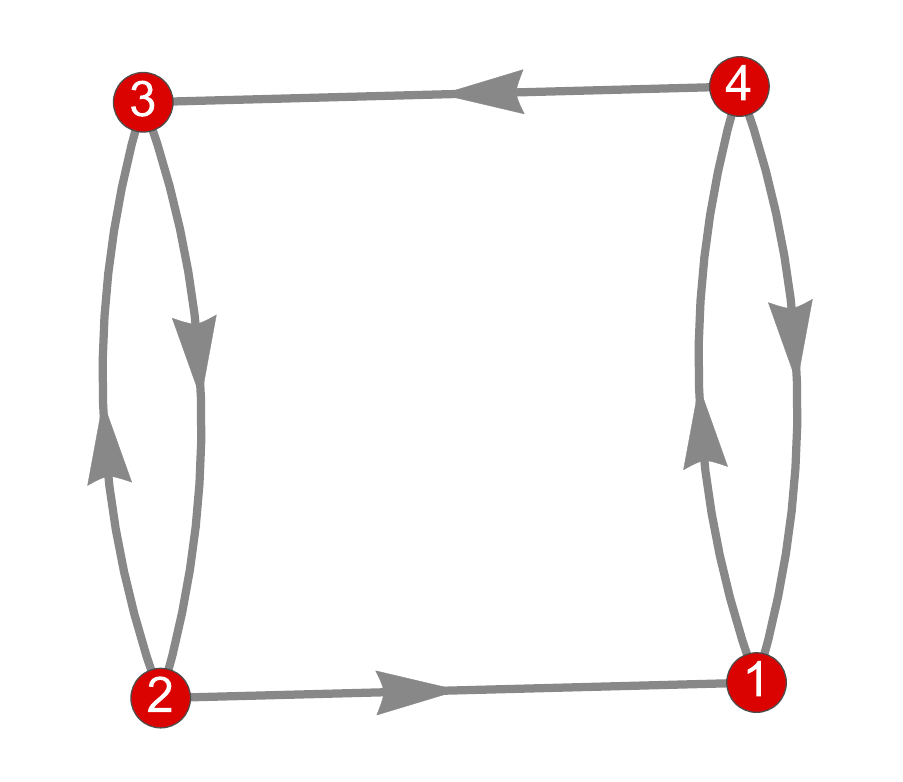}
    \caption{4-vertex directed graph}
    \label{fig:4vertex}
\end{figure}

Let's now consider a 4-vertex pseudo-Hermitian directed graph, as shown in \autoref{fig:4vertex}. Note that this graph is composed of two sets of equivalent vertices; vertices 1 and 3 (in-degree $=2$ and out-degree $=1$), and vertices 2 and 4 (in-degree $=1$ and out-degree $=2$).
Like the 3-vertex graph analysed previously, the CTQW probability can also be solved exactly, and thus the same method is applied to determine the vertex centrality.

The results of the CTQW and $\eta$-CTQW centrality test can be seen in \autoref{tab:centrality4}. In this case, both CTQW formulations and the eigenvector centrality ranked vertices 1 and 3 above vertices 2 and 4. Intuitively, this is perhaps expected, as vertices 1 and 3 have a greater in-degree and lower out-degree than vertices 2 and 4, resulting in walker probability accumulating on these two vertices. Interestingly, the classical PageRank measure does not distinguish between these two sets of vertices, as the limiting distribution of a (classical) random walk on this graph results in a unitary distribution. The agreement of the CTQW-based measures to the eigenvector centrality, as opposed to the PageRank, lends further credence to the suggestion that the CTQW measures centrality via a similar process to the eigenvector centrality.

As a final example, we briefly examined a pseudo-Hermitian interdependent network consisting of a 4-vertex directed graph and a 3-vertex directed graph, connected via complete interconnections (i.e. $B_0=J$). The graph and various centrality rankings of the vertices are shown in \autoref{fig:7vertex} --- it can be seen that the pseudo-Hermitian $\eta$-CTQW centrality ranking strongly agrees with the classical PageRank and eigenvector results, with the only disagreement involving the rankings of vertices 3 and 6, as well as 1 and 4 (both display degeneracy in the classical measures). Of note, the $\eta$-CTQW does a better job of ranking the vertices than the standard non-unitary CTQW in this case; perhaps indicating that the pseudo-Hermitian CTQW --- itself a mapping of a directed graph to an undirected, yet weighted, complete graph --- provides a better overall picture of the vertex ranks. In particular, the $\eta$-CTQW is the only ranking to break the top-ranked tie seen in the other measures, assigning slightly more importance to vertex 3 compared to vertex 6.
\begin{figure}[h!]
	\centering
	\subfigure{\includegraphics[scale=0.7]{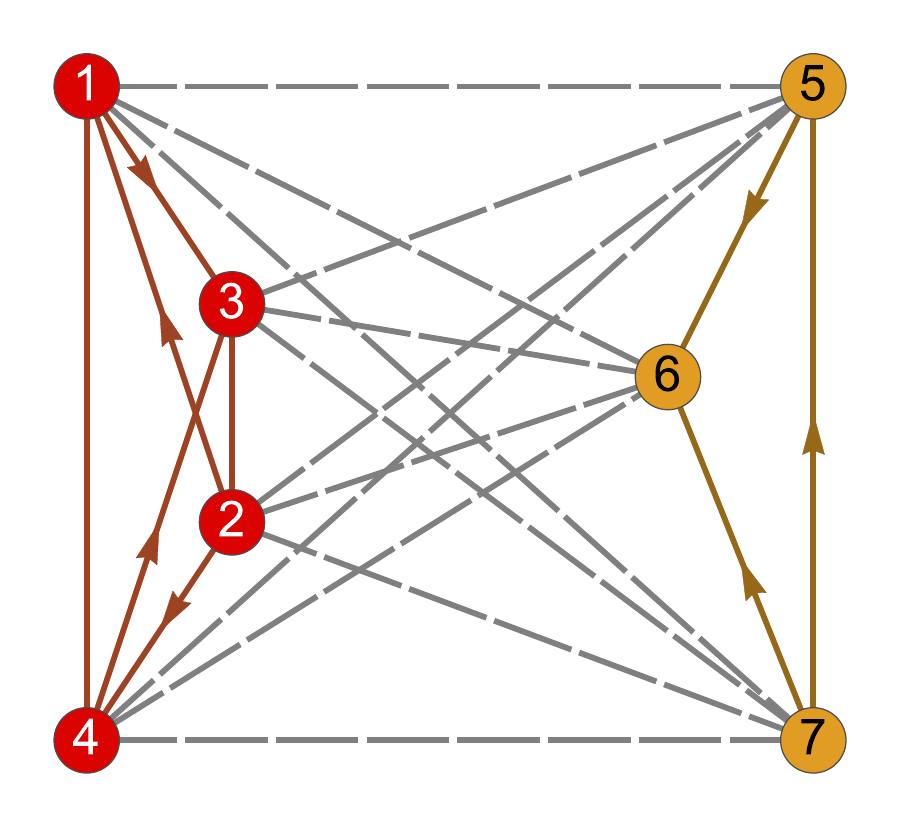}}
	\subfigure{\includegraphics[scale=0.9]{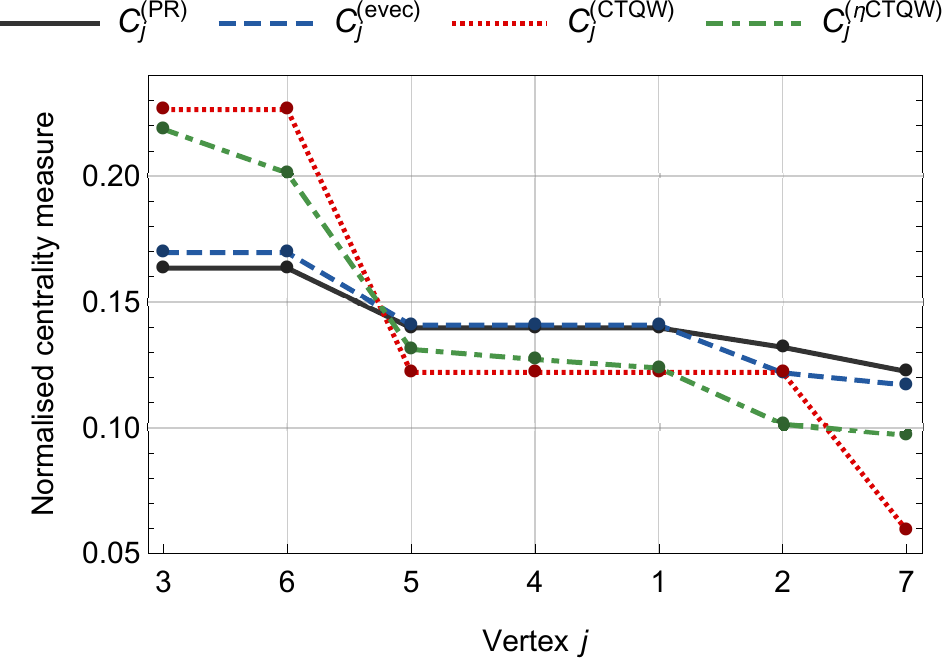}}
    \caption{\textbf{Top}: Interdependent network, consisting of a directed 4-vertex graph (red) connected to a directed 3-vertex graph (yellow) via complete interconnections (gray, dashed). \textbf{Above}: Centrality ranking of the vertices, ordered from highest ranking to lowest ranking vertex. Measures used include the classical PageRank (black, solid), the eigenvector centrality (blue, dashed), the standard CTQW (red, dotted) and the pseudo-Hermitian $\eta$-CTQW (green, dot-dashed).}
    \label{fig:7vertex}
\end{figure}

Another method of quantifying the correlation between the various centrality methods is to calculate their rank correlation coefficients. One such metric is Kendall's rank correlation coefficient, commonly referred to as Kendall's $\tau$ coefficient \cite{kendall1938}. By counting the number of pairwise disagreements between two ranked lists of length $N$, and dividing by normalization factor $\binom{N}{2}$, Kendall's $\tau$ may take values $-1\leq \tau\leq 1$, where $\tau=1$ denotes perfect agreements between the ranked lists, $\tau=0$ denotes no correlation, and $\tau=-1$ denotes perfect anticorrelation (i.e. one list is the reverse of the other). In the field of centrality analysis, Kendall's $\tau$ has become the definitive metric \cite{kim2012,chen2012,lulli2015,du2015}, by means of its ubiquity, efficient computability \cite{knight1966}, and the fact that variants exist that take into account ties \cite{kendall1945}. Despite this, Kendall's $\tau$ coefficient is not particularly suited towards comparing centrality measures. Measures with highly correlated top-ranked vertices may produce comparatively low $\tau$ values, as Kendall's $\tau$ equally weights all discordant pairs, regardless of where they appear in the ranking. Recently, weighted modifications have been proposed --- specifically catered to comparing centrality measures --- which use a hyperbolic weighting function to more heavily weight correlations of the top-ranked vertices. These include the AP (average precision) correlation \cite{yilmaz2008} and Vigna's $\tau$ correlation coefficient \cite{vigna2015}. As Vigna's $\tau$ further takes into account ties, we will apply Vigna's $\tau$ correlation coefficient to analyse the results of \autoref{fig:7vertex}.
\begin{table}
\centering
\begin{tabular}{r|cccc}
 & PageRank & Eigenvector & CTQW & $\eta$-CTQW\\
 \hline
PageRank & 1. & 0.912 & 0.937 & 0.896 \\
Eigenvector & 0.912 & 1. & 0.896 & 0.906 \\
CTQW & 0.937 & 0.896 & 1. & 0.761 \\
$\eta$-CTQW & 0.896 & 0.906 & 0.761 & 1. \\
\end{tabular}
\caption{Vigna's $\tau$ rank correlation coefficient compared for various classical (PageRank and eigenvector) and quantum (non-unitary CTQW and pseudo-Hermitian $\eta$-CTQW) centrality measures applied to the 7-vertex interdependent network in \autoref{fig:7vertex}.}
\label{tab:7vigna}
\end{table}

The results of this analysis can be seen in \autoref{tab:7vigna}. All compared centralities display very high correlation ($\tau\geq 0.8$), with the exception of the CTQW and $\eta$-CTQW, with a correlation value of $\tau=0.761$ (still a significant result). This is most likely due to the degeneracy seen in the CTQW ranking, which is completely broken in the $\eta$-CTQW; these resulting ties slightly lower the $\tau$ value, even though they do not cause disagreeing rankings per se. Finally, note that the $\eta$-CTQW achieves its highest correlation value with the eigenvector centrality at $\tau=0.906$, edging out correlation with the PageRank at $\tau=0.896$.

In these three examples, we have seen that the pseudo-Hermitian CTQW preserves the vertex centrality information from the original directed graphs, resulting in a vertex rank identical (barring broken degeneracy) to the classical PageRank. Whilst these relatively small examples allow us to verify the results of the centrality ranking by intuitively and qualitatively examining the graph structures by eye, this analysis is not sufficient to ensure that the centrality ranking proposed here generalises to other PT-symmetric graph structures. To do so, a statistical analysis featuring randomly generated directed graphs is required.

\subsection{Random directed networks}
To investigate the reliability of the pseudo-Hermitian CTQW on directed graphs, a statistical analysis will be undertaken using randomly generated directed networks. Here, we consider two classes of random networks --- Erd\H{o}s-R\'enyi networks, and scale-free networks.

A random Erd\H{o}s-R\'enyi graph, denoted $G(N,p)$, is comprised of $N$ vertices with edges randomly distributed via a Bernoulli distribution with probability $p$. For such a network, the vertex degree distribution $P(k)$ (the fraction of vertices with degree $k$) is binomial in form,
\begin{align}
	P(k)=\left(\begin{matrix}
		N-1\\
		k
	\end{matrix}\right)p^k(1-p)^{N-k-1}
\end{align}
resulting in most vertices with degree close to $np$, the mean number of connections \cite{erdoes1959,erds1960}. In order to produce a PT-symmetric directed graph satisfying the Erd\H{o}s-R\'enyi degree distribution, we take three approaches. Firstly, we generate numerous directed Erd\H{o}s-R\'enyi graphs (with parameters $N=15$, $p=0.3$) using the Python software package NetworkX \cite{hagberg2008}, and selecting from these 300 which satisfy pseudo-Hermiticity. An example is presented in \autoref{fig:ernx}, alongside a plot of the PageRank, non-unitary CTQW, and $\eta$-CTQW centrality measures for the pictured example. In this particular example, all three measures agree on the location of the top two ranked vertices, with slight discrepancies for the remaining vertices. Note that, from here onward, the classical eigenvector centrality is no longer included as a comparison, as we can no longer guarantee its performance --- a majority of the graphs in this and subsequent ensembles contain acyclic and non-strongly connected components that result in an eigenvector centrality value of zero.

%
%
\begin{figure}
    \centering
    \subfigure{\includegraphics[scale=0.65]{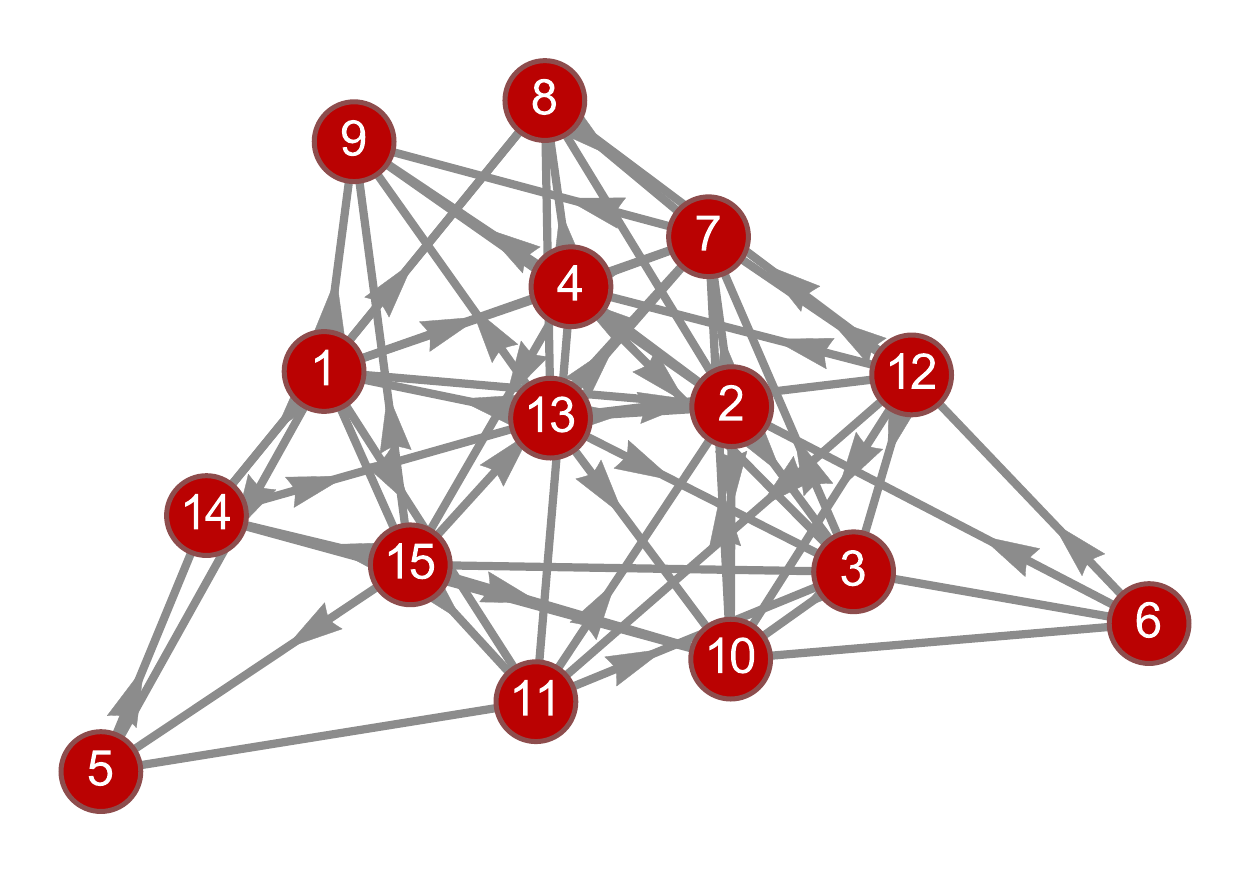}}
    \subfigure{\includegraphics[scale=0.7]{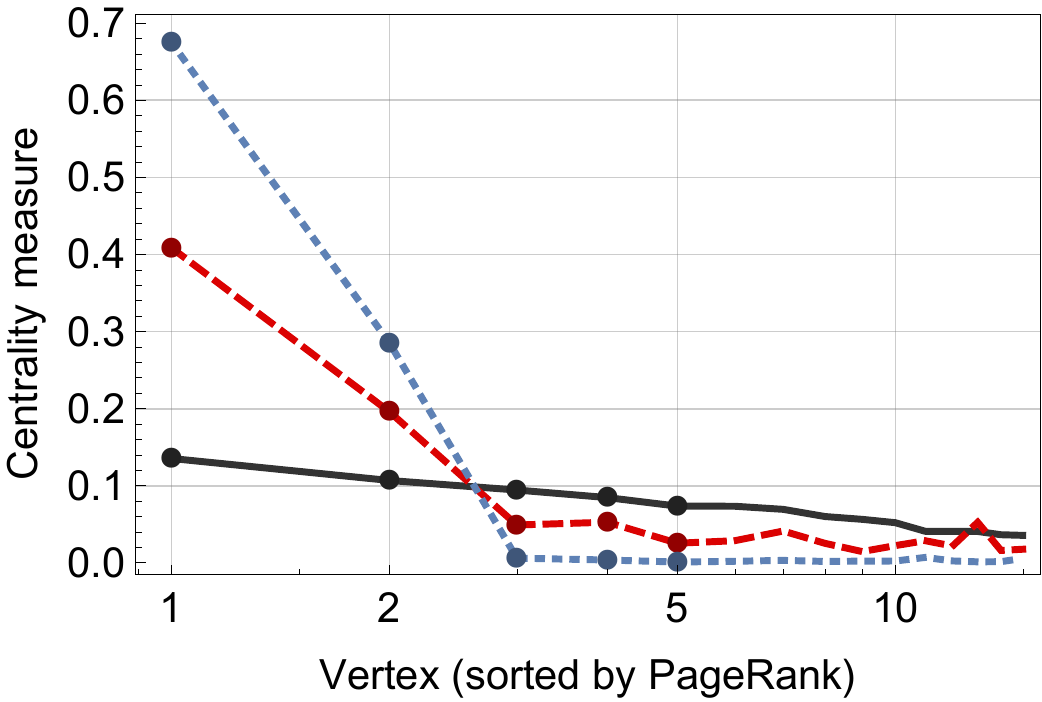}}
    \caption{\textbf{Top:} Randomly generated pseudo-Hermitian Erd\H{o}s-R\'enyi graph $G(15,0.3)$, with bidirectional edges allowed. \textbf{Above:} Centrality ranking of the vertices, ordered from most central to least central as per the classical PageRank (black). This is compared to the non-unitary CTQW (blue, dotted) and pseudo-Hermitian $\eta$-CTQW (red, dashed).}
    \label{fig:ernx}
\end{figure}

Our second approach to generating pseudo-Hermitian directed Erd\H{o}s-R\'enyi networks was motivated by computational constraints with using NetworkX, and a desire to generate larger pseudo-Hermitian graphs in a slightly more systematic way. Here, we first create an undirected graph of $N$ vertices with edges given by Bernoulli distribution with probability $p$. We then upper triangulize the resulting adjacency matrix, by setting everything below the diagonal to zero; in effect, imbuing direction on every edge in a systematic fashion. By restricting the adjacency matrix to be triangular, it is trivial to see that the Hamiltonian will also be triangular --- with eigenvalues given by the diagonal elements of $H$, the set of vertex in-degrees:
\begin{align}\label{eq:eigsin}
	\lambda = \{\text{deg}^{-}(v_i)~|~i=1,\dots,N\}
\end{align}
where $\text{deg}^-(v_i)$ is a function returning the in-degree of vertex $v_i$. As such, we ensure a real eigenspectrum, and simply restrict our random graph generator to output graphs with diagonalisable Hamiltonians in order to guarantee pseudo-Hermiticity. Note that, as the adjacency matrix is triangular, all graphs in this ensemble are directed acyclic graphs, and thus the classical eigenvector centrality no longer produces useful results (it assigns all vertices a centrality measure of zero).

%
%
\begin{figure}
	\centering
	\subfigure{\includegraphics[scale=0.65]{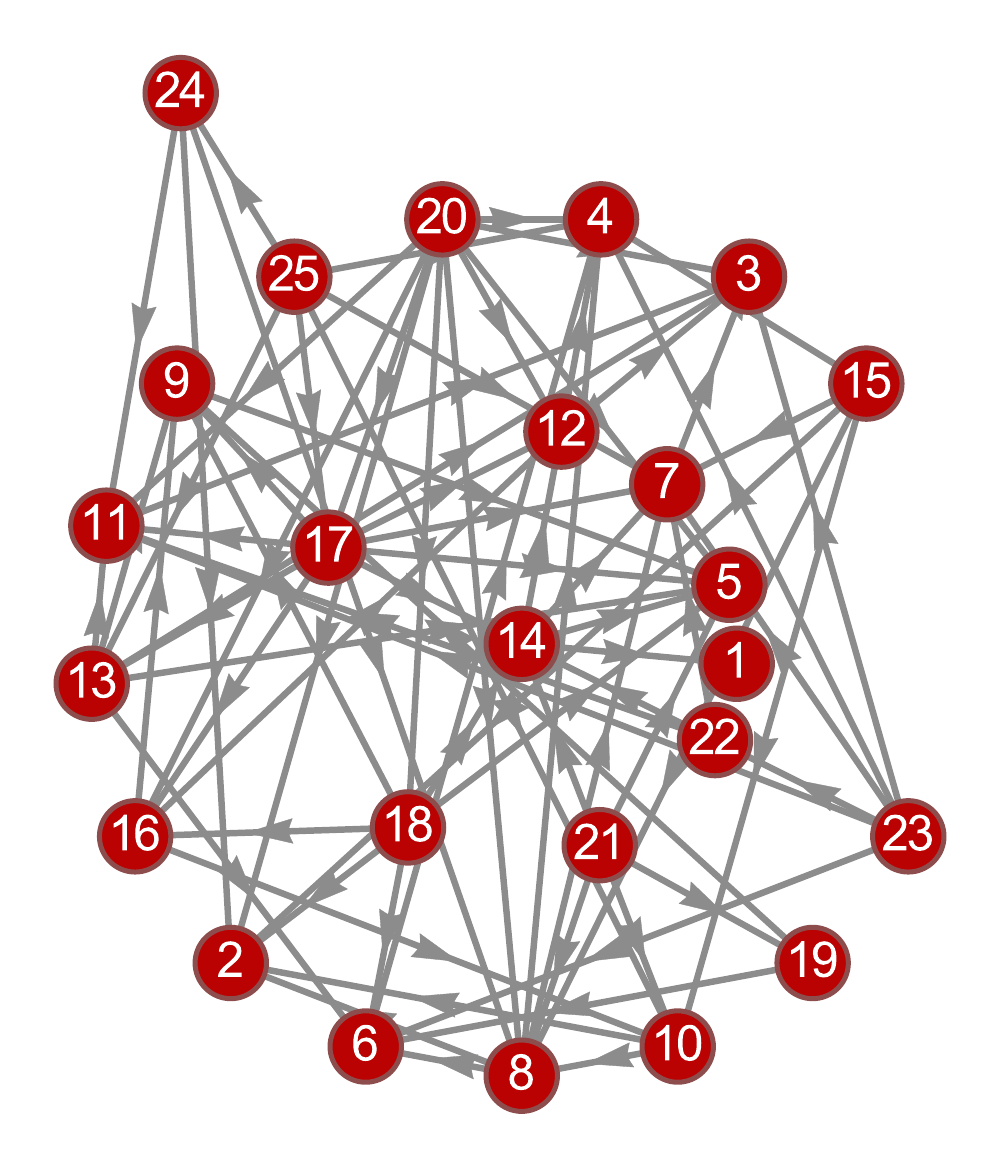}}
	\subfigure{\includegraphics[scale=0.7]{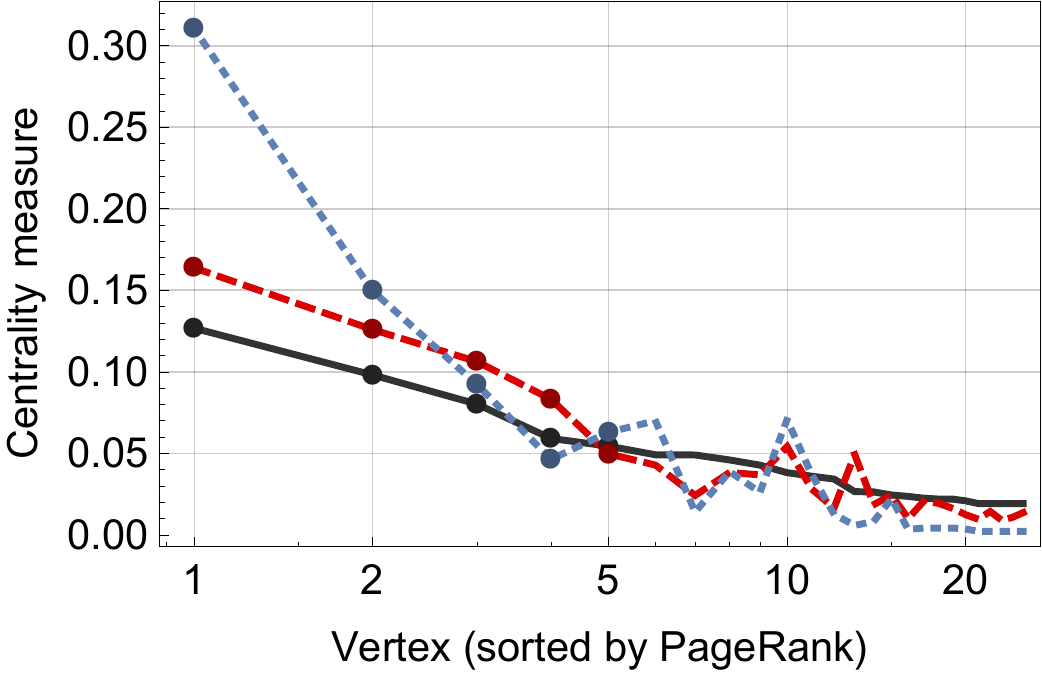}}
	\caption{\textbf{Top:} Randomly generated directed pseudo-Hermitian Erd\H{o}s-R\'enyi graph $G(25,0.3)$, where every edge is directed. \textbf{Above:} Centrality ranking of the vertices, ordered from most central to least central as per the classical PageRank (black).This is compared to the non-unitary CTQW (blue, dotted) and pseudo-Hermitian $\eta$-CTQW (red, dashed).}
	\label{fig:er3}
\end{figure}
An example of a randomly generated pseudo-Hermitian directed Erd\H{o}s-R\'enyi graph using this method is shown in \autoref{fig:er3}, generated with parameters $N=25$ and $p=0.3$, alongside the results of the PageRank and $\eta$-CTQW centrality measures. It can be seen that the $\eta$-CTQW and the PageRank strongly agree on the relative vertex rankings --- identically ranking the top four most central vertices, and satisfying the same general trend thereafter. This indicates that the $\eta$-CTQW continues to yield an admissible vertex centrality measure for larger, randomly generated graphs than in the previous section. 

In the above two methods of pseudo-Hermitian Erd\H{o}s-R\'enyi network generation, we are able to generate random graphs with bidirectional edges (allowing information to flow cyclically) and directed acyclic graphs respectively. Whilst only the former will permit use of the eigenvector centrality, the $\eta$-CTQW centrality algorithm may provide a usable centrality measure over both classes. To get a better understanding of how the $\eta$-CTQW centrality measure behaves over directed graph structures, we therefore introduce a third method of pseudo-Hermitian Erd\H{o}s-R\'enyi network generation, an intermediary between the two previously discussed classes. Here, we generate  pseudo-Hermitian Erd\H{o}s-R\'enyi networks as per our second (directed acyclic) approach, before introducing one bidrectional edge to the structure. Due to this addition, the overall graph is no longer directed acyclic, however all but one vertex form a directed acyclic subgraph.

Scale-free networks, compared to Erd\H{o}s-R\'enyi networks, exhibit a power law degree distribution of the form $P(k)\sim k^{-\gamma}$, due to a few very strongly connected vertices or `hubs' --- with a majority of vertices in the structure having significantly lower degree \cite{barabasi1999,barabasi2000}. As such, this makes them well suited to modelling a wide array of physical systems and networks with similar characteristics, for example power grids, the World Wide Web, social networks, and biochemical molecules \cite{albert2002,song2005}. To generate random pseudo-Hermitian scale-free graphs, we make use of the directed Barab\'asi-Albert algorithm: at each time-step, a vertex with $m$ directed edges is introduced to the system, and preferentially attached to existing vertices with higher degrees (with probability of being connected to vertex $i$ given by $p_i=k_i/\sum_{j}k_j$). This process continues until we have a graph containing the required number of vertices. 

A fortunate side effect of the directed Barab\'asi-Albert algorithm is that if we choose all $m$ edges introduced with each additional vertex to be \textit{inward}-pointing edges (resulting in $\text{deg}^{-}(v_i)=m~\forall i$), then the graph is necessarily lower triangular, leading to a Hamiltonian with real eigenspectrum as given by \autoref{eq:eigsin}. Similarly, if we choose all $m$ edges introduced with each vertex to be \textit{outward}-pointing edges (resulting in $\text{deg}^{+}(v_i)=m~\forall i$), the Hamiltonian will be upper-triangular and \autoref{eq:eigsin} continues to hold. Thus, like the Erd\H{o}s-R\'enyi case described previously, to ensure pseudo-Hermiticity we simply ensure the resulting randomly generated scale-free Hamiltonian is diagonalizable. As before, the directed Barab\'asi-Albert algorithm leads to the generation of directed acyclic graphs.
\begin{figure}
	\centering
	\subfigure{\includegraphics[scale=0.65]{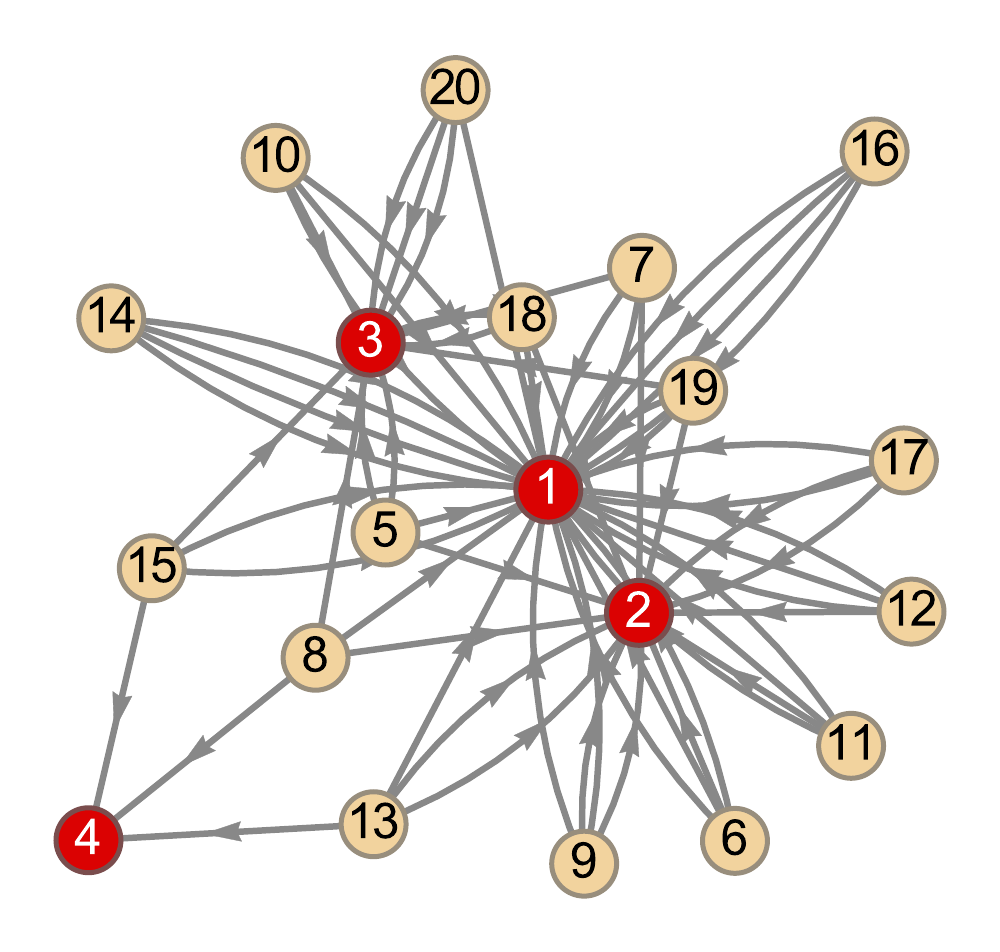}}
	\subfigure{\includegraphics[scale=0.7]{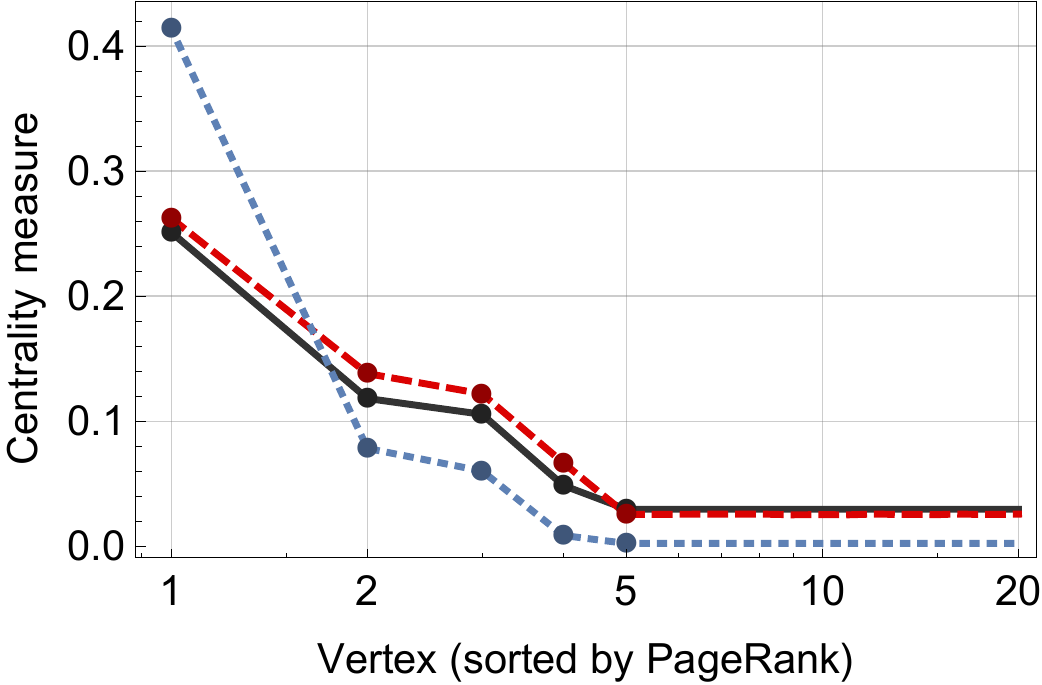}}
	\caption{\textbf{Top:} Randomly generated 20 vertex pseudo-Hermitian graph, with scale-free in-degree distribution where $m=4$. The more highly connected `hubs' (vertices with higher in-degree) are labelled in red. \textbf{Above:} Centrality ranking of the vertices, ordered from most central to least central as per the classical PageRank (black). This is compared to the non-unitary CTQW (blue, dotted) and the pseudo-Hermitian $\eta$-CTQW (red, dashed).}
	\label{fig:sf4}
\end{figure}
\begin{figure}
	\centering
	\subfigure{\includegraphics[scale=0.8]{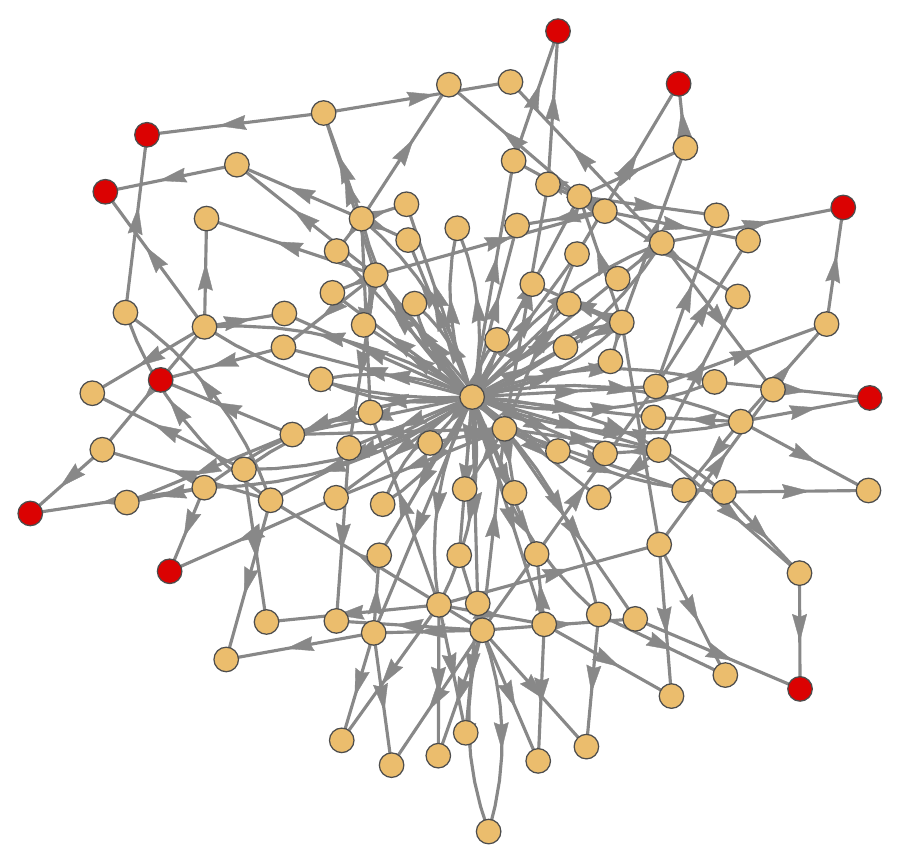}}
	\subfigure{\includegraphics[scale=0.7]{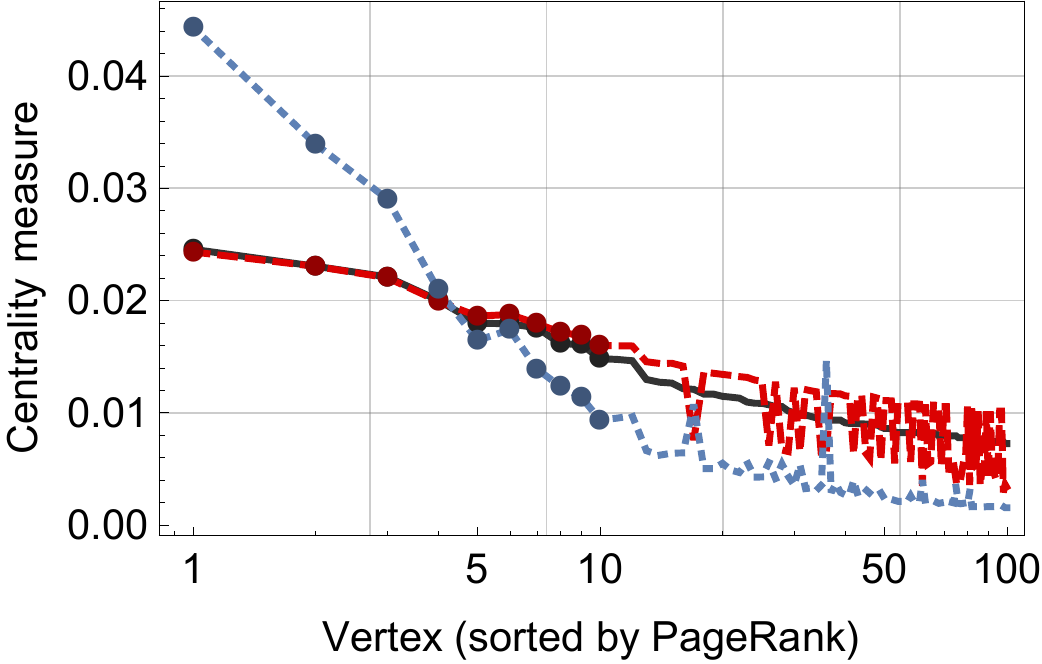}}
	\caption{\textbf{Top:} Randomly generated 100 vertex pseudo-Hermitian graph, with scale-free out-degree distribution where $m=3$. The top 10 vertices where network flow is likely to accumulate are labelled in red. \textbf{Above:} Centrality ranking of the vertices, ordered from most central to least central as per the classical PageRank (black). This is compared to the non-unitary CTQW (blue, dotted) and the pseudo-Hermitian $\eta$-CTQW (red, dashed).}
	\label{fig:100graph}
\end{figure}

\autoref{fig:sf4} shows a pseudo-Hermitian directed graph constructed via the Barab\'asi-Albert algorithm with parameters $N=25$, $m=4$, such that the in-degree vertex distribution is scale-free. By examining the classical PageRank and $\eta$-CTQW centrality measures, we see that they provide identical rankings for all 20 vertices, correctly picking out and ordering the four `hubs' (marked in red) with larger in-degree. Meanwhile, in \autoref{fig:100graph} we have a pseudo-Hermitian directed graph constructed via the Barab\'asi-Albert algorithm with parameters $N=100$, $m=3$, such that the \textit{out-degree} is scale-free, and constant in-degree of $3$. Again, the PageRank and $\eta$-CTQW display a high correlation, with the ranking of the top four most central vertices identical. However, in this case a subtlety must be addressed --- the PageRank algorithm is known to correlate with in-degree \cite{litvak2007,aiello2008}, as is the $\eta$-CTQW scheme by construction of the Hamiltonian in \autoref{eq:hamiltonian}. Hence, rather than assigning higher measures to vertices that are out-degree `hubs', both algorithms are preferentially selecting top-ranked vertices based on in-degree distribution. These correspond to the vertices at which the probability flow of a random walk is likely to accumulate after significant time.

\subsection{Statistical analysis}

\begin{figure*}
    \centering
    \subfigure[]{\label{subfig:ERNXavg}\includegraphics[scale=0.7]{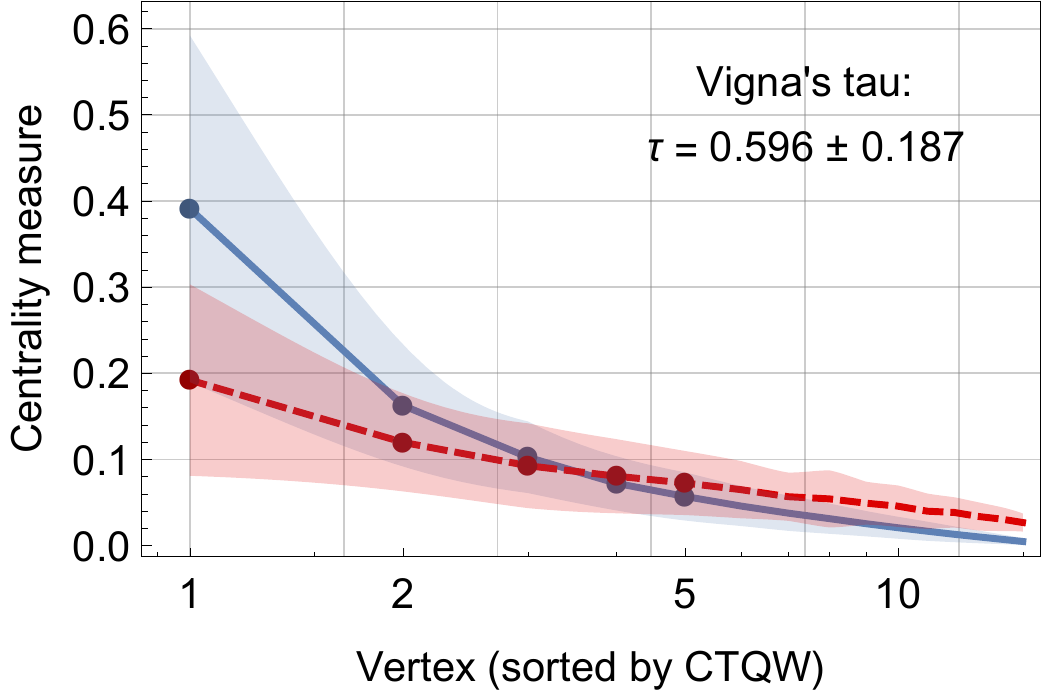}\hspace{1cm}\includegraphics[scale=0.7]{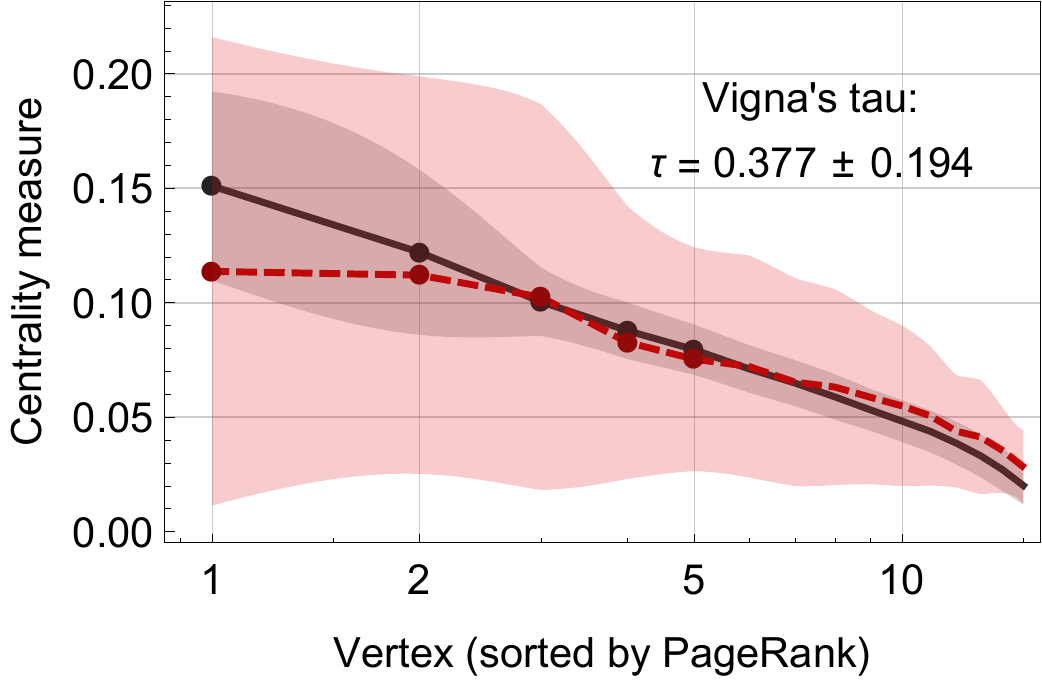}}
    \subfigure[]{\label{subfig:ERAddedDAGavg}\includegraphics[scale=0.7]{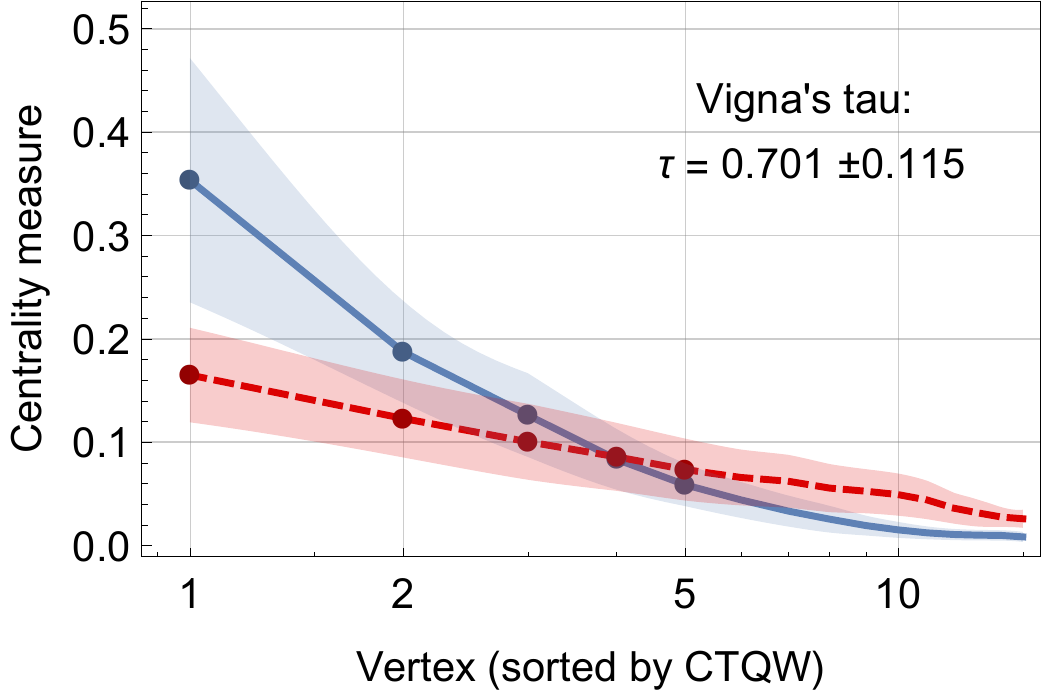}\hspace{1cm}\includegraphics[scale=0.7]{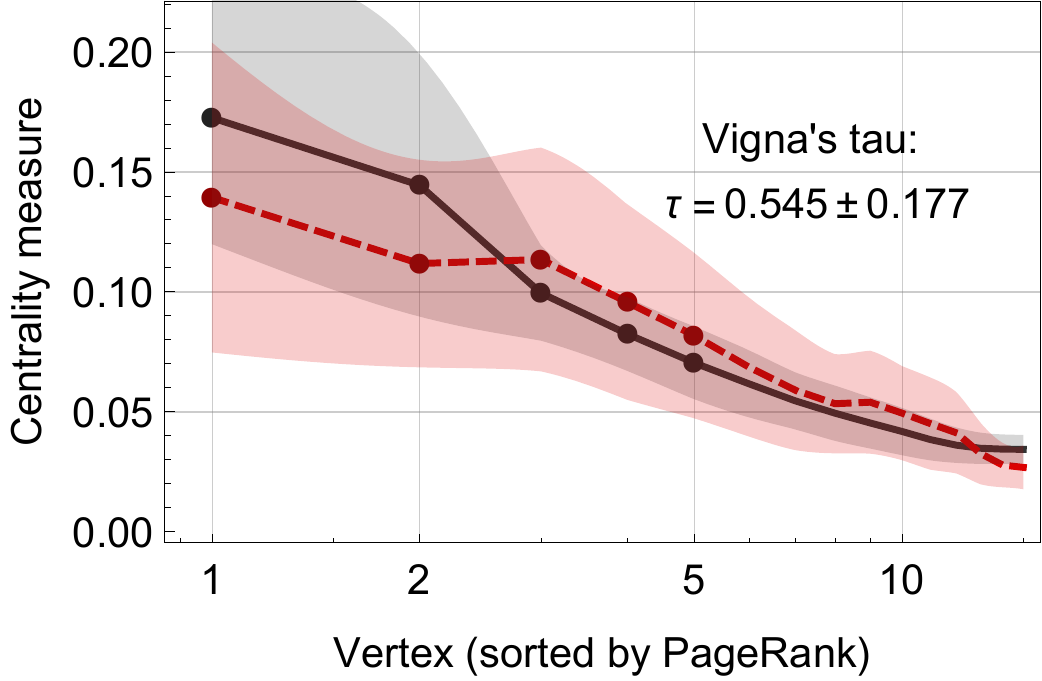}}
    \subfigure[]{\label{subfig:ER3avg}\includegraphics[scale=0.7]{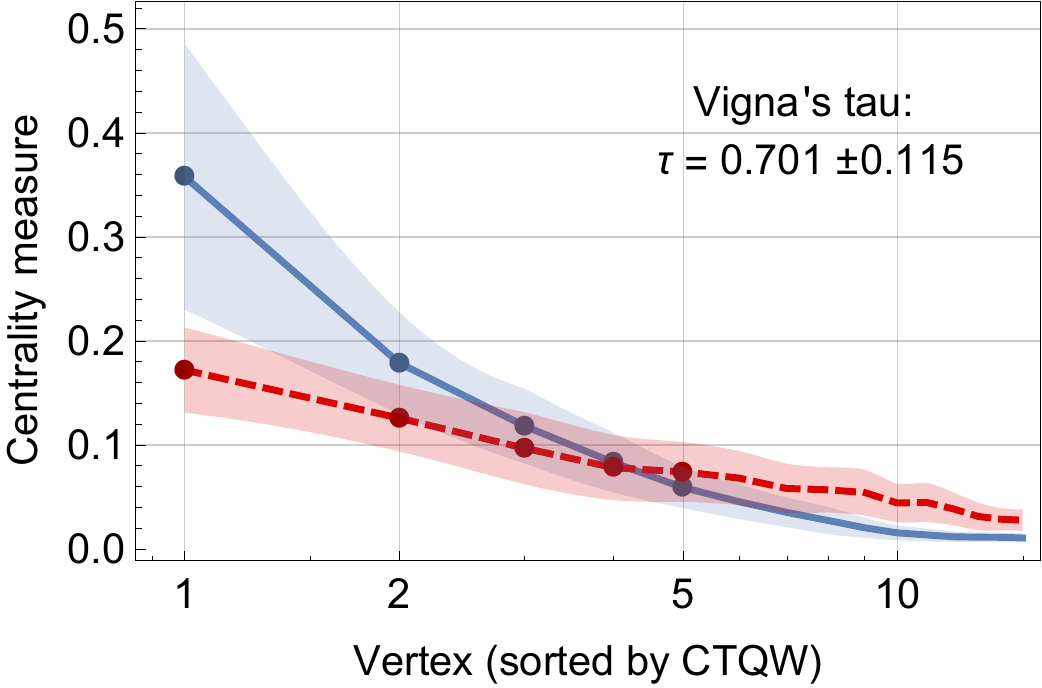}\hspace{1cm}\includegraphics[scale=0.7]{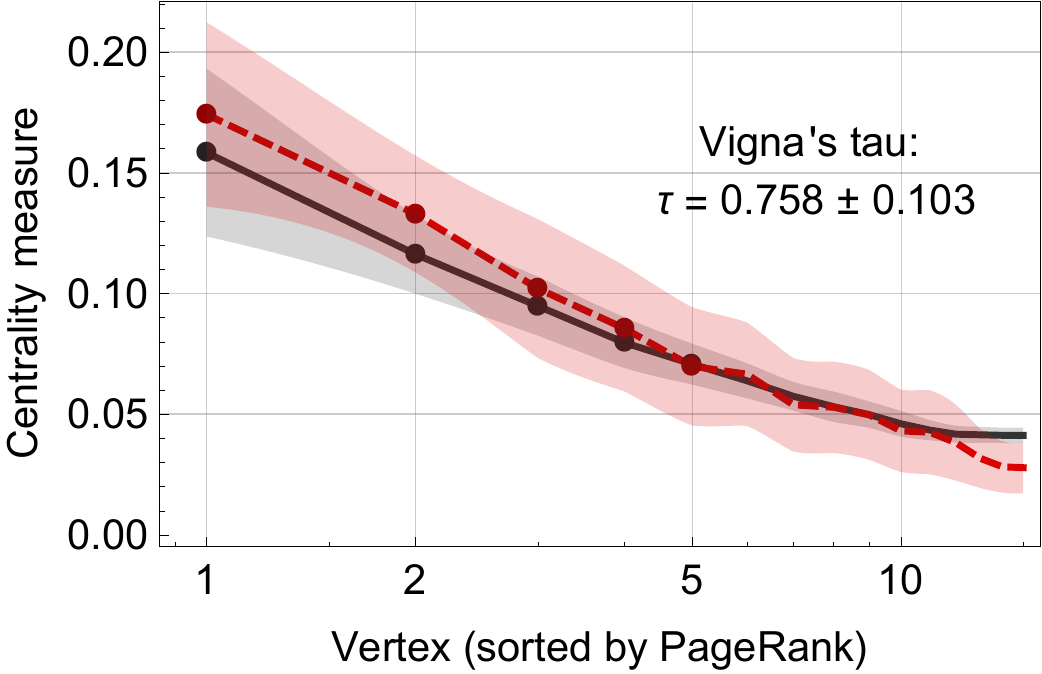}}
    \caption{Centrality measure values for the pseudo-Hermitian $\eta$-CTQW (red, dashed), compared against the non-unitary CTQW ranking (blue), and PageRank ranking (black), averaged over an ensemble of (a) 300 Erd\H{o}s-R\'enyi graphs with random bidirectional edges, (b) 100 Erd\H{o}s-R\'enyi graphs with one bidirectional edge, (c) 100 Erd\H{o}s-R\'enyi directed acyclic graphs}
    \label{fig:VCavgER}
\end{figure*}
\begin{figure*}
    \centering
    \subfigure[]{\label{subfig:ISavg}\includegraphics[scale=0.7]{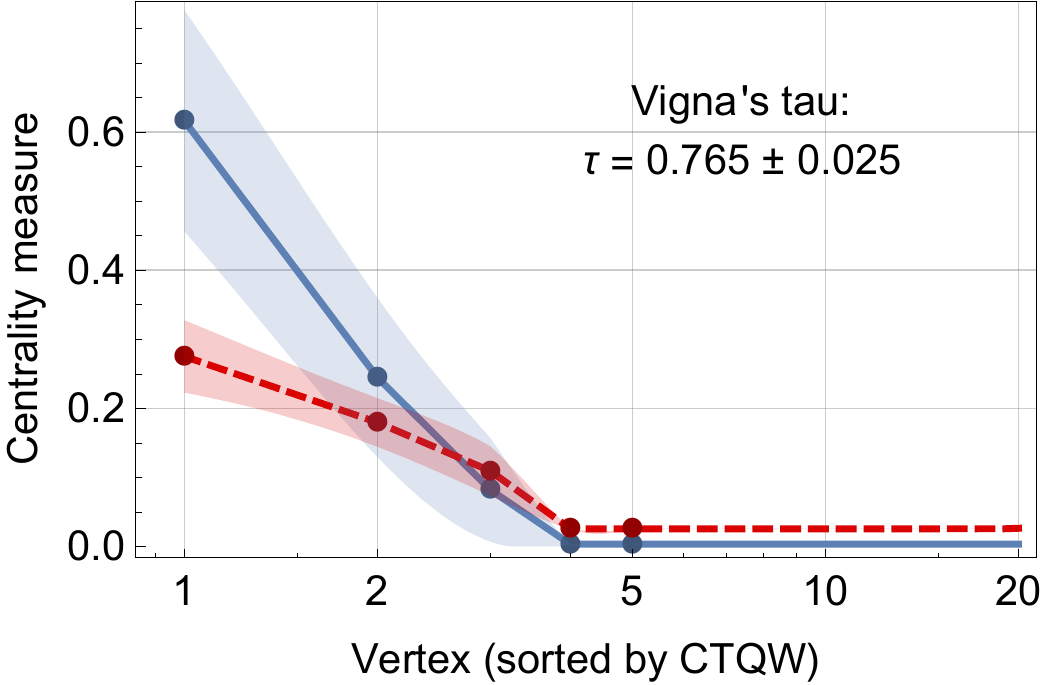}\hspace{1cm}\includegraphics[scale=0.7]{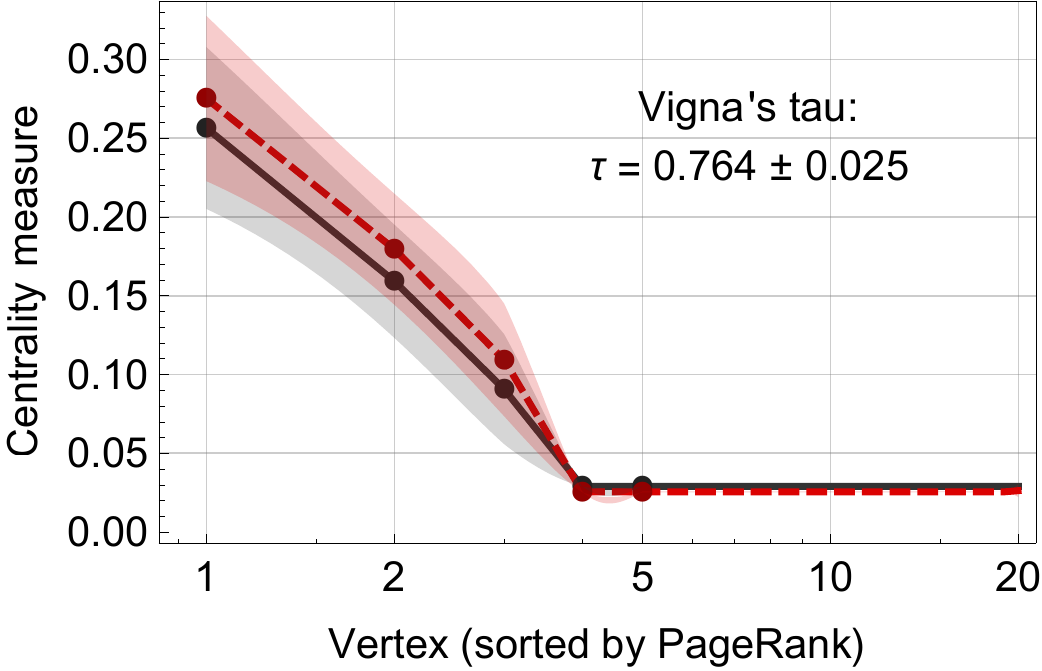}}
    \subfigure[]{\label{subfig:OSavg}\includegraphics[scale=0.7]{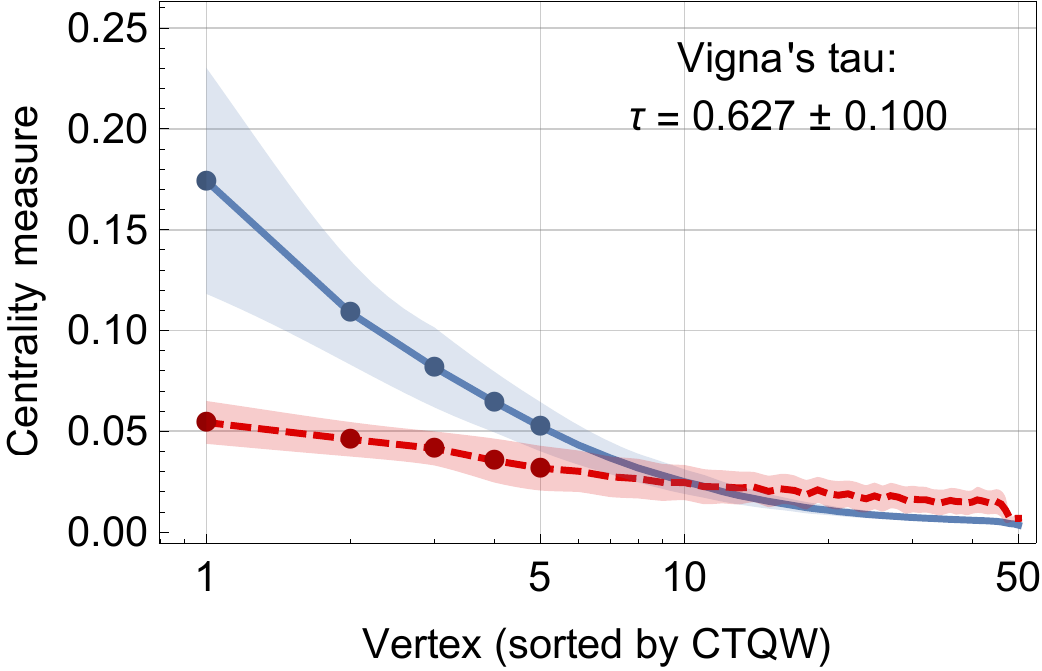}\hspace{1cm}\includegraphics[scale=0.7]{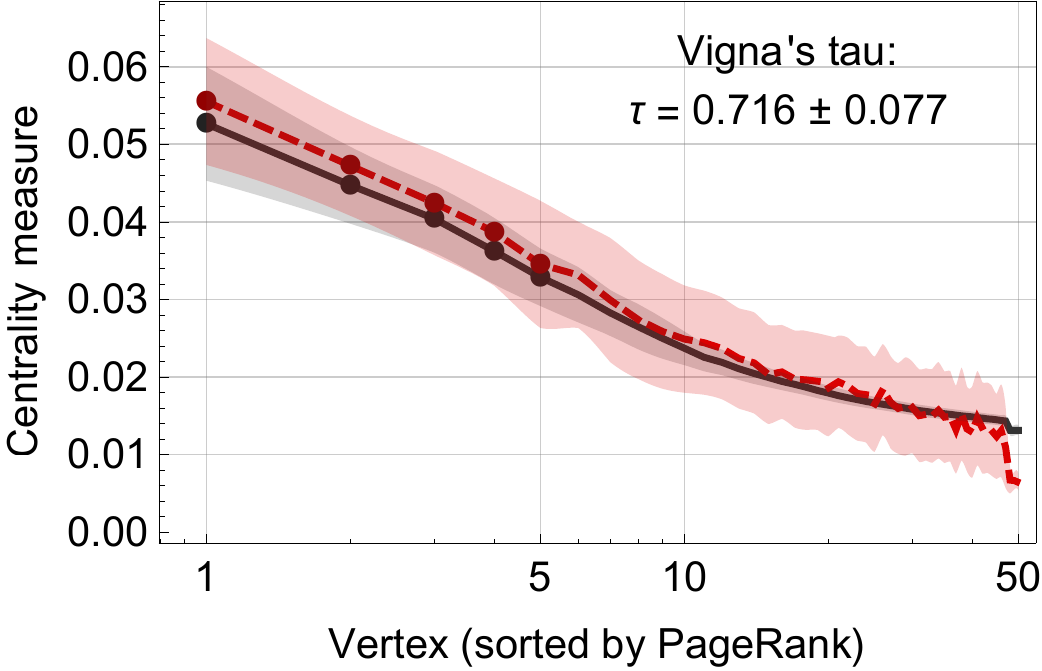}}
    \caption{Centrality measure values for the pseudo-Hermitian $\eta$-CTQW (red, dashed), compared against the non-unitary CTQW ranking (blue), and PageRank ranking (black), averaged over an ensemble of (a) 100 directed in-degree scale-free graphs, and (b) 100 directed out-degree scale-free graphs. The shaded areas represent the region within one standard deviation of the mean.}
    \label{fig:VCavgSF}
\end{figure*}
So far, we have considered particular Erd\H{o}s-R\'enyi and scale-free randomly generated graphs --- to explore how the pseudo-Hermitian CTQW centrality scheme behaves in general, it is pertinent to undertake a statistical analysis of an ensemble of random graphs. Ensembles of 300 random Erd\H{o}s-R\'enyi ($N=25$, $p=0.3$, random bidirectional edges permitted), 100 random Erd\H{o}s-R\'enyi ($N=25$, $p=0.3$, one bidirectional edge permitted), 100 random Erd\H{o}s-R\'enyi ($N=25$, $p=0.3$, directed acyclic), 100 random in-degree scale-free ($N=20$, $m=3$), and 100 random out-degree scale-free ($N=40$, $m=3$) were generated, and the PageRank, non-unitary CTQW, and $\eta$-CTQW vertex ranking determined for each graph. The mean and standard deviation of these centralities are plotted in \autoref{fig:VCavgER} (for Erd\H{o}s-R\'enyi ensembles) and \autoref{fig:VCavgSF} (for scale free ensembles), with the $\eta$-CTQW compared to both the non-unitary CTQW and classical PageRank. Furthermore, Vigna's $\tau$ rank correlation coefficient has been averaged across the ensemble, and is displayed on each plot.

Studying the results of \autoref{fig:VCavgER} and \autoref{fig:VCavgSF}, we may draw several conclusions. Firstly, the $\eta$-CTQW continues to reflect the directed structure of the network, agreeing with the non-unitary CTQW across all ensembles on the top 5 ranked vertices. This agreement is similarly reflected in Vigna's tau correlation coefficient, with $\tau\geq\sim0.6$ for every ensemble --- with lower values perhaps due to small discrepancies for lower ranked vertices.

Still, this statistical analysis has its drawbacks. The shaded areas, representing one standard deviation from the mean centrality values, indicate general ranking agreement across an ensemble \textit{only} when narrow enough and with a steep enough gradient such that each consecutive point, when moved upward/downward by one standard deviation, does not cause a swap in ranking (e.g. \autoref{subfig:ISavg}). Further, the converse is not true --- a large standard deviation does not imply a lack of agreement in ranking. In fact, two centrality measures could produce the exact same ranking across an entire ensemble, yet one measure might  simply have a greater variance in the values it assigns to the vertices. Similarly, Vigna's $\tau$ correlation coefficient, whilst a better indicator of overall rank agreement, continues to suffer from the fact that small discrepancies in ranking of lower-ranked vertices negatively affect the coefficient value. Thus, whilst these approaches might be useful in determining correlation between various centrality measures, they distract from the main question: how frequently do two centrality measures agree on the $k$ top-most ranked vertices?

In order to answer this quantitatively, we employ the Jaccard measure of set similarity \cite{maiya2010}. This provides an indicator of how well each centrality measure is able to determine the identity of the top $k$ highest centrality individuals. Firstly, for each graph, unordered sets containing the $n$ most central vertices according to each measure were compared --- the fraction of matching vertices providing a quantitative value for the agreement between the two measures. Finally, these were averaged over the entire ensemble, providing a general measure of the agreement between the PageRank and the $\eta$-CTQW, with uncertainty approximated by calculating the Agresti-Coull 95\% confidence interval \cite{agresti1998}.
\begin{figure}[h!]
	\centering
	\includegraphics[scale=0.9]{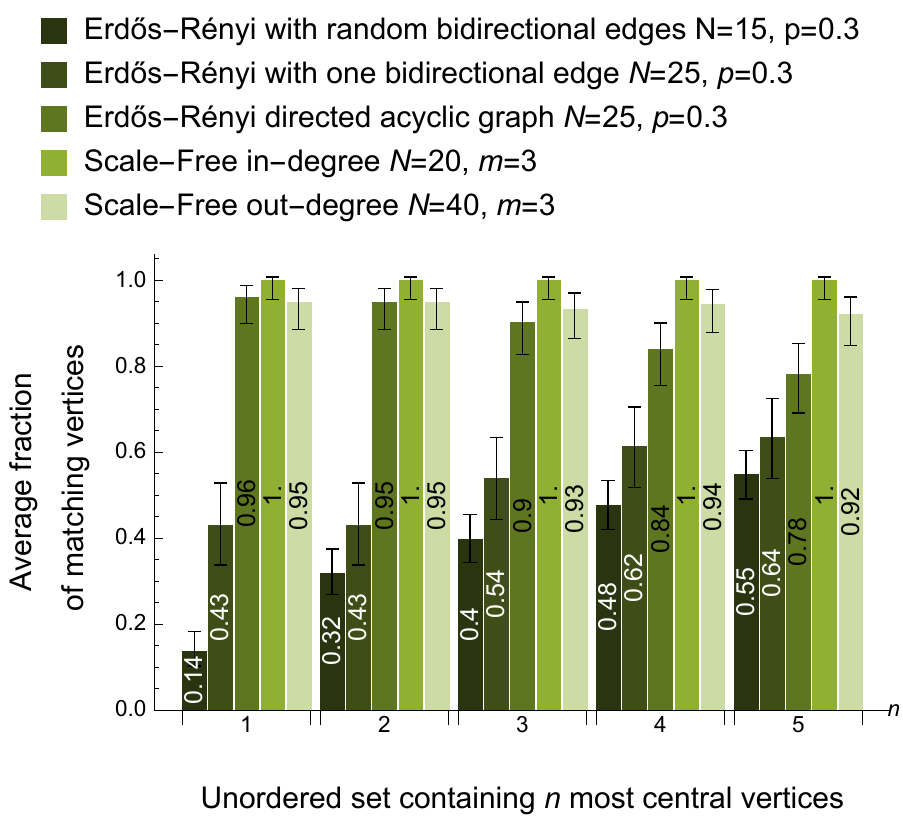}
	\caption{Chart showing the Jaccard set similarity between the classical PageRank algorithm and the $\eta$-CTQW centrality scheme for an ensemble of 300 directed Erd\H{o}s-R\'enyi graphs with bidirectional edges permitted, 100 directed Erd\H{o}s-R\'enyi graphs with unidirected edges, 100 directed in-degree scale-free graphs, and 100 directed out-degree scale-free graphs. Each bar represents the unordered set containing the $n$ most central vertices as determined by the PageRank and $\eta$-CTQW scheme, whilst the vertical axis gives the average fraction of matching vertices between these two sets. The error bars indicate the Agresti-Coull 95\% confidence interval.}
	\label{fig:fracmatch}
\end{figure}
The results of the statistical analysis are presented in \autoref{fig:fracmatch}. When considering just the most central vertex, the PageRank and $\eta$-CTQW are in excellent agreement in the case of the directed acyclic Erd\H{o}s-R\'enyi and scale-free ensembles, ranging from 95\% to 100\% agreement. As the number of vertices compared increases, there is a small decrease in the Jaccard set similarity, with all three ensembles of random graphs exhibiting agreement factors in the range of 90\% for the top two and three most central vertices. By the time we consider five vertices, scale-free networks retain an excellent agreement of 100\% and 92\%, whilst the directed acyclic Erd\H{o}s-R\'enyi ensemble exhibits a reasonably good agreement factor of 78\%. These trends can be partially explained by considering the behaviour of the degree distributions:

\begin{itemize}
\item Erd\H{o}s-R\'enyi networks, with a majority of vertices having degree close to the mean, generally results in the highest ranked vertices having similar centrality measures. As such, beyond the top three, small variations in the PageRank and $\eta$-CTQW vertex ordering appear, leading to discrepancies.
\item In-degree scale-free networks, with a small number of highly connected vertices, should easily distinguish these vertices (the `hubs') as most central to the network. Beyond the hubs, the power law characteristic results in the majority of remaining vertices having similar degree --- leading to small variations in vertex ordering, and thus the discrepancies observed between the PageRank and $\eta$-CTQW as more vertices are compared.
\end{itemize}

However, comparing the $\eta$-CTQW and PageRank for the non-directed acyclic Erd\H{o}s-R\'enyi ensembles (those with bidirectionality of edges permitted) we see a significant reduction in the agreement of the top 5 vertices. For instance, in \autoref{fig:fracmatch}, it can be seen that top-most vertex Jaccard set similarity between the PageRank and $\eta$-CTQW on the ensemble with one permitted bidirectional edge is 43\%; this drops to 13\% when random bidirectional edges are permitted. This could be due to a multitude of factors:
\begin{itemize}
	\item the PageRank might provide a significantly different rank to other classical measures, which the $\eta$-CTQW is more inclined to agree with --- this difference may be magnified on non-directed acyclic graphs;
	\item localisation of the $\eta$-CTQW may be occurring, due to either classical effects \cite{martin2014,pastor-satorras2016} or quantum effects (Anderson localisation).
\end{itemize}
Interestingly, the $\eta$-CTQW centrality measure appears to allows us to apply an eigenvector-like quantum centrality algorithm that agrees readily with the classical PageRank on directed acyclic graphs --- on which the eigenvector centrality provides inconclusive results --- whilst failing to agree with PageRank on non-directed acyclic graphs.

Ultimately, whatever the reason, further investigation is required to determine the likely cause of the discrepancy. Note that this is not a negative result per se --- depending on the model represented by the graph structure, the $\eta$-CTQW could be providing a better result of marking influential and central nodes. However, this analysis is beyond the scope of this paper, and is reserved for future research. Nevertheless, the results presented here show that the $\eta$-CTQW provides centrality rankings for several classes of randomly generated graphs that are consistent with the classical PageRank algorithm.

\section{conclusion}
\label{sec:conc}
In this paper, we have introduced and expanded a framework for continuous-time quantum walks on directed graphs, by utilising PT-symmetry. In the case of interdependent networks of directed graphs, a sufficient condition for ensuring PT-symmetry was detailed, and the directed walk formalism was shown to be equivalent to simulating a continuous-time quantum walker on an undirected, weighted, complete graph with self-loops. This may potentially lead to easily-implementable experimental directed continuous-time quantum walks.

Finally, we have introduced a quantum scheme for centrality testing on directed graphs, by utilising PT-symmetric continuous-time quantum walks --- unlike other directed quantum-walk based centrality-measures, our method does not require expanding the Hilbert space to ensure unitary behaviour. A statistical analysis was performed, confirming the CTQW centrality measure proposed here is consistent with classical centrality measures for various classes of randomly generated directed acyclic graphs.


Preliminary results on 4-vertex pseudo-Hermitian directed graphs have shown that the CTQW centrality ranking is able to distinguish non-equivalent sets of vertices that the classical PageRank cannot. This is likely due to the CTQW providing an eigenvector-like centrality measure in the quantum regime; calculating the rank correlation coefficients supports this interpretation. However, further work is required to fully understand the distinguishing power of the pseudo-Hermitian CTQW centrality measure.

Quantum walks remain an important physical tool, linking the fields of information theory, quantum computation, and complex quantum dynamical modelling. Following on from this work, we aim to utilise the PT-symmetric CTQW framework to model and simulate behviour in physical biochemical systems, such as electron or excitontransport. Future work will also involve exploring methods of implementing the PT-symmetric CTQW centrality scheme on physical systems.

\section{Acknowledgements}
The authors would like to thank Thomas Loke, Yogesh Joglekar, Ping Xu, and Yongping Zhang for valuable discussions regarding PT-symmetry and interdependent networks. J. A. Izaac would like to thank the Hackett foundation and The University of Western Australia for financial support.

\bibliography{PT}

\begin{thebibliography}{82}%
\makeatletter
\providecommand \@ifxundefined [1]{%
 \@ifx{#1\undefined}
}%
\providecommand \@ifnum [1]{%
 \ifnum #1\expandafter \@firstoftwo
 \else \expandafter \@secondoftwo
 \fi
}%
\providecommand \@ifx [1]{%
 \ifx #1\expandafter \@firstoftwo
 \else \expandafter \@secondoftwo
 \fi
}%
\providecommand \natexlab [1]{#1}%
\providecommand \enquote  [1]{``#1''}%
\providecommand \bibnamefont  [1]{#1}%
\providecommand \bibfnamefont [1]{#1}%
\providecommand \citenamefont [1]{#1}%
\providecommand \href@noop [0]{\@secondoftwo}%
\providecommand \href [0]{\begingroup \@sanitize@url \@href}%
\providecommand \@href[1]{\@@startlink{#1}\@@href}%
\providecommand \@@href[1]{\endgroup#1\@@endlink}%
\providecommand \@sanitize@url [0]{\catcode `\\12\catcode `\$12\catcode
  `\&12\catcode `\#12\catcode `\^12\catcode `\_12\catcode `\%12\relax}%
\providecommand \@@startlink[1]{}%
\providecommand \@@endlink[0]{}%
\providecommand \url  [0]{\begingroup\@sanitize@url \@url }%
\providecommand \@url [1]{\endgroup\@href {#1}{\urlprefix }}%
\providecommand \urlprefix  [0]{URL }%
\providecommand \Eprint [0]{\href }%
\providecommand \doibase [0]{http://dx.doi.org/}%
\providecommand \selectlanguage [0]{\@gobble}%
\providecommand \bibinfo  [0]{\@secondoftwo}%
\providecommand \bibfield  [0]{\@secondoftwo}%
\providecommand \translation [1]{[#1]}%
\providecommand \BibitemOpen [0]{}%
\providecommand \bibitemStop [0]{}%
\providecommand \bibitemNoStop [0]{.\EOS\space}%
\providecommand \EOS [0]{\spacefactor3000\relax}%
\providecommand \BibitemShut  [1]{\csname bibitem#1\endcsname}%
\let\auto@bib@innerbib\@empty
\bibitem [{\citenamefont {Aharonov}\ \emph {et~al.}(1993)\citenamefont
  {Aharonov}, \citenamefont {Davidovich},\ and\ \citenamefont
  {Zagury}}]{aharonov1993}%
  \BibitemOpen
  \bibfield  {author} {\bibinfo {author} {\bibfnamefont {Y.}~\bibnamefont
  {Aharonov}}, \bibinfo {author} {\bibfnamefont {L.}~\bibnamefont
  {Davidovich}}, \ and\ \bibinfo {author} {\bibfnamefont {N.}~\bibnamefont
  {Zagury}},\ }\href {\doibase 10.1103/PhysRevA.48.1687} {\bibfield  {journal}
  {\bibinfo  {journal} {Physical Review A}\ }\textbf {\bibinfo {volume} {48}},\
  \bibinfo {pages} {1687{\textendash}1690} (\bibinfo {year}
  {1993})}\BibitemShut {NoStop}%
\bibitem [{\citenamefont {Wang}(2013)}]{wang2013}%
  \BibitemOpen
  \bibfield  {author} {\bibinfo {author} {\bibfnamefont {Q.-h.}\ \bibnamefont
  {Wang}},\ }\href {\doibase 10.1098/rsta.2012.0045} {\bibfield  {journal}
  {\bibinfo  {journal} {Philosophical Transactions of the Royal Society of
  London A: Mathematical, Physical and Engineering Sciences}\ }\textbf
  {\bibinfo {volume} {371}},\ \bibinfo {pages} {20120045} (\bibinfo {year}
  {2013})}\BibitemShut {NoStop}%
\bibitem [{\citenamefont {Berry}\ and\ \citenamefont {Wang}(2010)}]{berry2010}%
  \BibitemOpen
  \bibfield  {author} {\bibinfo {author} {\bibfnamefont {S.}~\bibnamefont
  {Berry}}\ and\ \bibinfo {author} {\bibfnamefont {J.}~\bibnamefont {Wang}},\
  }\href {\doibase 10.1103/PhysRevA.82.042333} {\bibfield  {journal} {\bibinfo
  {journal} {Physical Review A}\ }\textbf {\bibinfo {volume} {82}},\ \bibinfo
  {pages} {042333} (\bibinfo {year} {2010})}\BibitemShut {NoStop}%
\bibitem [{\citenamefont {Berry}\ and\ \citenamefont {Wang}(2011)}]{berry2011}%
  \BibitemOpen
  \bibfield  {author} {\bibinfo {author} {\bibfnamefont {S.~D.}\ \bibnamefont
  {Berry}}\ and\ \bibinfo {author} {\bibfnamefont {J.~B.}\ \bibnamefont
  {Wang}},\ }\href {\doibase 10.1103/PhysRevA.83.042317} {\bibfield  {journal}
  {\bibinfo  {journal} {Physical Review A}\ }\textbf {\bibinfo {volume} {83}},\
  \bibinfo {pages} {042317} (\bibinfo {year} {2011})}\BibitemShut {NoStop}%
\bibitem [{\citenamefont {Childs}\ and\ \citenamefont
  {Goldstone}(2004)}]{childs2004}%
  \BibitemOpen
  \bibfield  {author} {\bibinfo {author} {\bibfnamefont {A.~M.}\ \bibnamefont
  {Childs}}\ and\ \bibinfo {author} {\bibfnamefont {J.}~\bibnamefont
  {Goldstone}},\ }\href {http://pra.aps.org/abstract/PRA/v70/i2/e022314}
  {\bibfield  {journal} {\bibinfo  {journal} {Physical Review A}\ }\textbf
  {\bibinfo {volume} {70}},\ \bibinfo {pages} {022314} (\bibinfo {year}
  {2004})}\BibitemShut {NoStop}%
\bibitem [{\citenamefont {Douglas}\ and\ \citenamefont
  {Wang}(2008)}]{douglas2008}%
  \BibitemOpen
  \bibfield  {author} {\bibinfo {author} {\bibfnamefont {B.~L.}\ \bibnamefont
  {Douglas}}\ and\ \bibinfo {author} {\bibfnamefont {J.~B.}\ \bibnamefont
  {Wang}},\ }\href {\doibase 10.1088/1751-8113/41/7/075303} {\bibfield
  {journal} {\bibinfo  {journal} {Journal of Physics A: Mathematical and
  Theoretical}\ }\textbf {\bibinfo {volume} {41}},\ \bibinfo {pages} {075303}
  (\bibinfo {year} {2008})}\BibitemShut {NoStop}%
\bibitem [{\citenamefont {Childs}\ \emph {et~al.}(2003)\citenamefont {Childs},
  \citenamefont {Cleve}, \citenamefont {Deotto}, \citenamefont {Farhi},
  \citenamefont {Gutmann},\ and\ \citenamefont {Spielman}}]{childs2003}%
  \BibitemOpen
  \bibfield  {author} {\bibinfo {author} {\bibfnamefont {A.~M.}\ \bibnamefont
  {Childs}}, \bibinfo {author} {\bibfnamefont {R.}~\bibnamefont {Cleve}},
  \bibinfo {author} {\bibfnamefont {E.}~\bibnamefont {Deotto}}, \bibinfo
  {author} {\bibfnamefont {E.}~\bibnamefont {Farhi}}, \bibinfo {author}
  {\bibfnamefont {S.}~\bibnamefont {Gutmann}}, \ and\ \bibinfo {author}
  {\bibfnamefont {D.~A.}\ \bibnamefont {Spielman}},\ }in\ \href {\doibase
  10.1145/780542.780552} {\emph {\bibinfo {booktitle} {Proceedings of the
  thirty-fifth annual {ACM} symposium on Theory of computing}}},\ \bibinfo
  {series and number} {{STOC} '03}\ (\bibinfo  {publisher} {{ACM}},\ \bibinfo
  {address} {New York, {NY}, {USA}},\ \bibinfo {year} {2003})\ p.\ \bibinfo
  {pages} {59{\textendash}68}\BibitemShut {NoStop}%
\bibitem [{\citenamefont {Rudinger}\ \emph {et~al.}(2013)\citenamefont
  {Rudinger}, \citenamefont {Gamble}, \citenamefont {Bach}, \citenamefont
  {Friesen}, \citenamefont {Joynt},\ and\ \citenamefont
  {Coppersmith}}]{rudinger2013}%
  \BibitemOpen
  \bibfield  {author} {\bibinfo {author} {\bibfnamefont {K.}~\bibnamefont
  {Rudinger}}, \bibinfo {author} {\bibfnamefont {J.~K.}\ \bibnamefont
  {Gamble}}, \bibinfo {author} {\bibfnamefont {E.}~\bibnamefont {Bach}},
  \bibinfo {author} {\bibfnamefont {M.}~\bibnamefont {Friesen}}, \bibinfo
  {author} {\bibfnamefont {R.}~\bibnamefont {Joynt}}, \ and\ \bibinfo {author}
  {\bibfnamefont {S.~N.}\ \bibnamefont {Coppersmith}},\ }\href {\doibase
  10.1166/jctn.2013.3105} {\bibfield  {journal} {\bibinfo  {journal} {Journal
  of Computational and Theoretical Nanoscience}\ }\textbf {\bibinfo {volume}
  {10}},\ \bibinfo {pages} {1653} (\bibinfo {year} {2013})}\BibitemShut
  {NoStop}%
\bibitem [{\citenamefont {Mahasinghe}\ \emph {et~al.}(2015)\citenamefont
  {Mahasinghe}, \citenamefont {Izaac}, \citenamefont {Wang},\ and\
  \citenamefont {Wijerathna}}]{mahasinghe2015}%
  \BibitemOpen
  \bibfield  {author} {\bibinfo {author} {\bibfnamefont {A.}~\bibnamefont
  {Mahasinghe}}, \bibinfo {author} {\bibfnamefont {J.~A.}\ \bibnamefont
  {Izaac}}, \bibinfo {author} {\bibfnamefont {J.~B.}\ \bibnamefont {Wang}}, \
  and\ \bibinfo {author} {\bibfnamefont {J.~K.}\ \bibnamefont {Wijerathna}},\
  }\href {\doibase 10.1088/1751-8113/48/26/265301} {\bibfield  {journal}
  {\bibinfo  {journal} {Journal of Physics A: Mathematical and Theoretical}\
  }\textbf {\bibinfo {volume} {48}},\ \bibinfo {pages} {265301} (\bibinfo
  {year} {2015})}\BibitemShut {NoStop}%
\bibitem [{\citenamefont {Paparo}\ \emph {et~al.}(2013)\citenamefont {Paparo},
  \citenamefont {M\"{u}ller}, \citenamefont {Comellas},\ and\ \citenamefont
  {{Martin-Delgado}}}]{paparo2013}%
  \BibitemOpen
  \bibfield  {author} {\bibinfo {author} {\bibfnamefont {G.~D.}\ \bibnamefont
  {Paparo}}, \bibinfo {author} {\bibfnamefont {M.}~\bibnamefont {M\"{u}ller}},
  \bibinfo {author} {\bibfnamefont {F.}~\bibnamefont {Comellas}}, \ and\
  \bibinfo {author} {\bibfnamefont {M.~A.}\ \bibnamefont {{Martin-Delgado}}},\
  }\href {\doibase 10.1038/srep02773} {\bibfield  {journal} {\bibinfo
  {journal} {Scientific Reports}\ }\textbf {\bibinfo {volume} {3}},\ \bibinfo
  {pages} {02773} (\bibinfo {year} {2013})}\BibitemShut {NoStop}%
\bibitem [{\citenamefont {Sarovar}\ \emph {et~al.}(2010)\citenamefont
  {Sarovar}, \citenamefont {Ishizaki}, \citenamefont {Fleming},\ and\
  \citenamefont {Whaley}}]{sarovar2010}%
  \BibitemOpen
  \bibfield  {author} {\bibinfo {author} {\bibfnamefont {M.}~\bibnamefont
  {Sarovar}}, \bibinfo {author} {\bibfnamefont {A.}~\bibnamefont {Ishizaki}},
  \bibinfo {author} {\bibfnamefont {G.~R.}\ \bibnamefont {Fleming}}, \ and\
  \bibinfo {author} {\bibfnamefont {K.~B.}\ \bibnamefont {Whaley}},\ }\href
  {\doibase 10.1038/nphys1652} {\bibfield  {journal} {\bibinfo  {journal}
  {Nature Physics}\ }\textbf {\bibinfo {volume} {6}},\ \bibinfo {pages} {462}
  (\bibinfo {year} {2010})}\BibitemShut {NoStop}%
\bibitem [{\citenamefont {Engel}\ \emph {et~al.}(2007)\citenamefont {Engel},
  \citenamefont {Calhoun}, \citenamefont {Read}, \citenamefont {Ahn},
  \citenamefont {Man\v{c}al}, \citenamefont {Cheng}, \citenamefont
  {Blankenship},\ and\ \citenamefont {Fleming}}]{engel2007}%
  \BibitemOpen
  \bibfield  {author} {\bibinfo {author} {\bibfnamefont {G.~S.}\ \bibnamefont
  {Engel}}, \bibinfo {author} {\bibfnamefont {T.~R.}\ \bibnamefont {Calhoun}},
  \bibinfo {author} {\bibfnamefont {E.~L.}\ \bibnamefont {Read}}, \bibinfo
  {author} {\bibfnamefont {T.}~\bibnamefont {Ahn}}, \bibinfo {author}
  {\bibfnamefont {T.}~\bibnamefont {Man\v{c}al}}, \bibinfo {author}
  {\bibfnamefont {Y.}~\bibnamefont {Cheng}}, \bibinfo {author} {\bibfnamefont
  {R.~E.}\ \bibnamefont {Blankenship}}, \ and\ \bibinfo {author} {\bibfnamefont
  {G.~R.}\ \bibnamefont {Fleming}},\ }\href {\doibase 10.1038/nature05678}
  {\bibfield  {journal} {\bibinfo  {journal} {Nature}\ }\textbf {\bibinfo
  {volume} {446}},\ \bibinfo {pages} {782} (\bibinfo {year}
  {2007})}\BibitemShut {NoStop}%
\bibitem [{\citenamefont {Mohseni}\ \emph {et~al.}(2008)\citenamefont
  {Mohseni}, \citenamefont {Rebentrost}, \citenamefont {Lloyd},\ and\
  \citenamefont {{Aspuru-Guzik}}}]{mohseni2008}%
  \BibitemOpen
  \bibfield  {author} {\bibinfo {author} {\bibfnamefont {M.}~\bibnamefont
  {Mohseni}}, \bibinfo {author} {\bibfnamefont {P.}~\bibnamefont {Rebentrost}},
  \bibinfo {author} {\bibfnamefont {S.}~\bibnamefont {Lloyd}}, \ and\ \bibinfo
  {author} {\bibfnamefont {A.}~\bibnamefont {{Aspuru-Guzik}}},\ }\href
  {\doibase doi:10.1063/1.3002335} {\bibfield  {journal} {\bibinfo  {journal}
  {The Journal of Chemical Physics}\ }\textbf {\bibinfo {volume} {129}},\
  \bibinfo {pages} {174106} (\bibinfo {year} {2008})}\BibitemShut {NoStop}%
\bibitem [{\citenamefont {Rebentrost}\ \emph {et~al.}(2009)\citenamefont
  {Rebentrost}, \citenamefont {Mohseni}, \citenamefont {Kassal}, \citenamefont
  {Lloyd},\ and\ \citenamefont {{Aspuru-Guzik}}}]{rebentrost2009}%
  \BibitemOpen
  \bibfield  {author} {\bibinfo {author} {\bibfnamefont {P.}~\bibnamefont
  {Rebentrost}}, \bibinfo {author} {\bibfnamefont {M.}~\bibnamefont {Mohseni}},
  \bibinfo {author} {\bibfnamefont {I.}~\bibnamefont {Kassal}}, \bibinfo
  {author} {\bibfnamefont {S.}~\bibnamefont {Lloyd}}, \ and\ \bibinfo {author}
  {\bibfnamefont {A.}~\bibnamefont {{Aspuru-Guzik}}},\ }\href {\doibase
  10.1088/1367-2630/11/3/033003} {\bibfield  {journal} {\bibinfo  {journal}
  {New Journal of Physics}\ }\textbf {\bibinfo {volume} {11}},\ \bibinfo
  {pages} {033003} (\bibinfo {year} {2009})}\BibitemShut {NoStop}%
\bibitem [{\citenamefont {Childs}(2009)}]{childs2009}%
  \BibitemOpen
  \bibfield  {author} {\bibinfo {author} {\bibfnamefont {A.~M.}\ \bibnamefont
  {Childs}},\ }\href {\doibase 10.1103/PhysRevLett.102.180501} {\bibfield
  {journal} {\bibinfo  {journal} {Physical Review Letters}\ }\textbf {\bibinfo
  {volume} {102}},\ \bibinfo {pages} {180501} (\bibinfo {year}
  {2009})}\BibitemShut {NoStop}%
\bibitem [{\citenamefont {Childs}\ \emph {et~al.}(2013)\citenamefont {Childs},
  \citenamefont {Gosset},\ and\ \citenamefont {Webb}}]{childs2013}%
  \BibitemOpen
  \bibfield  {author} {\bibinfo {author} {\bibfnamefont {A.~M.}\ \bibnamefont
  {Childs}}, \bibinfo {author} {\bibfnamefont {D.}~\bibnamefont {Gosset}}, \
  and\ \bibinfo {author} {\bibfnamefont {Z.}~\bibnamefont {Webb}},\ }\href
  {\doibase 10.1126/science.1229957} {\bibfield  {journal} {\bibinfo  {journal}
  {Science}\ }\textbf {\bibinfo {volume} {339}},\ \bibinfo {pages} {791}
  (\bibinfo {year} {2013})}\BibitemShut {NoStop}%
\bibitem [{\citenamefont {Loke}\ \emph {et~al.}(2017)\citenamefont {Loke},
  \citenamefont {Tang}, \citenamefont {Rodriguez}, \citenamefont {Small},\ and\
  \citenamefont {Wang}}]{loke2017}%
  \BibitemOpen
  \bibfield  {author} {\bibinfo {author} {\bibfnamefont {T.}~\bibnamefont
  {Loke}}, \bibinfo {author} {\bibfnamefont {J.~W.}\ \bibnamefont {Tang}},
  \bibinfo {author} {\bibfnamefont {J.}~\bibnamefont {Rodriguez}}, \bibinfo
  {author} {\bibfnamefont {M.}~\bibnamefont {Small}}, \ and\ \bibinfo {author}
  {\bibfnamefont {J.~B.}\ \bibnamefont {Wang}},\ }\href {\doibase
  10.1007/s11128-016-1456-z} {\bibfield  {journal} {\bibinfo  {journal}
  {Quantum Information Processing}\ }\textbf {\bibinfo {volume} {16}},\
  \bibinfo {pages} {25} (\bibinfo {year} {2017})}\BibitemShut {NoStop}%
\bibitem [{\citenamefont {Sinayskiy}\ and\ \citenamefont
  {Petruccione}(2013)}]{sinayskiy2013}%
  \BibitemOpen
  \bibfield  {author} {\bibinfo {author} {\bibfnamefont {I.}~\bibnamefont
  {Sinayskiy}}\ and\ \bibinfo {author} {\bibfnamefont {F.}~\bibnamefont
  {Petruccione}},\ }\href {\doibase 10.1088/1742-6596/442/1/012003} {\bibfield
  {journal} {\bibinfo  {journal} {Journal of Physics: Conference Series}\
  }\textbf {\bibinfo {volume} {442}},\ \bibinfo {pages} {012003} (\bibinfo
  {year} {2013})}\BibitemShut {NoStop}%
\bibitem [{\citenamefont {Bender}\ and\ \citenamefont
  {Boettcher}(1998)}]{bender1998}%
  \BibitemOpen
  \bibfield  {author} {\bibinfo {author} {\bibfnamefont {C.~M.}\ \bibnamefont
  {Bender}}\ and\ \bibinfo {author} {\bibfnamefont {S.}~\bibnamefont
  {Boettcher}},\ }\href {\doibase 10.1103/PhysRevLett.80.5243} {\bibfield
  {journal} {\bibinfo  {journal} {Physical Review Letters}\ }\textbf {\bibinfo
  {volume} {80}},\ \bibinfo {pages} {5243} (\bibinfo {year}
  {1998})}\BibitemShut {NoStop}%
\bibitem [{\citenamefont
  {Mostafazadeh}(2002{\natexlab{a}})}]{mostafazadeh2002}%
  \BibitemOpen
  \bibfield  {author} {\bibinfo {author} {\bibfnamefont {A.}~\bibnamefont
  {Mostafazadeh}},\ }\href {\doibase 10.1063/1.1418246} {\bibfield  {journal}
  {\bibinfo  {journal} {Journal of Mathematical Physics}\ }\textbf {\bibinfo
  {volume} {43}},\ \bibinfo {pages} {205} (\bibinfo {year}
  {2002}{\natexlab{a}})}\BibitemShut {NoStop}%
\bibitem [{\citenamefont
  {Mostafazadeh}(2002{\natexlab{b}})}]{mostafazadeh2002-a}%
  \BibitemOpen
  \bibfield  {author} {\bibinfo {author} {\bibfnamefont {A.}~\bibnamefont
  {Mostafazadeh}},\ }\href {\doibase 10.1063/1.1489072} {\bibfield  {journal}
  {\bibinfo  {journal} {Journal of Mathematical Physics}\ }\textbf {\bibinfo
  {volume} {43}},\ \bibinfo {pages} {3944} (\bibinfo {year}
  {2002}{\natexlab{b}})}\BibitemShut {NoStop}%
\bibitem [{\citenamefont {Mannheim}(2013)}]{mannheim2013}%
  \BibitemOpen
  \bibfield  {author} {\bibinfo {author} {\bibfnamefont {P.~D.}\ \bibnamefont
  {Mannheim}},\ }\href {\doibase 10.1098/rsta.2012.0060} {\bibfield  {journal}
  {\bibinfo  {journal} {Philosophical Transactions of the Royal Society of
  London A: Mathematical, Physical and Engineering Sciences}\ }\textbf
  {\bibinfo {volume} {371}},\ \bibinfo {pages} {20120060} (\bibinfo {year}
  {2013})}\BibitemShut {NoStop}%
\bibitem [{\citenamefont {Li}\ \emph {et~al.}(2014)\citenamefont {Li},
  \citenamefont {Miao},\ and\ \citenamefont {Xue}}]{li2014}%
  \BibitemOpen
  \bibfield  {author} {\bibinfo {author} {\bibfnamefont {J.}~\bibnamefont
  {Li}}, \bibinfo {author} {\bibfnamefont {Y.}~\bibnamefont {Miao}}, \ and\
  \bibinfo {author} {\bibfnamefont {Z.}~\bibnamefont {Xue}},\ }\href {\doibase
  10.1371/journal.pone.0097107} {\bibfield  {journal} {\bibinfo  {journal}
  {{PLoS} {ONE}}\ }\textbf {\bibinfo {volume} {9}},\ \bibinfo {pages} {e97107}
  (\bibinfo {year} {2014})}\BibitemShut {NoStop}%
\bibitem [{\citenamefont {{Basu-Mallick}}\ and\ \citenamefont
  {Kundu}(2000)}]{basu-mallick2000}%
  \BibitemOpen
  \bibfield  {author} {\bibinfo {author} {\bibfnamefont {B.}~\bibnamefont
  {{Basu-Mallick}}}\ and\ \bibinfo {author} {\bibfnamefont {A.}~\bibnamefont
  {Kundu}},\ }\href {\doibase 10.1103/PhysRevB.62.9927} {\bibfield  {journal}
  {\bibinfo  {journal} {Physical Review B}\ }\textbf {\bibinfo {volume} {62}},\
  \bibinfo {pages} {9927} (\bibinfo {year} {2000})}\BibitemShut {NoStop}%
\bibitem [{\citenamefont {Bender}(2004)}]{bender2004}%
  \BibitemOpen
  \bibfield  {author} {\bibinfo {author} {\bibfnamefont {C.~M.}\ \bibnamefont
  {Bender}},\ }\href {\doibase 10.1023/B:CJOP.0000014363.56526.41} {\bibfield
  {journal} {\bibinfo  {journal} {Czechoslovak Journal of Physics}\ }\textbf
  {\bibinfo {volume} {54}},\ \bibinfo {pages} {13} (\bibinfo {year}
  {2004})}\BibitemShut {NoStop}%
\bibitem [{\citenamefont {Bender}\ \emph {et~al.}(2005)\citenamefont {Bender},
  \citenamefont {Jones},\ and\ \citenamefont {Rivers}}]{bender2005}%
  \BibitemOpen
  \bibfield  {author} {\bibinfo {author} {\bibfnamefont {C.~M.}\ \bibnamefont
  {Bender}}, \bibinfo {author} {\bibfnamefont {H.~F.}\ \bibnamefont {Jones}}, \
  and\ \bibinfo {author} {\bibfnamefont {R.~J.}\ \bibnamefont {Rivers}},\
  }\href {\doibase 10.1016/j.physletb.2005.08.087} {\bibfield  {journal}
  {\bibinfo  {journal} {Physics Letters B}\ }\textbf {\bibinfo {volume}
  {625}},\ \bibinfo {pages} {333} (\bibinfo {year} {2005})}\BibitemShut
  {NoStop}%
\bibitem [{\citenamefont {Luo}\ \emph {et~al.}(2013)\citenamefont {Luo},
  \citenamefont {Huang}, \citenamefont {Zhong}, \citenamefont {Qin},
  \citenamefont {Xie}, \citenamefont {Kivshar},\ and\ \citenamefont
  {Lee}}]{luo2013}%
  \BibitemOpen
  \bibfield  {author} {\bibinfo {author} {\bibfnamefont {X.}~\bibnamefont
  {Luo}}, \bibinfo {author} {\bibfnamefont {J.}~\bibnamefont {Huang}}, \bibinfo
  {author} {\bibfnamefont {H.}~\bibnamefont {Zhong}}, \bibinfo {author}
  {\bibfnamefont {X.}~\bibnamefont {Qin}}, \bibinfo {author} {\bibfnamefont
  {Q.}~\bibnamefont {Xie}}, \bibinfo {author} {\bibfnamefont {Y.~S.}\
  \bibnamefont {Kivshar}}, \ and\ \bibinfo {author} {\bibfnamefont
  {C.}~\bibnamefont {Lee}},\ }\href {\doibase 10.1103/PhysRevLett.110.243902}
  {\bibfield  {journal} {\bibinfo  {journal} {Physical Review Letters}\
  }\textbf {\bibinfo {volume} {110}},\ \bibinfo {pages} {243902} (\bibinfo
  {year} {2013})}\BibitemShut {NoStop}%
\bibitem [{\citenamefont {Peng}\ \emph {et~al.}(2014)\citenamefont {Peng},
  \citenamefont {\"{O}zdemir}, \citenamefont {Lei}, \citenamefont {Monifi},
  \citenamefont {Gianfreda}, \citenamefont {Long}, \citenamefont {Fan},
  \citenamefont {Nori}, \citenamefont {Bender},\ and\ \citenamefont
  {Yang}}]{peng2014}%
  \BibitemOpen
  \bibfield  {author} {\bibinfo {author} {\bibfnamefont {B.}~\bibnamefont
  {Peng}}, \bibinfo {author} {\bibfnamefont {c.~K.}\ \bibnamefont
  {\"{O}zdemir}}, \bibinfo {author} {\bibfnamefont {F.}~\bibnamefont {Lei}},
  \bibinfo {author} {\bibfnamefont {F.}~\bibnamefont {Monifi}}, \bibinfo
  {author} {\bibfnamefont {M.}~\bibnamefont {Gianfreda}}, \bibinfo {author}
  {\bibfnamefont {G.~L.}\ \bibnamefont {Long}}, \bibinfo {author}
  {\bibfnamefont {S.}~\bibnamefont {Fan}}, \bibinfo {author} {\bibfnamefont
  {F.}~\bibnamefont {Nori}}, \bibinfo {author} {\bibfnamefont {C.~M.}\
  \bibnamefont {Bender}}, \ and\ \bibinfo {author} {\bibfnamefont
  {L.}~\bibnamefont {Yang}},\ }\href {\doibase 10.1038/nphys2927} {\bibfield
  {journal} {\bibinfo  {journal} {Nature Physics}\ }\textbf {\bibinfo {volume}
  {10}},\ \bibinfo {pages} {394} (\bibinfo {year} {2014})}\BibitemShut
  {NoStop}%
\bibitem [{\citenamefont {Regensburger}\ \emph {et~al.}(2012)\citenamefont
  {Regensburger}, \citenamefont {Bersch}, \citenamefont {Miri}, \citenamefont
  {Onishchukov}, \citenamefont {Christodoulides},\ and\ \citenamefont
  {Peschel}}]{regensburger2012}%
  \BibitemOpen
  \bibfield  {author} {\bibinfo {author} {\bibfnamefont {A.}~\bibnamefont
  {Regensburger}}, \bibinfo {author} {\bibfnamefont {C.}~\bibnamefont
  {Bersch}}, \bibinfo {author} {\bibfnamefont {M.}~\bibnamefont {Miri}},
  \bibinfo {author} {\bibfnamefont {G.}~\bibnamefont {Onishchukov}}, \bibinfo
  {author} {\bibfnamefont {D.~N.}\ \bibnamefont {Christodoulides}}, \ and\
  \bibinfo {author} {\bibfnamefont {U.}~\bibnamefont {Peschel}},\ }\href
  {\doibase 10.1038/nature11298} {\bibfield  {journal} {\bibinfo  {journal}
  {Nature}\ }\textbf {\bibinfo {volume} {488}},\ \bibinfo {pages} {167}
  (\bibinfo {year} {2012})}\BibitemShut {NoStop}%
\bibitem [{\citenamefont {R\"{u}ter}\ \emph {et~al.}(2010)\citenamefont
  {R\"{u}ter}, \citenamefont {Makris}, \citenamefont {{El-Ganainy}},
  \citenamefont {Christodoulides}, \citenamefont {Segev},\ and\ \citenamefont
  {Kip}}]{rueter2010}%
  \BibitemOpen
  \bibfield  {author} {\bibinfo {author} {\bibfnamefont {C.~E.}\ \bibnamefont
  {R\"{u}ter}}, \bibinfo {author} {\bibfnamefont {K.~G.}\ \bibnamefont
  {Makris}}, \bibinfo {author} {\bibfnamefont {R.}~\bibnamefont
  {{El-Ganainy}}}, \bibinfo {author} {\bibfnamefont {D.~N.}\ \bibnamefont
  {Christodoulides}}, \bibinfo {author} {\bibfnamefont {M.}~\bibnamefont
  {Segev}}, \ and\ \bibinfo {author} {\bibfnamefont {D.}~\bibnamefont {Kip}},\
  }\href {\doibase 10.1038/nphys1515} {\bibfield  {journal} {\bibinfo
  {journal} {Nature Physics}\ }\textbf {\bibinfo {volume} {6}},\ \bibinfo
  {pages} {192} (\bibinfo {year} {2010})}\BibitemShut {NoStop}%
\bibitem [{\citenamefont {Dyson}(1956)}]{dyson1956}%
  \BibitemOpen
  \bibfield  {author} {\bibinfo {author} {\bibfnamefont {F.~J.}\ \bibnamefont
  {Dyson}},\ }\href {\doibase 10.1103/PhysRev.102.1217} {\bibfield  {journal}
  {\bibinfo  {journal} {Physical Review}\ }\textbf {\bibinfo {volume} {102}},\
  \bibinfo {pages} {1217} (\bibinfo {year} {1956})}\BibitemShut {NoStop}%
\bibitem [{\citenamefont {Faisal}\ and\ \citenamefont
  {Moloney}(1981)}]{faisal1981}%
  \BibitemOpen
  \bibfield  {author} {\bibinfo {author} {\bibfnamefont {F.~H.~M.}\
  \bibnamefont {Faisal}}\ and\ \bibinfo {author} {\bibfnamefont {J.~V.}\
  \bibnamefont {Moloney}},\ }\href {\doibase 10.1088/0022-3700/14/19/012}
  {\bibfield  {journal} {\bibinfo  {journal} {Journal of Physics B: Atomic and
  Molecular Physics}\ }\textbf {\bibinfo {volume} {14}},\ \bibinfo {pages}
  {3603} (\bibinfo {year} {1981})}\BibitemShut {NoStop}%
\bibitem [{\citenamefont {{Jones-Smith}}(2013)}]{jones-smith2013}%
  \BibitemOpen
  \bibfield  {author} {\bibinfo {author} {\bibfnamefont {K.}~\bibnamefont
  {{Jones-Smith}}},\ }\href {\doibase 10.1098/rsta.2012.0056} {\bibfield
  {journal} {\bibinfo  {journal} {Philosophical Transactions of the Royal
  Society of London A: Mathematical, Physical and Engineering Sciences}\
  }\textbf {\bibinfo {volume} {371}},\ \bibinfo {pages} {20120056} (\bibinfo
  {year} {2013})}\BibitemShut {NoStop}%
\bibitem [{\citenamefont {Mochizuki}\ \emph {et~al.}(2016)\citenamefont
  {Mochizuki}, \citenamefont {Kim},\ and\ \citenamefont
  {Obuse}}]{mochizuki2016}%
  \BibitemOpen
  \bibfield  {author} {\bibinfo {author} {\bibfnamefont {K.}~\bibnamefont
  {Mochizuki}}, \bibinfo {author} {\bibfnamefont {D.}~\bibnamefont {Kim}}, \
  and\ \bibinfo {author} {\bibfnamefont {H.}~\bibnamefont {Obuse}},\ }\href
  {\doibase 10.1103/PhysRevA.93.062116} {\bibfield  {journal} {\bibinfo
  {journal} {Physical Review A}\ }\textbf {\bibinfo {volume} {93}},\ \bibinfo
  {pages} {062116} (\bibinfo {year} {2016})}\BibitemShut {NoStop}%
\bibitem [{\citenamefont {Salimi}\ and\ \citenamefont
  {Sorouri}(2010)}]{salimi2010}%
  \BibitemOpen
  \bibfield  {author} {\bibinfo {author} {\bibfnamefont {S.}~\bibnamefont
  {Salimi}}\ and\ \bibinfo {author} {\bibfnamefont {A.}~\bibnamefont
  {Sorouri}},\ }\href {\doibase 10.1088/1751-8113/43/27/275304} {\bibfield
  {journal} {\bibinfo  {journal} {Journal of Physics A: Mathematical and
  Theoretical}\ }\textbf {\bibinfo {volume} {43}},\ \bibinfo {pages} {275304}
  (\bibinfo {year} {2010})}\BibitemShut {NoStop}%
\bibitem [{\citenamefont {Izaac}\ \emph {et~al.}(2017)\citenamefont {Izaac},
  \citenamefont {Zhan}, \citenamefont {Bian}, \citenamefont {Wang},
  \citenamefont {Li}, \citenamefont {Wang},\ and\ \citenamefont
  {Xue}}]{izaac2017}%
  \BibitemOpen
  \bibfield  {author} {\bibinfo {author} {\bibfnamefont {J.~A.}\ \bibnamefont
  {Izaac}}, \bibinfo {author} {\bibfnamefont {X.}~\bibnamefont {Zhan}},
  \bibinfo {author} {\bibfnamefont {Z.}~\bibnamefont {Bian}}, \bibinfo {author}
  {\bibfnamefont {K.}~\bibnamefont {Wang}}, \bibinfo {author} {\bibfnamefont
  {J.}~\bibnamefont {Li}}, \bibinfo {author} {\bibfnamefont {J.~B.}\
  \bibnamefont {Wang}}, \ and\ \bibinfo {author} {\bibfnamefont
  {P.}~\bibnamefont {Xue}},\ }\href {\doibase 10.1103/PhysRevA.95.032318}
  {\bibfield  {journal} {\bibinfo  {journal} {Physical Review A}\ }\textbf
  {\bibinfo {volume} {95}},\ \bibinfo {pages} {032318} (\bibinfo {year}
  {2017})}\BibitemShut {NoStop}%
\bibitem [{\citenamefont {Brin}\ and\ \citenamefont {Page}(1998)}]{brin1998}%
  \BibitemOpen
  \bibfield  {author} {\bibinfo {author} {\bibfnamefont {S.}~\bibnamefont
  {Brin}}\ and\ \bibinfo {author} {\bibfnamefont {L.}~\bibnamefont {Page}},\
  }\href {\doibase 10.1016/S0169-7552(98)00110-X} {\bibfield  {journal}
  {\bibinfo  {journal} {Computer Networks and {ISDN} Systems}\ }\textbf
  {\bibinfo {volume} {30}},\ \bibinfo {pages} {107} (\bibinfo {year}
  {1998})}\BibitemShut {NoStop}%
\bibitem [{\citenamefont {Fern\'{a}ndez}\ \emph {et~al.}(1999)\citenamefont
  {Fern\'{a}ndez}, \citenamefont {Guardiola}, \citenamefont {Ros},\ and\
  \citenamefont {Znojil}}]{fernandez1999}%
  \BibitemOpen
  \bibfield  {author} {\bibinfo {author} {\bibfnamefont {F.~M.}\ \bibnamefont
  {Fern\'{a}ndez}}, \bibinfo {author} {\bibfnamefont {R.}~\bibnamefont
  {Guardiola}}, \bibinfo {author} {\bibfnamefont {J.}~\bibnamefont {Ros}}, \
  and\ \bibinfo {author} {\bibfnamefont {M.}~\bibnamefont {Znojil}},\ }\href
  {\doibase 10.1088/0305-4470/32/17/303} {\bibfield  {journal} {\bibinfo
  {journal} {Journal of Physics A: Mathematical and General}\ }\textbf
  {\bibinfo {volume} {32}},\ \bibinfo {pages} {3105} (\bibinfo {year}
  {1999})}\BibitemShut {NoStop}%
\bibitem [{\citenamefont {Weigert}(2003)}]{weigert2003}%
  \BibitemOpen
  \bibfield  {author} {\bibinfo {author} {\bibfnamefont {S.}~\bibnamefont
  {Weigert}},\ }\href {\doibase 10.1088/1464-4266/5/3/380} {\bibfield
  {journal} {\bibinfo  {journal} {Journal of Optics B: Quantum and
  Semiclassical Optics}\ }\textbf {\bibinfo {volume} {5}},\ \bibinfo {pages}
  {S416} (\bibinfo {year} {2003})}\BibitemShut {NoStop}%
\bibitem [{\citenamefont {de~Morisson~Faria}\ and\ \citenamefont
  {Fring}(2006)}]{demorissonfaria2006}%
  \BibitemOpen
  \bibfield  {author} {\bibinfo {author} {\bibfnamefont {C.~F.}\ \bibnamefont
  {de~Morisson~Faria}}\ and\ \bibinfo {author} {\bibfnamefont {A.}~\bibnamefont
  {Fring}},\ }\href {http://iopscience.iop.org/0305-4470/39/29/018} {\bibfield
  {journal} {\bibinfo  {journal} {Journal of Physics A: Mathematical and
  General}\ }\textbf {\bibinfo {volume} {39}},\ \bibinfo {pages} {9269}
  (\bibinfo {year} {2006})}\BibitemShut {NoStop}%
\bibitem [{\citenamefont
  {Mostafazadeh}(2002{\natexlab{c}})}]{mostafazadeh2002-b}%
  \BibitemOpen
  \bibfield  {author} {\bibinfo {author} {\bibfnamefont {A.}~\bibnamefont
  {Mostafazadeh}},\ }\href {\doibase 10.1063/1.1461427} {\bibfield  {journal}
  {\bibinfo  {journal} {Journal of Mathematical Physics}\ }\textbf {\bibinfo
  {volume} {43}},\ \bibinfo {pages} {2814} (\bibinfo {year}
  {2002}{\natexlab{c}})}\BibitemShut {NoStop}%
\bibitem [{\citenamefont {Solombrino}(2002)}]{solombrino2002}%
  \BibitemOpen
  \bibfield  {author} {\bibinfo {author} {\bibfnamefont {L.}~\bibnamefont
  {Solombrino}},\ }\href {\doibase 10.1063/1.1504485} {\bibfield  {journal}
  {\bibinfo  {journal} {Journal of Mathematical Physics}\ }\textbf {\bibinfo
  {volume} {43}},\ \bibinfo {pages} {5439} (\bibinfo {year}
  {2002})}\BibitemShut {NoStop}%
\bibitem [{\citenamefont {Scolarici}\ and\ \citenamefont
  {Solombrino}(2003)}]{scolarici2003}%
  \BibitemOpen
  \bibfield  {author} {\bibinfo {author} {\bibfnamefont {G.}~\bibnamefont
  {Scolarici}}\ and\ \bibinfo {author} {\bibfnamefont {L.}~\bibnamefont
  {Solombrino}},\ }\href {\doibase 10.1063/1.1609031} {\bibfield  {journal}
  {\bibinfo  {journal} {Journal of Mathematical Physics}\ }\textbf {\bibinfo
  {volume} {44}},\ \bibinfo {pages} {4450} (\bibinfo {year}
  {2003})}\BibitemShut {NoStop}%
\bibitem [{\citenamefont {Bender}\ and\ \citenamefont
  {Mannheim}(2010)}]{bender2010}%
  \BibitemOpen
  \bibfield  {author} {\bibinfo {author} {\bibfnamefont {C.~M.}\ \bibnamefont
  {Bender}}\ and\ \bibinfo {author} {\bibfnamefont {P.~D.}\ \bibnamefont
  {Mannheim}},\ }\href {\doibase 10.1016/j.physleta.2010.02.032} {\bibfield
  {journal} {\bibinfo  {journal} {Physics Letters A}\ }\textbf {\bibinfo
  {volume} {374}},\ \bibinfo {pages} {1616} (\bibinfo {year}
  {2010})}\BibitemShut {NoStop}%
\bibitem [{\citenamefont {Anthony}\ and\ \citenamefont
  {Harvey}(2012)}]{anthony2012}%
  \BibitemOpen
  \bibfield  {author} {\bibinfo {author} {\bibfnamefont {M.}~\bibnamefont
  {Anthony}}\ and\ \bibinfo {author} {\bibfnamefont {M.}~\bibnamefont
  {Harvey}},\ }\href@noop {} {\emph {\bibinfo {title} {Linear algebra: concepts
  and methods}}}\ (\bibinfo  {publisher} {Cambridge University Press},\
  \bibinfo {address} {Cambridge},\ \bibinfo {year} {2012})\BibitemShut
  {NoStop}%
\bibitem [{\citenamefont {Gao}\ \emph {et~al.}(2012)\citenamefont {Gao},
  \citenamefont {Buldyrev}, \citenamefont {Stanley},\ and\ \citenamefont
  {Havlin}}]{gao2012}%
  \BibitemOpen
  \bibfield  {author} {\bibinfo {author} {\bibfnamefont {J.}~\bibnamefont
  {Gao}}, \bibinfo {author} {\bibfnamefont {S.~V.}\ \bibnamefont {Buldyrev}},
  \bibinfo {author} {\bibfnamefont {H.~E.}\ \bibnamefont {Stanley}}, \ and\
  \bibinfo {author} {\bibfnamefont {S.}~\bibnamefont {Havlin}},\ }\href
  {\doibase 10.1038/nphys2180} {\bibfield  {journal} {\bibinfo  {journal}
  {Nature Physics}\ }\textbf {\bibinfo {volume} {8}},\ \bibinfo {pages} {40}
  (\bibinfo {year} {2012})}\BibitemShut {NoStop}%
\bibitem [{\citenamefont {Rosato}\ \emph {et~al.}(2008)\citenamefont {Rosato},
  \citenamefont {Issacharoff}, \citenamefont {Tiriticco}, \citenamefont
  {Meloni}, \citenamefont {Porcellinis},\ and\ \citenamefont
  {Setola}}]{rosato2008}%
  \BibitemOpen
  \bibfield  {author} {\bibinfo {author} {\bibfnamefont {V.}~\bibnamefont
  {Rosato}}, \bibinfo {author} {\bibfnamefont {L.}~\bibnamefont {Issacharoff}},
  \bibinfo {author} {\bibfnamefont {F.}~\bibnamefont {Tiriticco}}, \bibinfo
  {author} {\bibfnamefont {S.}~\bibnamefont {Meloni}}, \bibinfo {author}
  {\bibfnamefont {S.~D.}\ \bibnamefont {Porcellinis}}, \ and\ \bibinfo {author}
  {\bibfnamefont {R.}~\bibnamefont {Setola}},\ }\href {\doibase
  10.1504/IJCIS.2008.016092} {\bibfield  {journal} {\bibinfo  {journal}
  {International Journal of Critical Infrastructures}\ }\textbf {\bibinfo
  {volume} {4}},\ \bibinfo {pages} {63} (\bibinfo {year} {2008})}\BibitemShut
  {NoStop}%
\bibitem [{\citenamefont {Nehaniv}\ \emph {et~al.}(2014)\citenamefont
  {Nehaniv}, \citenamefont {Dini}, \citenamefont {Van~Mieghem},\ and\
  \citenamefont {Van~Mieghem}}]{nehaniv2014}%
  \BibitemOpen
  \bibfield  {author} {\bibinfo {author} {\bibfnamefont {C.}~\bibnamefont
  {Nehaniv}}, \bibinfo {author} {\bibfnamefont {P.}~\bibnamefont {Dini}},
  \bibinfo {author} {\bibfnamefont {P.}~\bibnamefont {Van~Mieghem}}, \ and\
  \bibinfo {author} {\bibfnamefont {P.}~\bibnamefont {Van~Mieghem}},\
  }\href@noop {} {\emph {\bibinfo {title} {Construction of Interdependent
  Networks and Efficiency Improvements in their Stability Calculation}}},\
  \bibinfo {type} {Tech. Rep.}\ \bibinfo {number} {Technical {FP7-288021}}\
  (\bibinfo  {institution} {Information and Communication Technologies},\
  \bibinfo {year} {2014})\BibitemShut {NoStop}%
\bibitem [{\citenamefont {Knoke}\ and\ \citenamefont {Burt}(1983)}]{knoke1983}%
  \BibitemOpen
  \bibfield  {author} {\bibinfo {author} {\bibfnamefont {D.}~\bibnamefont
  {Knoke}}\ and\ \bibinfo {author} {\bibfnamefont {R.~S.}\ \bibnamefont
  {Burt}},\ }\href@noop {} {\bibfield  {journal} {\bibinfo  {journal} {Applied
  Network Analysis}\ ,\ \bibinfo {pages} {195{\textendash}222}} (\bibinfo
  {year} {1983})}\BibitemShut {NoStop}%
\bibitem [{\citenamefont {Brass}(1984)}]{brass1984}%
  \BibitemOpen
  \bibfield  {author} {\bibinfo {author} {\bibfnamefont {D.~J.}\ \bibnamefont
  {Brass}},\ }\href {\doibase 10.2307/2392937} {\bibfield  {journal} {\bibinfo
  {journal} {Administrative Science Quarterly}\ }\textbf {\bibinfo {volume}
  {29}},\ \bibinfo {pages} {518} (\bibinfo {year} {1984})}\BibitemShut
  {NoStop}%
\bibitem [{\citenamefont {Ibarra}\ and\ \citenamefont
  {Andrews}(1993)}]{ibarra1993}%
  \BibitemOpen
  \bibfield  {author} {\bibinfo {author} {\bibfnamefont {H.}~\bibnamefont
  {Ibarra}}\ and\ \bibinfo {author} {\bibfnamefont {S.~B.}\ \bibnamefont
  {Andrews}},\ }\href {\doibase 10.2307/2393414} {\bibfield  {journal}
  {\bibinfo  {journal} {Administrative Science Quarterly}\ }\textbf {\bibinfo
  {volume} {38}},\ \bibinfo {pages} {277} (\bibinfo {year} {1993})}\BibitemShut
  {NoStop}%
\bibitem [{\citenamefont {Ibarra}(1993)}]{ibarra1993-a}%
  \BibitemOpen
  \bibfield  {author} {\bibinfo {author} {\bibfnamefont {H.}~\bibnamefont
  {Ibarra}},\ }\href {\doibase 10.2307/256589} {\bibfield  {journal} {\bibinfo
  {journal} {Academy of Management Journal}\ }\textbf {\bibinfo {volume}
  {36}},\ \bibinfo {pages} {471} (\bibinfo {year} {1993})}\BibitemShut
  {NoStop}%
\bibitem [{\citenamefont {Sade}(1989)}]{sade1989}%
  \BibitemOpen
  \bibfield  {author} {\bibinfo {author} {\bibfnamefont {D.~S.}\ \bibnamefont
  {Sade}},\ }\href {\doibase 10.1016/0378-8733(89)90006-3} {\bibfield
  {journal} {\bibinfo  {journal} {Social Networks}\ }\bibinfo {series} {Special
  Issue on {Non-Human} Primate Networks},\ \textbf {\bibinfo {volume} {11}},\
  \bibinfo {pages} {273} (\bibinfo {year} {1989})}\BibitemShut {NoStop}%
\bibitem [{\citenamefont {Amitai}\ \emph {et~al.}(2004)\citenamefont {Amitai},
  \citenamefont {Shemesh}, \citenamefont {Sitbon}, \citenamefont {Shklar},
  \citenamefont {Netanely}, \citenamefont {Venger},\ and\ \citenamefont
  {Pietrokovski}}]{amitai2004}%
  \BibitemOpen
  \bibfield  {author} {\bibinfo {author} {\bibfnamefont {G.}~\bibnamefont
  {Amitai}}, \bibinfo {author} {\bibfnamefont {A.}~\bibnamefont {Shemesh}},
  \bibinfo {author} {\bibfnamefont {E.}~\bibnamefont {Sitbon}}, \bibinfo
  {author} {\bibfnamefont {M.}~\bibnamefont {Shklar}}, \bibinfo {author}
  {\bibfnamefont {D.}~\bibnamefont {Netanely}}, \bibinfo {author}
  {\bibfnamefont {I.}~\bibnamefont {Venger}}, \ and\ \bibinfo {author}
  {\bibfnamefont {S.}~\bibnamefont {Pietrokovski}},\ }\href {\doibase
  10.1016/j.jmb.2004.10.055} {\bibfield  {journal} {\bibinfo  {journal}
  {Journal of Molecular Biology}\ }\textbf {\bibinfo {volume} {344}},\ \bibinfo
  {pages} {1135} (\bibinfo {year} {2004})}\BibitemShut {NoStop}%
\bibitem [{\citenamefont {Lawyer}(2015)}]{lawyer2015}%
  \BibitemOpen
  \bibfield  {author} {\bibinfo {author} {\bibfnamefont {G.}~\bibnamefont
  {Lawyer}},\ }\href {\doibase 10.1038/srep08665} {\bibfield  {journal}
  {\bibinfo  {journal} {Scientific Reports}\ }\textbf {\bibinfo {volume} {5}},\
  \bibinfo {pages} {8665} (\bibinfo {year} {2015})}\BibitemShut {NoStop}%
\bibitem [{\citenamefont {Liu}\ \emph {et~al.}(2013)\citenamefont {Liu},
  \citenamefont {Ren},\ and\ \citenamefont {Guo}}]{liu2013}%
  \BibitemOpen
  \bibfield  {author} {\bibinfo {author} {\bibfnamefont {J.}~\bibnamefont
  {Liu}}, \bibinfo {author} {\bibfnamefont {Z.}~\bibnamefont {Ren}}, \ and\
  \bibinfo {author} {\bibfnamefont {Q.}~\bibnamefont {Guo}},\ }\href {\doibase
  10.1016/j.physa.2013.04.037} {\bibfield  {journal} {\bibinfo  {journal}
  {Physica A: Statistical Mechanics and its Applications}\ }\textbf {\bibinfo
  {volume} {392}},\ \bibinfo {pages} {4154} (\bibinfo {year}
  {2013})}\BibitemShut {NoStop}%
\bibitem [{\citenamefont {Borgatti}(2005)}]{borgatti2005}%
  \BibitemOpen
  \bibfield  {author} {\bibinfo {author} {\bibfnamefont {S.~P.}\ \bibnamefont
  {Borgatti}},\ }\href {\doibase 10.1016/j.socnet.2004.11.008} {\bibfield
  {journal} {\bibinfo  {journal} {Social Networks}\ }\textbf {\bibinfo {volume}
  {27}},\ \bibinfo {pages} {55} (\bibinfo {year} {2005})}\BibitemShut {NoStop}%
\bibitem [{\citenamefont {Borgatti}\ and\ \citenamefont
  {Everett}(2006)}]{borgatti2006}%
  \BibitemOpen
  \bibfield  {author} {\bibinfo {author} {\bibfnamefont {S.~P.}\ \bibnamefont
  {Borgatti}}\ and\ \bibinfo {author} {\bibfnamefont {M.~G.}\ \bibnamefont
  {Everett}},\ }\href {\doibase 10.1016/j.socnet.2005.11.005} {\bibfield
  {journal} {\bibinfo  {journal} {Social Networks}\ }\textbf {\bibinfo {volume}
  {28}},\ \bibinfo {pages} {466} (\bibinfo {year} {2006})}\BibitemShut
  {NoStop}%
\bibitem [{\citenamefont {{Pastor-Satorras}}\ and\ \citenamefont
  {Castellano}(2016)}]{pastor-satorras2016}%
  \BibitemOpen
  \bibfield  {author} {\bibinfo {author} {\bibfnamefont {R.}~\bibnamefont
  {{Pastor-Satorras}}}\ and\ \bibinfo {author} {\bibfnamefont {C.}~\bibnamefont
  {Castellano}},\ }\href {\doibase 10.1038/srep18847} {\bibfield  {journal}
  {\bibinfo  {journal} {Scientific Reports}\ }\textbf {\bibinfo {volume} {6}},\
  \bibinfo {pages} {18847} (\bibinfo {year} {2016})}\BibitemShut {NoStop}%
\bibitem [{\citenamefont {Bonacich}(1987)}]{bonacich1987}%
  \BibitemOpen
  \bibfield  {author} {\bibinfo {author} {\bibfnamefont {P.}~\bibnamefont
  {Bonacich}},\ }\href {\doibase 10.1086/228631} {\bibfield  {journal}
  {\bibinfo  {journal} {American Journal of Sociology}\ }\textbf {\bibinfo
  {volume} {92}},\ \bibinfo {pages} {1170} (\bibinfo {year}
  {1987})}\BibitemShut {NoStop}%
\bibitem [{\citenamefont {Perra}\ and\ \citenamefont
  {Fortunato}(2008)}]{perra2008}%
  \BibitemOpen
  \bibfield  {author} {\bibinfo {author} {\bibfnamefont {N.}~\bibnamefont
  {Perra}}\ and\ \bibinfo {author} {\bibfnamefont {S.}~\bibnamefont
  {Fortunato}},\ }\href {\doibase 10.1103/PhysRevE.78.036107} {\bibfield
  {journal} {\bibinfo  {journal} {Physical Review E}\ }\textbf {\bibinfo
  {volume} {78}},\ \bibinfo {pages} {036107} (\bibinfo {year}
  {2008})}\BibitemShut {NoStop}%
\bibitem [{\citenamefont {Martin}\ \emph {et~al.}(2014)\citenamefont {Martin},
  \citenamefont {Zhang},\ and\ \citenamefont {Newman}}]{martin2014}%
  \BibitemOpen
  \bibfield  {author} {\bibinfo {author} {\bibfnamefont {T.}~\bibnamefont
  {Martin}}, \bibinfo {author} {\bibfnamefont {X.}~\bibnamefont {Zhang}}, \
  and\ \bibinfo {author} {\bibfnamefont {M.~E.~J.}\ \bibnamefont {Newman}},\
  }\href {\doibase 10.1103/PhysRevE.90.052808} {\bibfield  {journal} {\bibinfo
  {journal} {Physical Review E}\ }\textbf {\bibinfo {volume} {90}},\ \bibinfo
  {pages} {052808} (\bibinfo {year} {2014})}\BibitemShut {NoStop}%
\bibitem [{\citenamefont {Kendall}(1938)}]{kendall1938}%
  \BibitemOpen
  \bibfield  {author} {\bibinfo {author} {\bibfnamefont {M.~G.}\ \bibnamefont
  {Kendall}},\ }\href {\doibase 10.2307/2332226} {\bibfield  {journal}
  {\bibinfo  {journal} {Biometrika}\ }\textbf {\bibinfo {volume} {30}},\
  \bibinfo {pages} {81} (\bibinfo {year} {1938})}\BibitemShut {NoStop}%
\bibitem [{\citenamefont {Kim}\ and\ \citenamefont {Anderson}(2012)}]{kim2012}%
  \BibitemOpen
  \bibfield  {author} {\bibinfo {author} {\bibfnamefont {H.}~\bibnamefont
  {Kim}}\ and\ \bibinfo {author} {\bibfnamefont {R.}~\bibnamefont {Anderson}},\
  }\href {\doibase 10.1103/PhysRevE.85.026107} {\bibfield  {journal} {\bibinfo
  {journal} {Physical Review E}\ }\textbf {\bibinfo {volume} {85}},\ \bibinfo
  {pages} {026107} (\bibinfo {year} {2012})}\BibitemShut {NoStop}%
\bibitem [{\citenamefont {Chen}\ \emph {et~al.}(2012)\citenamefont {Chen},
  \citenamefont {L\"{u}}, \citenamefont {Shang}, \citenamefont {Zhang},\ and\
  \citenamefont {Zhou}}]{chen2012}%
  \BibitemOpen
  \bibfield  {author} {\bibinfo {author} {\bibfnamefont {D.}~\bibnamefont
  {Chen}}, \bibinfo {author} {\bibfnamefont {L.}~\bibnamefont {L\"{u}}},
  \bibinfo {author} {\bibfnamefont {M.}~\bibnamefont {Shang}}, \bibinfo
  {author} {\bibfnamefont {Y.}~\bibnamefont {Zhang}}, \ and\ \bibinfo {author}
  {\bibfnamefont {T.}~\bibnamefont {Zhou}},\ }\href {\doibase
  10.1016/j.physa.2011.09.017} {\bibfield  {journal} {\bibinfo  {journal}
  {Physica A: Statistical Mechanics and its Applications}\ }\textbf {\bibinfo
  {volume} {391}},\ \bibinfo {pages} {1777} (\bibinfo {year}
  {2012})}\BibitemShut {NoStop}%
\bibitem [{\citenamefont {Lulli}\ \emph {et~al.}(2015)\citenamefont {Lulli},
  \citenamefont {Ricci}, \citenamefont {Carlini},\ and\ \citenamefont
  {Dazzi}}]{lulli2015}%
  \BibitemOpen
  \bibfield  {author} {\bibinfo {author} {\bibfnamefont {A.}~\bibnamefont
  {Lulli}}, \bibinfo {author} {\bibfnamefont {L.}~\bibnamefont {Ricci}},
  \bibinfo {author} {\bibfnamefont {E.}~\bibnamefont {Carlini}}, \ and\
  \bibinfo {author} {\bibfnamefont {P.}~\bibnamefont {Dazzi}},\ }in\ \href
  {\doibase 10.1109/SASO.2015.15} {\emph {\bibinfo {booktitle} {2015 {IEEE} 9th
  International Conference on {Self-Adaptive} and {Self-Organizing} Systems}}}\
  (\bibinfo {year} {2015})\ pp.\ \bibinfo {pages} {71--80}\BibitemShut
  {NoStop}%
\bibitem [{\citenamefont {Du}\ \emph {et~al.}(2015)\citenamefont {Du},
  \citenamefont {Gao}, \citenamefont {Chen}, \citenamefont {Hu}, \citenamefont
  {Sadiq},\ and\ \citenamefont {Deng}}]{du2015}%
  \BibitemOpen
  \bibfield  {author} {\bibinfo {author} {\bibfnamefont {Y.}~\bibnamefont
  {Du}}, \bibinfo {author} {\bibfnamefont {C.}~\bibnamefont {Gao}}, \bibinfo
  {author} {\bibfnamefont {X.}~\bibnamefont {Chen}}, \bibinfo {author}
  {\bibfnamefont {Y.}~\bibnamefont {Hu}}, \bibinfo {author} {\bibfnamefont
  {R.}~\bibnamefont {Sadiq}}, \ and\ \bibinfo {author} {\bibfnamefont
  {Y.}~\bibnamefont {Deng}},\ }\href {\doibase 10.1063/1.4916215} {\bibfield
  {journal} {\bibinfo  {journal} {Chaos: An Interdisciplinary Journal of
  Nonlinear Science}\ }\textbf {\bibinfo {volume} {25}},\ \bibinfo {pages}
  {033112} (\bibinfo {year} {2015})}\BibitemShut {NoStop}%
\bibitem [{\citenamefont {Knight}(1966)}]{knight1966}%
  \BibitemOpen
  \bibfield  {author} {\bibinfo {author} {\bibfnamefont {W.~R.}\ \bibnamefont
  {Knight}},\ }\href {\doibase 10.2307/2282833} {\bibfield  {journal} {\bibinfo
   {journal} {Journal of the American Statistical Association}\ }\textbf
  {\bibinfo {volume} {61}},\ \bibinfo {pages} {436} (\bibinfo {year}
  {1966})}\BibitemShut {NoStop}%
\bibitem [{\citenamefont {Kendall}(1945)}]{kendall1945}%
  \BibitemOpen
  \bibfield  {author} {\bibinfo {author} {\bibfnamefont {M.~G.}\ \bibnamefont
  {Kendall}},\ }\href
  {http://www.jstor.org.ezproxy.library.uwa.edu.au/stable/2332303} {\bibfield
  {journal} {\bibinfo  {journal} {Biometrika}\ }\textbf {\bibinfo {volume}
  {33}},\ \bibinfo {pages} {239} (\bibinfo {year} {1945})}\BibitemShut
  {NoStop}%
\bibitem [{\citenamefont {Yilmaz}\ \emph {et~al.}(2008)\citenamefont {Yilmaz},
  \citenamefont {Aslam},\ and\ \citenamefont {Robertson}}]{yilmaz2008}%
  \BibitemOpen
  \bibfield  {author} {\bibinfo {author} {\bibfnamefont {E.}~\bibnamefont
  {Yilmaz}}, \bibinfo {author} {\bibfnamefont {J.~A.}\ \bibnamefont {Aslam}}, \
  and\ \bibinfo {author} {\bibfnamefont {S.}~\bibnamefont {Robertson}},\ }in\
  \href {\doibase 10.1145/1390334.1390435} {\emph {\bibinfo {booktitle}
  {Proceedings of the 31st Annual International {ACM} {SIGIR} Conference on
  Research and Development in Information Retrieval}}},\ \bibinfo {series and
  number} {{SIGIR} '08}\ (\bibinfo  {publisher} {{ACM}},\ \bibinfo {address}
  {New York, {NY}, {USA}},\ \bibinfo {year} {2008})\ p.\ \bibinfo {pages}
  {587{\textendash}594}\BibitemShut {NoStop}%
\bibitem [{\citenamefont {Vigna}(2015)}]{vigna2015}%
  \BibitemOpen
  \bibfield  {author} {\bibinfo {author} {\bibfnamefont {S.}~\bibnamefont
  {Vigna}},\ }in\ \href {\doibase 10.1145/2736277.2741088} {\emph {\bibinfo
  {booktitle} {Proceedings of the 24th International Conference on World Wide
  Web}}},\ \bibinfo {series and number} {{WWW} '15}\ (\bibinfo  {publisher}
  {International World Wide Web Conferences Steering Committee},\ \bibinfo
  {address} {Republic and Canton of Geneva, Switzerland},\ \bibinfo {year}
  {2015})\ p.\ \bibinfo {pages} {1166{\textendash}1176}\BibitemShut {NoStop}%
\bibitem [{\citenamefont {Erd\"{o}s}\ and\ \citenamefont
  {R\'{e}nyi}(1959)}]{erdoes1959}%
  \BibitemOpen
  \bibfield  {author} {\bibinfo {author} {\bibfnamefont {P.}~\bibnamefont
  {Erd\"{o}s}}\ and\ \bibinfo {author} {\bibfnamefont {A.}~\bibnamefont
  {R\'{e}nyi}},\ }\href {http://www.renyi.hu/~p_erdos/Erdos.html#1959-11}
  {\bibfield  {journal} {\bibinfo  {journal} {Publicationes Mathematicae
  {(Debrecen)}}\ }\textbf {\bibinfo {volume} {6}},\ \bibinfo {pages} {290}
  (\bibinfo {year} {1959})}\BibitemShut {NoStop}%
\bibitem [{\citenamefont {Erd\H{o}s}\ and\ \citenamefont
  {R\'{e}nyi}(1960)}]{erds1960}%
  \BibitemOpen
  \bibfield  {author} {\bibinfo {author} {\bibfnamefont {P.}~\bibnamefont
  {Erd\H{o}s}}\ and\ \bibinfo {author} {\bibfnamefont {A.}~\bibnamefont
  {R\'{e}nyi}},\ }in\ \href@noop {} {\emph {\bibinfo {booktitle} {Publication
  of the Mathematical Institute of the Hungarian Academy of Sciences}}}\
  (\bibinfo {year} {1960})\ p.\ \bibinfo {pages}
  {17{\textendash}61}\BibitemShut {NoStop}%
\bibitem [{\citenamefont {Hagberg}\ \emph {et~al.}(2008)\citenamefont
  {Hagberg}, \citenamefont {Schult},\ and\ \citenamefont
  {Swart}}]{hagberg2008}%
  \BibitemOpen
  \bibfield  {author} {\bibinfo {author} {\bibfnamefont {A.~A.}\ \bibnamefont
  {Hagberg}}, \bibinfo {author} {\bibfnamefont {D.~A.}\ \bibnamefont {Schult}},
  \ and\ \bibinfo {author} {\bibfnamefont {P.~J.}\ \bibnamefont {Swart}},\ }in\
  \href@noop {} {\emph {\bibinfo {booktitle} {Proceedings of the 7th Python in
  Science Conference}}},\ \bibinfo {editor} {edited by\ \bibinfo {editor}
  {\bibfnamefont {G.}~\bibnamefont {Varoquaux}}, \bibinfo {editor}
  {\bibfnamefont {T.}~\bibnamefont {Vaught}}, \ and\ \bibinfo {editor}
  {\bibfnamefont {J.}~\bibnamefont {Millman}}}\ (\bibinfo {address} {Pasadena,
  {CA} {USA}},\ \bibinfo {year} {2008})\ pp.\ \bibinfo {pages} {11 --
  15}\BibitemShut {NoStop}%
\bibitem [{\citenamefont {Barab\'{a}si}\ and\ \citenamefont
  {Albert}(1999)}]{barabasi1999}%
  \BibitemOpen
  \bibfield  {author} {\bibinfo {author} {\bibfnamefont {A.}~\bibnamefont
  {Barab\'{a}si}}\ and\ \bibinfo {author} {\bibfnamefont {R.}~\bibnamefont
  {Albert}},\ }\href {\doibase 10.1126/science.286.5439.509} {\bibfield
  {journal} {\bibinfo  {journal} {Science}\ }\textbf {\bibinfo {volume}
  {286}},\ \bibinfo {pages} {509} (\bibinfo {year} {1999})}\BibitemShut
  {NoStop}%
\bibitem [{\citenamefont {Barab\'{a}si}\ \emph {et~al.}(2000)\citenamefont
  {Barab\'{a}si}, \citenamefont {Albert},\ and\ \citenamefont
  {Jeong}}]{barabasi2000}%
  \BibitemOpen
  \bibfield  {author} {\bibinfo {author} {\bibfnamefont {A.}~\bibnamefont
  {Barab\'{a}si}}, \bibinfo {author} {\bibfnamefont {R.}~\bibnamefont
  {Albert}}, \ and\ \bibinfo {author} {\bibfnamefont {H.}~\bibnamefont
  {Jeong}},\ }\href {\doibase 10.1016/S0378-4371(00)00018-2} {\bibfield
  {journal} {\bibinfo  {journal} {Physica A: Statistical Mechanics and its
  Applications}\ }\textbf {\bibinfo {volume} {281}},\ \bibinfo {pages} {69}
  (\bibinfo {year} {2000})}\BibitemShut {NoStop}%
\bibitem [{\citenamefont {Albert}\ and\ \citenamefont
  {Barab\'{a}si}(2002)}]{albert2002}%
  \BibitemOpen
  \bibfield  {author} {\bibinfo {author} {\bibfnamefont {R.}~\bibnamefont
  {Albert}}\ and\ \bibinfo {author} {\bibfnamefont {A.}~\bibnamefont
  {Barab\'{a}si}},\ }\href {\doibase 10.1103/RevModPhys.74.47} {\bibfield
  {journal} {\bibinfo  {journal} {Reviews of Modern Physics}\ }\textbf
  {\bibinfo {volume} {74}},\ \bibinfo {pages} {47} (\bibinfo {year}
  {2002})}\BibitemShut {NoStop}%
\bibitem [{\citenamefont {Song}\ \emph {et~al.}(2005)\citenamefont {Song},
  \citenamefont {Havlin},\ and\ \citenamefont {Makse}}]{song2005}%
  \BibitemOpen
  \bibfield  {author} {\bibinfo {author} {\bibfnamefont {C.}~\bibnamefont
  {Song}}, \bibinfo {author} {\bibfnamefont {S.}~\bibnamefont {Havlin}}, \ and\
  \bibinfo {author} {\bibfnamefont {H.~A.}\ \bibnamefont {Makse}},\ }\href
  {\doibase 10.1038/nature03248} {\bibfield  {journal} {\bibinfo  {journal}
  {Nature}\ }\textbf {\bibinfo {volume} {433}},\ \bibinfo {pages} {392}
  (\bibinfo {year} {2005})}\BibitemShut {NoStop}%
\bibitem [{\citenamefont {Litvak}\ \emph {et~al.}(2007)\citenamefont {Litvak},
  \citenamefont {Scheinhardt},\ and\ \citenamefont {Volkovich}}]{litvak2007}%
  \BibitemOpen
  \bibfield  {author} {\bibinfo {author} {\bibfnamefont {N.}~\bibnamefont
  {Litvak}}, \bibinfo {author} {\bibfnamefont {W.~R.~W.}\ \bibnamefont
  {Scheinhardt}}, \ and\ \bibinfo {author} {\bibfnamefont {Y.}~\bibnamefont
  {Volkovich}},\ }\href {\doibase 10.1080/15427951.2007.10129293} {\bibfield
  {journal} {\bibinfo  {journal} {Internet Mathematics}\ }\textbf {\bibinfo
  {volume} {4}},\ \bibinfo {pages} {175} (\bibinfo {year} {2007})}\BibitemShut
  {NoStop}%
\bibitem [{\citenamefont {Fortunato}\ \emph {et~al.}(2008)\citenamefont
  {Fortunato}, \citenamefont {Bogu\~{n}\'{a}}, \citenamefont {Flammini},\ and\
  \citenamefont {Menczer}}]{aiello2008}%
  \BibitemOpen
  \bibfield  {author} {\bibinfo {author} {\bibfnamefont {S.}~\bibnamefont
  {Fortunato}}, \bibinfo {author} {\bibfnamefont {M.}~\bibnamefont
  {Bogu\~{n}\'{a}}}, \bibinfo {author} {\bibfnamefont {A.}~\bibnamefont
  {Flammini}}, \ and\ \bibinfo {author} {\bibfnamefont {F.}~\bibnamefont
  {Menczer}},\ }in\ \href
  {http://link.springer.com/10.1007/978-3-540-78808-9_6} {\emph {\bibinfo
  {booktitle} {Algorithms and Models for the {Web-Graph}}}},\ Vol.\ \bibinfo
  {volume} {4936},\ \bibinfo {editor} {edited by\ \bibinfo {editor}
  {\bibfnamefont {W.}~\bibnamefont {Aiello}}, \bibinfo {editor} {\bibfnamefont
  {A.}~\bibnamefont {Broder}}, \bibinfo {editor} {\bibfnamefont
  {J.}~\bibnamefont {Janssen}}, \ and\ \bibinfo {editor} {\bibfnamefont
  {E.}~\bibnamefont {Milios}}}\ (\bibinfo  {publisher} {Springer Berlin
  Heidelberg},\ \bibinfo {address} {Berlin, Heidelberg},\ \bibinfo {year}
  {2008})\ pp.\ \bibinfo {pages} {59--71}\BibitemShut {NoStop}%
\bibitem [{\citenamefont {Maiya}\ and\ \citenamefont
  {{Berger-Wolf}}(2010)}]{maiya2010}%
  \BibitemOpen
  \bibfield  {author} {\bibinfo {author} {\bibfnamefont {A.~S.}\ \bibnamefont
  {Maiya}}\ and\ \bibinfo {author} {\bibfnamefont {T.~Y.}\ \bibnamefont
  {{Berger-Wolf}}},\ }in\ \href
  {http://link.springer.com.ezproxy.library.uwa.edu.au/chapter/10.1007/978-3-642-13657-3_12}
  {\emph {\bibinfo {booktitle} {Advances in Knowledge Discovery and Data
  Mining}}}\ (\bibinfo  {publisher} {Springer, Berlin, Heidelberg},\ \bibinfo
  {year} {2010})\ pp.\ \bibinfo {pages} {91--98}\BibitemShut {NoStop}%
\bibitem [{\citenamefont {Agresti}\ and\ \citenamefont
  {Coull}(1998)}]{agresti1998}%
  \BibitemOpen
  \bibfield  {author} {\bibinfo {author} {\bibfnamefont {A.}~\bibnamefont
  {Agresti}}\ and\ \bibinfo {author} {\bibfnamefont {B.~A.}\ \bibnamefont
  {Coull}},\ }\href {\doibase 10.2307/2685469} {\bibfield  {journal} {\bibinfo
  {journal} {The American Statistician}\ }\textbf {\bibinfo {volume} {52}},\
  \bibinfo {pages} {119} (\bibinfo {year} {1998})}\BibitemShut {NoStop}%
\end{thebibliography}%

\end{document}